\documentclass[11pt]{article} 

\usepackage[margin=1in]{geometry}
\usepackage{amsmath,amssymb,amsthm}
\usepackage{bm}
\usepackage{mathrsfs}
\usepackage{graphicx}
\usepackage{booktabs}
\usepackage{microtype}
\usepackage[hidelinks]{hyperref}
\usepackage{natbib}
\usepackage[section]{placeins}
\DeclareMathOperator{\Var}{Var}

\newcommand{\R}{\mathbb{R}}
\newcommand{\cP}{\mathcal{P}}
\newcommand{\E}{\mathbb{E}}
\newcommand{\Law}{\mathcal{L}}

\newcommand{\textcite}{\citet}

\newtheorem{theorem}{Theorem}
\newtheorem{proposition}{Proposition}
\newtheorem{lemma}{Lemma}
\newtheorem{corollary}{Corollary}
\theoremstyle{definition}
\newtheorem{definition}{Definition}
\newtheorem{assumption}{Assumption}
\theoremstyle{remark}
\newtheorem{remark}{Remark}

\title{The Geometry of Heterogeneous Extremes:\\ Optimal Transport and Entropic Design%
\thanks{This paper grew out of many long conversations about how extremes arise in economics and why they matter. I am grateful to several colleagues in economics, and especially to a few in adjacent disciplines, who convinced me that the argument needed more structure to carry conviction. I also gratefully acknowledge \textit{Refine} (\url{https://www.refine.ink/}) for assistance in checking proofs in a preliminary draft. All remaining errors, extreme or otherwise, are mine.}}

\author{I.\ Sebastian Buhai\thanks{Email: \texttt{sebastian.buhai@sofi.su.se}. Full coordinates at \url{https://www.sebastianbuhai.com}.}\\
  \small SOFI at Stockholm University\\
  \small Instituto de Economia at UC Chile\\
  \small NIPE at University of Minho
}

\date{\footnotesize{Version of March 22, 2026. \href{https://www.sebastianbuhai.com/papers/publications/geometry_of_extremes.pdf}{Latest version}.}}

\begin{document}

\maketitle

\begin{abstract}
\footnotesize{
Extreme outcomes depend not only on shock tails but also on heterogeneity in how many opportunities agents get to sample. In the mixed-Poisson search framework, a randomly drawn agent's normalized maximum converges to
\[
H_{\gamma,F}(x)=P_0\!\big(v_\gamma(x)\big),
\]
a Laplace-transform mixture of a classical extreme-value law, with $F$ the mean-one distribution of opportunity intensities.

Treating $F$ as primitive, we study the operator $F\mapsto H_{\gamma,F}$. That operator representation organizes the paper: convex-order comparisons, the homogeneous benchmark, and pointwise cdf bounds are Laplace-analytic consequences. Its main quantitative payoff is geometric. Via the canonical coupling representation $Z_{\gamma,F}=w_\gamma(E/X)$, optimal transport on transformed types yields explicit Wasserstein stability for the entire induced law, integrated control of the corresponding quantile schedule, canonical ambient interpolation paths, and an explicit renormalization bridge back to the mean-one economic slice. We also give a second-order expansion that separates extreme value thery (EVT) approximation error from the heterogeneity kernel.

As a complementary contribution, we study a Kullback-Leibler regularized design problem that chooses $F$ subject to a mean constraint relative to a baseline. For objectives linear in $F$, including expected utility of normalized extremes under canonical representation, the solution is the corresponding exponential tilt, with the heterogeneous-EVT kernel supplying the score. A stylized labor market network application interprets $F$ as the cross-sectional distribution of access to job opportunities, shows how the adapted geometry controls counterfactual movements in the full top wage distribution, and renders explicit that the operational robustness claims are conditional on the maintained metric: the main linear theorem lives in the adapted distance $d_{\gamma,p}$, while the bridge from raw space Wasserstein error can be only H\"older in the economically-relevant Fr\'echet regime. We also distinguish ambient transport geometry from mean-preserving economic counterfactuals and separate finite horizon robustness from asymptotic approximation.

\smallskip \smallskip
\noindent\textbf{JEL codes:} C46; C61; D83; D85; J31; J64.\\
\noindent\textbf{Keywords:} Extreme value theory; heterogeneous search; optimal transport; Wasserstein distance; entropic design; labor market networks.
}
\end{abstract}

\section{Introduction}\label{sec:intro}

\subsection{Motivation and overall contribution}\label{subsec:motivation}
Extreme outcomes at the agent or unit level often shape broader inequality and efficiency patterns.
Examples include a worker's best wage offer, a firm's best sourcing or productivity opportunity, and the best match generated through a network.
In a broad class of models, the realized payoff is the maximum of a set of opportunities.
The size of that set can depend, inter alia, on search, attention, or network position.

Standard extreme value theory (EVT) studies maxima of i.i.d.\ sequences under deterministic sample sizes.
Economic environments often feature heterogeneous opportunity sets: different agents sample different numbers of opportunities, and the draw count is itself random.
A leading economic formulation is the mixed Poisson framework of \citet{Mangin2025}, under which the distribution of a randomly drawn agent's normalized maximum admits a Laplace-transform representation.
If $X \sim F$ indexes an agent's draw intensity and $P_0(z)=\E[\exp(-zX)]$ denotes the Laplace transform of $F$, then the distribution of normalized maxima converges to
\[
H_{\gamma,F}(x)=P_0\big(v_\gamma(x)\big),\qquad v_\gamma(x)=-\log H_\gamma(x),
\]
where $H_\gamma$ is the classical extreme-value limit associated with the underlying opportunity distribution.
This Laplace-mixture form is the analytical backbone of the paper.

Our contribution is to take the heterogeneity distribution itself as the primitive object and to analyze the induced operator $F \mapsto H_{\gamma,F}$ on the space of type distributions.
The novelty claim is deliberately sharp.
Once the Laplace-mixture form is established, convex-order comparisons, benchmark comparisons with the homogeneous economy, and several pointwise statements follow from standard one-dimensional Laplace and Wasserstein arguments.
Those results are useful, but they are not the paper's main quantitative novelty.

The operator representation has a geometric payoff.
Once the canonical coupling representation $Z_{\gamma,F}=w_\gamma(E/X)$ is available, Theorem~\ref{thm:stability_HEV} turns optimal transport on transformed types into explicit Wasserstein control of the entire induced law of extremes.
Corollary~\ref{cor:adapted_geodesic_stability} packages the same geometry into canonical ambient interpolation paths, Proposition~\ref{prop:renormalization_bridge} gives an explicit bridge back to the mean-one economic slice, Corollary~\ref{cor:quantile_schedule_stability} translates the same bound into control of the entire quantile schedule, and Theorem~\ref{thm:second_order_expansion} separates second-order EVT approximation error from the heterogeneity kernel.
These are the results that deliver robustness to misspecification or estimation error in $F$ and quantitative comparative statics for whole laws, rather than only pointwise cdf comparisons.
The main linear statement is formulated in the adapted metric $d_{\gamma,p}$; the bridge back to raw space Wasserstein error can be weaker and is made explicit later.

We also study a complementary normativ problem on the same type space.
There, the object is not a full two-marginal transport problem, and its solution does not rely on the Wasserstein geometry developed for the positive results.
Instead, the planner chooses a new heterogeneity distribution $F$ subject to a mean constraint and a Kullback-Leibler (KL) penalty relative to a baseline $F_0$.
This is a one-marginal entropy projection.
The Gibbs form of the solution is standard in that class of problems; our contribution is to show how the same Laplace kernel that defines $F \mapsto H_{\gamma,F}$ generates the score functions entering the planner's first-order conditions for cdf criteria and for expected utility under the canonical stochastic representation.
Being explicit about that division of labor sharpens both the mathematics and the interpretation.

Throughout, the formal object is the cross-sectional law of a randomly drawn agent's normalized maximum under a common offer distribution.
Aggregate maxima across agents, as well as settings in which heterogeneity also shifts offer quality or induces dependence between opportunity access and offer values, lie outside the present reduced-form scope.

\subsection{Main results}\label{subsec:preview}
The paper contributes four sets of results.

\paragraph{A structural decomposition of heterogeneous extremes.}
We formalize the operator view $T:F \mapsto H_{\gamma,F}$ and separate what is purely (Laplace-)analytic from what is genuinely geometric.
Order comparisons and several pointwise inequalities follow from the convexity of the Laplace kernel.
By contrast, the quantitative comparison of entire laws requires an adapted metric on the type space.
This distinction matters for both interpretation and application.

\paragraph{Wasserstein geometry and quantitative robustness.}
The paper's main new theorem is Theorem~\ref{thm:stability_HEV}: under the regime-specific moment conditions intrinsic to extreme-value theory, the map $F \mapsto H_{\gamma,F}$ is Lipschitz from the ambient transformed type space to the space of extreme laws equipped with Wasserstein distance.
The linear bound is therefore stated in the adapted metric $d_{\gamma,p}$, which is itself an ordinary Wasserstein distance after a monotone transform of types.
The same construction also furnishes canonical constant-speed geodesics in that ambient geometry, summarized in Corollary~\ref{cor:adapted_geodesic_stability}.
When the transform nonlinear, those adapted geodesics need not preserve the economic normalization $\E[X]=1$ at intermediate times; Proposition~\ref{prop:renormalization_bridge} therefore provides an explicit quantitative bridge from the ambient path back to the mean-one economic slice.
The same whole law control also yields integrated bounds for the full quantile schedule of normalized extremes, not only fixed threshold cdf comparisons.
Operationally, this means that one either controls $d_{\gamma,p}$ directly or uses a separate bridge from raw space error, which can be only H\"older in Fr\'echet settings, unless extra support restrictions make the transform Lipschitz.\footnote{In plain terms, the theorem controls not just one tail probability but the whole distribution of normalized extremes, and hence the whole quantile curve, in an integrated sense. In applications, the most direct route is to measure heterogeneity in the adapted metric $d_{\gamma,p}$. If one instead starts from error in the original type space, an additional comparison is needed. In Fr\'echet settings that comparison may be weaker than linear unless support restrictions keep the relevant transform Lipschitz, for instance by keeping the support away from zero.}

\paragraph{A complementary entropy-regularized design problem on the same type space.}
For objectives that are linear in $F$, including expected utility of normalized extremes under the canonical stochastic representation, we solve the planner's problem with a KL penalty and a mean constraint.
The optimizer is the corresponding exponential tilt of the baseline.
This problem is best understood as a one-marginal entropy projection.
Its tight connection to the positive analysis is kernel-based rather than transport-based: the score functions entering the tilt are generated by the same heterogeneous-EVT primitives that determine the operator $F \mapsto H_{\gamma,F}$ and its directional derivative.

\paragraph{Application to labor market networks.}
We apply the framework to a reduced-form labor market environment in which network position governs offer arrival while the offer distribution remains common across workers.
The opportunity distribution induced by degree, weighted degree, or other validated proxies for access to search translates into a heterogeneous extreme-value law for a randomly drawn worker's normalized top wage.
The positive results quantify how network inequality changes that cross-sectional right tail distribution, give a concrete whole law counterfactual for the full schedule of top wage quantiles, and make explicit the conditions under which measurement error in network heterogeneity propagates into tail prediction error.

\subsection{Empirical interface and identification of heterogeneity}\label{subsec:empirical_interface}
The empirical content of the framework is promising, thogh one should be careful about what is and is not identified.
If a cross section of offer counts is observed at a fixed horizon, then the data identify the count distribution and its pgf.
Under the mixed-Poisson structure, that pgf identified from data pins down the Laplace transform $P_0$ on the interval $z\in[0,\theta]$.
Because $P_0$ is a Laplace transform, knowledge of that transform on any open interval identifies the underlying mixing distribution $F$ at the population level.
The real difficulty is statistical rather than logical: analytic continuation and Laplace inversion are severely ill-posed, so richer variation, e.g., multiple horizons, repeated exposures, or parametric/semi-parametric structure, remains highly valuable for stable estimation rather than for point identification per se.

A second empirical route uses direct proxies for opportunity intensity, such as normalized degree, referral exposure, centrality, or estimated arrival rates.
That route identifies an empirical approximation to $F$ after one specifies a modeling map from the observed proxy into the latent intensity entering the mixed-Poisson law.
Likewise, network data by themselves do not identify the adapted metric $d_{\gamma,p}$; that metric becomes operational once the researcher has committed to a tail index $\gamma$ and to a maintained measurement model linking the observed proxy to the latent type entering the relevant stability theorem.

The same caution applies to inference from extremes.
Using the heterogeneous extreme-value limit to infer $F$ from tail behavior requires the tail-limit ingredients, i.e. the index $\gamma$, the normalizations $(a_\theta,b_\theta)$, and, for second-order refinements, the corresponding second-order objects, either to be known or to be estimated in a first step.
The full parent law $G$ need not be known for the EVT approximation itself.
Accordingly, our paper emphasizes robustness and error propagation more than stable nonparametric recovery.
The main operational message is therefore conditional: once the analyst has an estimate $\widehat F$ and justified control of the adapted metric $d_{\gamma,p}$ around $\widehat F$, the stability results convert uncertainty about heterogeneity into explicit uncertainty about tail predictions.
If one starts instead from raw space Wasserstein error, Proposition~\ref{prop:type_metric_controls} supplies the bridge.
In particular, when $0<\gamma<1$ and the assumptions of Theorem~\ref{thm:stability_HEV} hold with $p\gamma<1$, that bridge is only H\"older of order $\gamma$, so Fr\'echet-side error propagation can be materially slower than linear.

\subsection{Related literature}\label{subsec:literature}
\paragraph{Heterogeneous extreme value theory.}
We build on \citet{Mangin2025}; an earlier circulation is \citet{BeckerMangin2023}.
On the probability side, the mixed-Poisson law $H_{\gamma,F}$ sits in the classical lineage on maxima under random sample size or random indexing \citep{BarndorffNielsen1964,Galambos1973,SilvestrovTeugels1998}, with related max-geometric and max-semistable notions in, e.g., \citet{RachevResnick1991} and \citet{Megyesi2002}.
Our departure from that literature is to treat the heterogeneity distribution itself as the primitive object and to study the induced operator $F\mapsto H_{\gamma,F}$ through order, metric, and variational tools.
Adjacent econometric work by \citet{EinmahlHe2023} studies extreme-value estimation under heterogeneous marginals rather than heterogeneity in the intensity of opportunity arrival.
For classical EVT background, see \citet{deHaanFerreira2006} and \citet{Resnick2008}; for complementary peaks-over-threshold foundations, see \citet{BalkemaDeHaan1974} and \citet{Pickands1975}.

\paragraph{Optimal transport, coupling, and entropy projection.}
Our positive results use optimal transport as a geometry on distributions \citep{Villani2009,Galichon2016} and are close in spirit to the coupling approach to extremes in \citet{BobbiaDombryVarron2019}; complementary Wasserstein-type control of Fr\'echet approximation via Stein methods is developed in \citet{MansanarezPolySwan2025}.
On the normative side, our planner problem is a one-marginal entropy projection with a linear moment constraint.
The duality and exponential-tilt characterization connect to the Gibbs variational principle \citep{DonskerVaradhan1975} and, more loosely, to modern Schr\"odinger and entropic optimal transport treatments \citep{Cuturi2013,Leonard2014,PeyreCuturi2019,GhosalNutzBernton2022,Nutz2022}.
We therefore use optimal transport and entropy regularization in related but distinct ways in the positive and normative parts of the paper.

\paragraph{Labor market networks.}
Our application draws on the literature on job information and referrals in networks, including \citet{Granovetter1973}, \citet{Montgomery1991}, \citet{Topa2001}, \citet{CalvoArmengolJackson2004}, and \citet{IoannidesLoury2004}, as well as more recent work on referral inequality and segregation such as \citet{BuhaiVanderLeij2023} and \citet{BolteImmorlicaJackson2020}.
Relative to that literature, we emphasize a tail-sensitive reduced form: how the distribution of opportunities maps into the distribution of worker-level extremes, how robust that mapping is to perturbations in heterogeneity, and how policy might reshape the opportunity distribution when the right tail is economically consequential.

\subsection{Roadmap and regime map}\label{subsec:roadmap}
\noindent\textbf{Structure of the paper.}
Section~\ref{sec:env} introduces the heterogeneous extreme-value environment.
Section~\ref{sec:geometry} develops order properties, misallocation indices, and geodesic structure on the type space.
Section~\ref{sec:Wstab} contains the main metric stability results, finite horizon robustness bounds, and the second-order expansion.
Section~\ref{sec:entropic} studies the complementary entropy-regularized design problem.
Section~\ref{sec:apps} presents the labor market network application.
Section~\ref{sec:concl} concludes.
Appendix~A contains proofs and Appendix~B records technical tools.

Before moving on to the main sections, we already include Table~\ref{tab:regime_map} here in the Introduction, as a pragmatic overview of the paper's regime-specific scope.
A recurring theme is that lower-support restrictions are needed only when logarithmic, negative-power, or inverse-power transforms of $X$ enter the argument.
Another is that the adapted geodesics of Section~\ref{sec:Wstab} live in an ambient transformed type space; when $s_\gamma$ is nonlinear, they need not preserve the mean-one normalization at intermediate times.
Proposition~\ref{prop:renormalization_bridge} quantifies the discrepancy between the ambient path and its mean-one renormalization, while the raw space geodesics and pointwise comparisons in Section~\ref{sec:geometry} remain available when exact mean preservation is required throughout.
\begin{table}[ht]
\centering
\footnotesize
\caption{Scope of the main analytical results across extreme-value regimes}
\label{tab:regime_map}
\begin{tabular}{p{0.34\textwidth}p{0.17\textwidth}p{0.39\textwidth}}
\toprule
Result & Regime & Additional conditions \\
\midrule
Proposition~\ref{prop:HEV}, Proposition~\ref{prop:laplace_order}, Corollary~\ref{cor:pointwise_cdf_stability}, Proposition~\ref{prop:finite_theta_stability}, Theorem~\ref{thm:second_order_expansion} & all $\gamma$ & unified baseline Assumptions~\ref{ass:doa}--\ref{ass:mixedPoisson}; no positive lower support bound is imposed \\
Proposition~\ref{prop:pushforward_rep} and Lemma~\ref{lem:kernel_expectation_frechet} & Fr\'echet / $\gamma>0$ & product representation specific to the Fr\'echet case; no support bound away from zero is required there \\
Theorem~\ref{thm:stability_HEV} & all $\gamma$ & $d_{\gamma,p}(F_1,F_2)<\infty$ is the primitive domain condition; equivalently, one needs finite transformed $p$th moments, namely $X^{\gamma p}$ for $\gamma>0$, $|\log X|^p$ for $\gamma=0$, and $X^{-|\gamma|p}$ for $\gamma<0$ \\
Proposition~\ref{prop:type_metric_controls} & $\gamma\in(0,1]$ or $\gamma\le 0$ & for $\gamma\in(0,1]$, no lower support bound is needed; for $\gamma\le 0$, support on $[a,\infty)$ is imposed explicitly \\
Section~\ref{subsec:closed_form} power and inverse-power tilts & all regimes in principle & inverse-power scores require baseline support bounded away from zero; positive-power scores require the baseline tail to be light enough for the exponential tilt to be integrable \\
\bottomrule
\end{tabular}
\end{table}

\section{Environment: heterogeneous extreme value primitives}\label{sec:env}

\subsection{Baseline extreme value theory}\label{subsec:evt_setup}
Let $\{Y_j\}_{j\ge 1}$ be i.i.d.\ real valued random variables with common cdf $G$.
Let $M_n := \max_{1\le j\le n} Y_j$ denote the sample maximum.

We impose a standard maximum domain of attraction condition.
Equivalently, after affine normalization, the law of $M_n$ converges to a generalized extreme value distribution.

\begin{assumption}[Domain of attraction]\label{ass:doa}
There exist sequences $a_n>0$ and $b_n\in\R$ and a parameter $\gamma\in\R$ such that
\[
\Pr\!\left(\frac{M_n-b_n}{a_n}\le x\right)\to H_\gamma(x)
\quad\text{as }n\to\infty,
\]
for all continuity points $x$ of $H_\gamma$.
\end{assumption}

We use the canonical generalized extreme value family
\[
H_\gamma(x)=
\begin{cases}
\exp\!\left(-(1+\gamma x)^{-1/\gamma}\right), & \gamma\neq 0,\ 1+\gamma x>0,\\
\exp\!\left(-e^{-x}\right), & \gamma=0,
\end{cases}
\]
with the conventions $H_\gamma(x)=0$ for $\gamma>0$ and $x\le -1/\gamma$, and $H_\gamma(x)=1$ for $\gamma<0$ and $x\ge -1/\gamma$.
Define the tail transform
\[
v_\gamma(x) := -\log H_\gamma(x),
\]
so that $v_\gamma(x)=(1+\gamma x)^{-1/\gamma}$ for $\gamma\neq 0$ on $\{1+\gamma x>0\}$ and $v_0(x)=e^{-x}$ on $\R$.

\subsection{Mixed Poisson heterogeneity in draw counts}\label{subsec:mixed_poisson}
Each agent draws a random number $N(\theta)$ of i.i.d.\ opportunities, where $\theta>0$ scales the average arrival rate.
Following \citet{Mangin2025}, the draw count is mixed Poisson.

\begin{assumption}[Mixed Poisson heterogeneity]\label{ass:mixedPoisson}
Let $X\sim F$ on $[0,\infty)$ with $\E[X]=1$ and $\Pr(X=0)=0$.
Conditional on $X$, the draw count is Poisson:
\[
N(\theta)\mid X \sim \mathrm{Poisson}(\theta X),\qquad \theta>0.
\]
The sequence $\{Y_j\}_{j\ge 1}$ is independent of $(X,N(\theta))$.
\end{assumption}

\begin{remark}[Why we exclude an atom at zero]\label{rem:zero_mass}
The restriction $\Pr(X=0)=0$ is mainly a unified domain convention rather than a substantive claim that near-zero opportunity types are unimportant.
Mass arbitrarily close to zero is allowed throughout the baseline model.
The reason to exclude an atom exactly at zero is that, for $\gamma\le 0$, it produces an atom at $-\infty$ in the normalized limit law, while several metric constructions below use $\log x$ or $x^\gamma$ and are therefore naturally formulated on $(0,\infty)$.
When a positive lower support bound is genuinely needed, we impose it explicitly in the corresponding proposition.
\end{remark}

The distribution of $N(\theta)$ is mixed Poisson with mean $\E[N(\theta)]=\theta$.
Let
\[
P_0(z):=\E\!\left[e^{-zX}\right],\qquad z\ge 0,
\]
denote the Laplace transform of $F$.
The next lemma summarizes the key generating function identity that will be used repeatedly.

\begin{lemma}[Probability generating function of mixed Poisson]\label{lem:mp_pgf}
Under Assumption~\ref{ass:mixedPoisson}, for any $y\in[0,1]$ and any $\theta>0$,
\[
\E\!\left[y^{N(\theta)}\right]=P_0\!\left(\theta(1-y)\right).
\]
\end{lemma}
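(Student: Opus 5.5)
The plan is to condition on the intensity $X$ and then apply the tower property. Conditional on $X=x$, Assumption~\ref{ass:mixedPoisson} makes $N(\theta)$ Poisson with mean $\theta x$. The Poisson pgf then gives, for $y\in[0,1]$,
\[
\E\!\left[y^{N(\theta)}\,\middle|\,X=x\right]=\sum_{k\ge 0} y^k e^{-\theta x}\frac{(\theta x)^k}{k!}=e^{-\theta x}e^{\theta x y}=e^{-\theta(1-y)x}.
\]
The series converges absolutely for every $y$, so no restriction beyond $y\in[0,1]$ is needed at this stage. At $y=0$ we use the convention $0^0=1$, which yields $\Pr(N(\theta)=0\mid X=x)=e^{-\theta x}$, consistent with the formula.

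Next, we take expectations over $X\sim F$:
\[
\E\!\left[y^{N(\theta)}\right]=\E\!\left[\E\!\left[y^{N(\theta)}\mid X\right]\right]=\E\!\left[e^{-\theta(1-y)X}\right]=P_0\!\big(\theta(1-y)\big).
\]
Here $z=\theta(1-y)\ge 0$, so we are evaluating $P_0$ inside its stated domain $z\ge 0$.

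The only step that needs any care is justifying the conditioning and the interchange of sum and expectation. Since $0\le y^{N(\theta)}\le 1$, the random variable is bounded and nonnegative, so conditional expectation is well defined. Tonelli's theorem (or monotone convergence applied to the nonnegative series) then justifies writing
\[
\E\!\left[y^{N(\theta)}\right]=\int\sum_k y^k\Pr(N(\theta)=k\mid X=x)\,dF(x),
\]
where the mixed Poisson law is read as the regular conditional distribution specified in Assumption~\ref{ass:mixedPoisson}. Finally, the conditions $\E[X]=1$ and $\Pr(X=0)=0$ play no role in this identity; the result holds for any law on $[0,\infty)$.
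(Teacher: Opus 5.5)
Your proposal is correct and follows the same route as the paper: condition on $X$, apply the Poisson probability generating function to get $e^{-\theta X(1-y)}$, then take expectations over $X\sim F$ to obtain $P_0(\theta(1-y))$. Your extra remarks on measurability and Tonelli, and your observation that $\E[X]=1$ and $\Pr(X=0)=0$ are not needed, are accurate and go beyond what the paper's proof states.
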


\begin{remark}[Search technology versus type heterogeneity]
Assumption~\ref{ass:mixedPoisson} admits two complementary interpretations.
One may take the mixed Poisson law of $N(\theta)$ as the primitive ``search technology'', in which case $F$ is an equivalent representation.
Alternatively, one may take $F$ as the primitive distribution of types $X$, where higher $X$ implies a higher mean draw count $\theta X$.
The Laplace transform $P_0$ summarizes the search technology and is the only object from $F$ that enters the distribution of extremes.
The mixed Poisson specification is analytically valuable precisely because the conditional \emph{distribution function} of the maximum takes the exponential form
\[
\Pr(M_\theta\le x\mid X)=\exp\{-\theta X(1-G(x))\},
\]
so heterogeneity enters through a Laplace transform rather than through a more complicated count generating function.
Its complement gives the corresponding conditional tail probability.
This special role of Poisson search is also emphasized in \citet{BeckerMangin2023}, which contrasts the Poisson benchmark with more general count technologies.
\end{remark}

\subsection{Heterogeneous maxima and their distribution}\label{subsec:hev}
Given $N(\theta)$ draws, the realized outcome is the maximum
\[
M_\theta := \sup_{1\le j\le N(\theta)} Y_j,
\]
with the convention $\sup\emptyset=-\infty$.

For any $x\in\R$,
\[
\Pr(M_\theta\le x \mid X)=\E\!\left[G(x)^{N(\theta)}\mid X\right]
=\exp\!\left(-\theta X(1-G(x))\right),
\]
and hence, by using Lemma~\ref{lem:mp_pgf},
\begin{equation}\label{eq:cdf_Mtheta}
\Pr(M_\theta\le x)=P_0\!\left(\theta(1-G(x))\right).
\end{equation}

\subsection{Heterogeneous extreme value limit law}\label{subsec:hev_limit}
We now combine the domain of attraction condition with the mixed Poisson structure.
Define $a_\theta:=a_{\lfloor \theta\rfloor}$ and $b_\theta:=b_{\lfloor \theta\rfloor}$ for $\theta\ge 1$.
This step extension is convenient for the first-order limit theorem below.
In the second-order subsection later on, we revert to continuous normalizing functions, as is standard in extreme value theory, to avoid discretization artifacts.

\begin{proposition}[Heterogeneous extreme value law]\label{prop:HEV}
Suppose Assumptions~\ref{ass:doa} and~\ref{ass:mixedPoisson} hold.
Then for all continuity points $x$ of $H_{\gamma,F}$,
\[
\Pr\!\left(\frac{M_\theta-b_\theta}{a_\theta}\le x\right)\to H_{\gamma,F}(x)
:=P_0\!\big(v_\gamma(x)\big)
\quad \text{as }\theta\to\infty.
\]
Moreover, if $F=\delta_1$, then $P_0(z)=e^{-z}$ and $H_{\gamma,F}=H_\gamma$.
\end{proposition}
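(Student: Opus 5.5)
We begin from the exact finite-$\theta$ identity \eqref{eq:cdf_Mtheta}. Evaluating it at $a_\theta x+b_\theta$ gives
\[
\Pr\!\left(\frac{M_\theta-b_\theta}{a_\theta}\le x\right)=P_0\!\big(z_\theta(x)\big),\qquad z_\theta(x):=\theta\big(1-G(a_\theta x+b_\theta)\big).
\]
The proof then splits into two parts: show that $z_\theta(x)\to v_\gamma(x)$ in $[0,\infty]$, and pass the limit through $P_0$. The Laplace transform $P_0$ is continuous and nonincreasing on $[0,\infty)$, with $P_0(0)=1$. Because $\Pr(X=0)=0$, dominated convergence also gives $P_0(z)\to 0$ as $z\to\infty$. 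We may therefore extend $P_0$ continuously to $[0,\infty]$ by setting $P_0(\infty)=0$, and the second step is immediate once the first is in hand.

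For the first step we use the standard equivalence in EVT. Assumption~\ref{ass:doa} is equivalent to
\[
n\big(1-G(a_nx+b_n)\big)\to v_\gamma(x)\in[0,\infty]
\]
for every $x$ with $H_\gamma(x)>0$ and $H_\gamma$ continuous at $x$, with the limit $+\infty$ at points where $H_\gamma(x)=0$. The proof takes logarithms in $G(a_nx+b_n)^n\to H_\gamma(x)$ and uses $-\log(1-u)\sim u$ as $u\to0$. One small point needs attention. On the region where $H_\gamma(x)>0$ we have $G(a_nx+b_n)\to 1$, so the asymptotic $-\log(1-u)\sim u$ applies. Where $H_\gamma(x)=0$, the bound $-\log G\ge 1-G$ does not give divergence, so we use instead $G^n\le \exp\{-n(1-G)\}$, which shows that $n(1-G)\to\infty$ is forced.

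Next set $n=\lfloor\theta\rfloor$. Then $\theta/n\to1$, so
\[
z_\theta(x)=\frac{\theta}{\lfloor\theta\rfloor}\,\lfloor\theta\rfloor\big(1-G(a_{\lfloor\theta\rfloor}x+b_{\lfloor\theta\rfloor})\big)\to v_\gamma(x).
\]
This is exactly where the step extension $a_\theta=a_{\lfloor\theta\rfloor}$, $b_\theta=b_{\lfloor\theta\rfloor}$ is used, and it avoids any need for regular variation of continuous normalizations.

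The main obstacle is boundary bookkeeping: identifying the continuity points of $H_{\gamma,F}$ and handling the endpoint regimes.
\begin{itemize}
\item For $\gamma>0$ and $x\le-1/\gamma$, the limit of $z_\theta(x)$ is $+\infty$, so $P_0(z_\theta)\to0=H_{\gamma,F}(x)$.
\item For $\gamma<0$ and $x\ge-1/\gamma$, we need $v_\gamma(x)=0$. This follows because $H_\gamma(x)=1$ forces $n(1-G)\to0$.
\end{itemize}
Since $v_\gamma$ is continuous into $[0,\infty]$ and $P_0$ extends continuously, $H_{\gamma,F}$ is in fact continuous everywhere. The only possible exception is $\gamma<0$ at $x=-1/\gamma$, where the statement restricted to continuity points, inherited from Assumption~\ref{ass:doa}, covers us. A monotonicity sandwich argument handles any residual points.

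Finally, if $F=\delta_1$ then $P_0(z)=\E[e^{-z}]=e^{-z}$. Hence $H_{\gamma,F}(x)=e^{-v_\gamma(x)}=H_\gamma(x)$ by the definition of $v_\gamma$.
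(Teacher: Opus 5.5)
Your overall route is the paper's: rewrite the normalized cdf as $P_0\big(\theta(1-G(b_\theta+a_\theta x))\big)$ via \eqref{eq:cdf_Mtheta}, prove $\theta(1-G(b_\theta+a_\theta x))\to v_\gamma(x)$ in $[0,\infty]$ from $G^n\to H_\gamma$ along $n=\lfloor\theta\rfloor$, and pass through $P_0$ using $P_0(z)\to 0$ as $z\to\infty$, which is where $\Pr(X=0)=0$ enters. The case $H_\gamma(x)>0$ is handled correctly.

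The gap is in the case $H_\gamma(x)=0$, that is $\gamma>0$ and $x\le -1/\gamma$. This is exactly where the conclusion $H_{\gamma,F}(x)=0$ depends on $z_\theta(x)\to\infty$. The inequality you invoke there, $G^n\le \exp\{-n(1-G)\}$, is simply the exponentiated form of $-\log G\ge 1-G$, the bound you had just set aside. It also points the wrong way: it bounds $G^n$ from above, so combined with $G^n\to 0$ it only gives $n(1-G)\le -\log(G^n)$. That is an upper bound by a divergent quantity, perfectly compatible with $n(1-G)$ staying bounded, so it cannot force divergence.

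What you need is a lower bound on $G^n$ in terms of $n(1-G)$. Write $u_n:=1-G(b_n+a_nx)$ and use $\log(1-u)\ge -u/(1-u)$ for $u\in[0,1)$. Suppose $nu_n\le M$ along some subsequence. Then $u_n\le M/n\to 0$ there, and
$(1-u_n)^n\ge \exp\{-nu_n/(1-u_n)\}\ge \exp\{-M/(1-M/n)\}\to e^{-M}>0$, which contradicts $(1-u_n)^n\to 0$. Hence $nu_n\to\infty$, and $\theta u_n\ge nu_n$ finishes the step. The paper makes exactly this subsequence argument, using $\log(1-u)=-u+o(u)$ to obtain $(1-u_{n_k})^{n_k}\to e^{-L}>0$. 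With that repair your proof is complete.

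A minor point: there is no exceptional continuity point. For $\gamma<0$, $v_\gamma(x)\downarrow 0$ as $x\uparrow -1/\gamma$, so $H_\gamma$ is continuous on all of $\R$. The same holds for $H_{\gamma,F}=P_0\circ v_\gamma$, with $P_0(\infty)=0$. Your hedging about $x=-1/\gamma$ and the monotonicity sandwich are therefore unnecessary.
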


\begin{remark}[Formal scope of the limit law]\label{rem:formal_scope_limit}
The object $H_{\gamma,F}$ is the limit distribution of the normalized maximum for a randomly drawn agent under a common offer distribution $G$.
It is therefore a cross-sectional agent-level object.
The paper does not analyze the economy-wide maximum $\max_i M_{i,\theta}$, nor settings in which network position also changes the distribution of the ofer quality or induces dependence between $X$ and the $Y_{ij}$.
Those extensions are potentially important, but they require additional structure beyond the present mixed-Poisson layer.
\end{remark}

\begin{remark}[Connection to random-sample-size maxima]\label{rem:random_sample_size}
Proposition~\ref{prop:HEV} is a mixed-Poisson instance of the classical literature on maxima with random sample size or random indexing; see, e.g., \citet{BarndorffNielsen1964}, \citet{Galambos1973}, and \citet{SilvestrovTeugels1998}.
In that literature, random indexing changes the form of the limit law while often preserving the underlying extreme-value type.
The value added here is not another domain-of-attraction result, but the decision to treat the mean one heterogeneity distribution $F$ and its Laplace transform $P_0$ as primitives for the order, metric, and design analysis developed below.
\end{remark}

\paragraph{Comment.}
The representation \eqref{eq:cdf_Mtheta} reduces the asymptotics of $M_\theta$ to the behavior of $\theta(1-G(b_\theta+a_\theta x))$.
Assumption~\ref{ass:doa} implies that $G(b_\theta+a_\theta x)^{\lfloor\theta\rfloor}\to H_\gamma(x)=e^{-v_\gamma(x)}$ and thus $\theta(1-G(b_\theta+a_\theta x))\to v_\gamma(x)$ at continuity points.
The formal argument is given in Appendix~A.
Our focus on block maxima is natural in search environments where the economic object is the best offer over a horizon.
Complementary peaks-over-threshold formulations are classical in extreme value theory (see, e.g., \citet{Pickands1975} and \citet{BalkemaDeHaan1974}).

\subsection{A canonical product representation in the Fr\'echet case}\label{subsec:pareto_rep}
A particularly useful representation emerges under Fr\'echet normalization, which is natural for heavy-tailed primitives.
Suppose $\gamma>0$ and work with the standard Fr\'echet cdf
\[
H_\gamma^{\mathrm{Fr}}(z)=\exp\!\left(-z^{-1/\gamma}\right),\qquad z>0.
\]
If $Z_\gamma$ is a Fr\'echet random variable with cdf $H_\gamma^{\mathrm{Fr}}$ and $Z_\gamma$ is independent of $X$, then
\[
\Pr\!\left(X^\gamma Z_\gamma \le z\right)=\E\!\left[\exp\!\left(-X z^{-1/\gamma}\right)\right]=P_0\!\left(z^{-1/\gamma}\right),\qquad z>0.
\]
Thus the heterogeneous Fr\'echet limit admits the product representation
\begin{equation}\label{eq:product_rep}
Z \ \overset{d}{=} \ X^\gamma Z_\gamma,
\end{equation}
where $Z$ has cdf $z\mapsto P_0(z^{-1/\gamma})$ on $(0,\infty)$.
Equivalently, the canonically normalized GEV variable
\[
Z_{\mathrm{GEV}}:=\frac{Z-1}{\gamma}
\]
has cdf $H_{\gamma,F}$.
We shall use \eqref{eq:product_rep} to construct couplings and to obtain sharp Wasserstein bounds in Section~\ref{sec:Wstab}.

\subsection{Extreme outcome functionals}\label{subsec:outcome_functionals}
The paper studies tail-sensitive functionals of the distribution of extremes.
Let $\zeta:\R\to\R$ be a measurable payoff function.
For finite $\theta$, define
\[
\Phi_\theta(F):=\E[\zeta(M_\theta)].
\]
For normalized extremes, define
\[
Z_\theta := \frac{M_\theta-b_\theta}{a_\theta}, \qquad
\Psi_\theta(F):=\E[\psi(Z_\theta)]
\]
for measurable $\psi$ such that the expectation is well defined.
Section~\ref{sec:Wstab} establishes pointwise finite horizon stability for the cdf of $Z_\theta$, Wasserstein stability for the limit law $F\mapsto \Law(Z)$, and derives Lipschitz-type bounds for classes of functionals $\psi$ that are natural for tail analysis.

Section~\ref{sec:entropic} uses these objects to define and solve an entropy-regularized design problem for $F$.

\section{Geometry of heterogeneous extreme value laws}\label{sec:geometry}

This section develops structural and geometric properties of the heterogeneous extreme value operator introduced in Section~\ref{sec:env}.
Throughout, $F$ denotes a probability measure on $[0,\infty)$ with $\E_F[X]=1$ and $\Pr_F(X=0)=0$,
and we write
\[
P_0(z):=\E_F\!\left[e^{-zX}\right]=\int_{0}^{\infty} e^{-zx}\,F(dx),\qquad z\ge 0,
\]
for the Laplace transform of $F$.
For a fixed tail index $\gamma\in\R$ and tail transform $v_\gamma(x):=-\log H_\gamma(x)$
(defined on $\{x:H_\gamma(x)\in(0,1)\}$), the heterogeneous extreme value law is
\[
H_{\gamma,F}(x)=P_0\!\big(v_\gamma(x)\big).
\]
We view the mapping
\[
T:\ F \mapsto H_{\gamma,F}
\]
as an operator from a space of type distributions to a space of extreme value laws.
The section has two roles.
Subsection~\ref{subsec:order} records order and benchmark consequences of the Laplace-mixture structure.
Those results are useful for interpretation and for the application, but they are not the core geometric inovation.
The later subsections then isolate the coupling and variational ingredients needed for the genuinely geometric stability results in Section~\ref{sec:Wstab} and, more indirectly, for the complementary design problem in Section~\ref{sec:entropic}.

\subsection{Order comparisons and extreme misallocation}\label{subsec:order}

A central economic question is how dispersion in opportunities changes extreme outcomes.
In our setting, dispersion enters through the distribution $F$ of draw intensities $X$.
A natural formalization is convex order, which coincides with mean-preserving spreads.

\begin{definition}[Convex order]\label{def:convex_order}
Let $F_1,F_2$ be probability measures on $[0,\infty)$ with finite first moment and common mean.
We write $F_2 \succeq_{\mathrm{cx}} F_1$ if
\[
\int \varphi(x)\,F_2(dx)\ \ge\ \int \varphi(x)\,F_1(dx)
\]
for every convex function $\varphi:[0,\infty)\to\R$ for which both expectations are finite.
When $F_2 \succeq_{\mathrm{cx}} F_1$, $F_2$ is a mean preserving spread of $F_1$ in the sense of \citet{RothschildStiglitz1970}.
\end{definition}

Convex order has an immediate implication for Laplace transforms because $x\mapsto e^{-zx}$ is convex on $[0,\infty)$ for every $z\ge 0$.

\begin{proposition}[Convex order implies Laplace order]\label{prop:laplace_order}
Let $F_1,F_2$ be probability measures on $[0,\infty)$ with $\E[X]=1$.
If $F_2 \succeq_{\mathrm{cx}} F_1$, then for every $z\ge 0$,
\[
P_0^{(2)}(z)\ \ge\ P_0^{(1)}(z).
\]
Consequently, for every $x$ such that $v_\gamma(x)<\infty$,
\[
H_{\gamma,F_2}(x)\ \ge\ H_{\gamma,F_1}(x).
\]
Equivalently, if $Z_i$ has cdf $H_{\gamma,F_i}$, then $\Pr(Z_2>t)\le \Pr(Z_1>t)$ for all $t\in\R$.
\end{proposition}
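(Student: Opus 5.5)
The plan is to apply Definition~\ref{def:convex_order} directly to the exponential kernel. Fix $z\ge 0$ and set $\varphi_z(x):=e^{-zx}$ on $[0,\infty)$. Since $\varphi_z''(x)=z^2e^{-zx}\ge 0$, $\varphi_z$ is convex. It also takes values in $(0,1]$, so both integrals $\int\varphi_z\,dF_1$ and $\int\varphi_z\,dF_2$ are finite and the integrability proviso in the definition holds automatically. Convex order then gives $\int e^{-zx}F_2(dx)\ge\int e^{-zx}F_1(dx)$, which is exactly $P_0^{(2)}(z)\ge P_0^{(1)}(z)$ for every $z\ge0$. This step requires no further work beyond checking that the test function is admissible.

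For the second claim, fix $x$ with $v_\gamma(x)<\infty$ and substitute $z=v_\gamma(x)$. The only point that needs attention is that $v_\gamma(x)\in[0,\infty)$ on the relevant range. On $\{H_\gamma\in(0,1)\}$ we have $v_\gamma(x)=-\log H_\gamma(x)\in(0,\infty)$. At the boundary conventions the cases are as follows. For $\gamma<0$ and $x\ge -1/\gamma$, $H_\gamma=1$ gives $v_\gamma=0$, and then $P_0^{(i)}(0)=1$ for both $i$, so equality holds. For $\gamma>0$ and $x\le-1/\gamma$, $v_\gamma=+\infty$ and these points are excluded by hypothesis. Alternatively one can set $P_0(\infty)=\Pr(X=0)=0$ for both laws, which again gives equality. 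Hence $H_{\gamma,F_2}(x)=P_0^{(2)}(v_\gamma(x))\ge P_0^{(1)}(v_\gamma(x))=H_{\gamma,F_1}(x)$.

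For the survival-function statement, take complements: $\Pr(Z_i>t)=1-H_{\gamma,F_i}(t)$. I will check that the inequality between the cdfs extends to every $t\in\R$. Where $v_\gamma(t)<\infty$ this is the previous step. Where $v_\gamma(t)=\infty$ (only for $\gamma>0$, $t\le -1/\gamma$), Assumption~\ref{ass:mixedPoisson} gives $\Pr(X=0)=0$, so dominated convergence yields $P_0^{(i)}(z)\to0$ as $z\to\infty$. Both cdfs therefore vanish there, and both survival probabilities equal $1$.

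I expect this bookkeeping of the boundary of the support of $H_\gamma$, rather than the convexity argument, to be the main (mild) obstacle. In particular I must make sure that the equality $H_{\gamma,F}=P_0\circ v_\gamma$ is interpreted consistently at $v_\gamma\in\{0,\infty\}$, so that "for all $t\in\R$" is genuinely covered.
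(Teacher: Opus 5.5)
Your proof is correct and follows the paper's argument. The paper applies convex order to the bounded convex kernel $x\mapsto e^{-zx}$, composes with $z=v_\gamma(x)$, and reads off the survival inequality by taking complements, exactly as you do. Your explicit handling of the boundary cases $v_\gamma(x)\in\{0,\infty\}$, using $\Pr(X=0)=0$ so that both cdfs vanish where $v_\gamma=\infty$, is slightly more careful than the paper. The paper simply composes with $z=v_\gamma(x)\ge 0$ and asserts the inequality for all $x$.
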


\begin{corollary}[Heterogeneity lowers extremes relative to the homogeneous benchmark]\label{cor:hetero_vs_hom}
Let $F$ be a probability measure on $[0,\infty)$ with $\E[X]=1$.
Then for every $z\ge 0$,
\[
P_0(z)\ \ge\ e^{-z},
\]
with strict inequality for all $z>0$ whenever $F\neq \delta_1$.
Consequently, for every $x$ such that $v_\gamma(x)<\infty$,
\[
H_{\gamma,F}(x)\ \ge\ H_\gamma(x),
\]
so the heterogeneous extreme value limit is stochastically smaller than the homogeneous limit.
\end{corollary}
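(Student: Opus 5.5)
The plan is to apply Jensen's inequality to the convex kernel $x\mapsto e^{-zx}$, which is the same mechanism as in Proposition~\ref{prop:laplace_order}. The shortest route is to observe that every mean-one $F$ dominates $\delta_1$ in convex order: for any convex $\varphi$ with finite expectation, Jensen gives $\int\varphi\,dF\ge\varphi(\E_F[X])=\varphi(1)=\int\varphi\,d\delta_1$. Proposition~\ref{prop:laplace_order} with $F_1=\delta_1$ and $F_2=F$ then yields $P_0(z)\ge e^{-z}$ for all $z\ge0$, since the Laplace transform of $\delta_1$ is $e^{-z}$. Alternatively, one can apply Jensen directly: $\E[e^{-zX}]\ge e^{-z\E[X]}=e^{-z}$. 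The bounded kernel needs no integrability caveat.

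Strictness for $z>0$ comes from strict convexity. For $z>0$ the map $x\mapsto e^{-zx}$ is strictly convex. The tangent-line bound $e^{-zx}\ge e^{-z}-ze^{-z}(x-1)$ therefore holds with equality only at $x=1$. Integrating against $F$ and using $\E[X]=1$ gives $P_0(z)-e^{-z}=\E[g(X)]$, where $g\ge0$ and $g(x)=0$ iff $x=1$. If $F\neq\delta_1$, then $\Pr(X\neq1)>0$, so $\E[g(X)]>0$. This is the only step with any content, and I expect no real obstacle. The one point to flag is that strictness requires $z>0$, because at $z=0$ both sides equal $1$.

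For the cdf statement, fix $x$ with $v_\gamma(x)<\infty$. Then $v_\gamma(x)\in[0,\infty)$, and evaluating the inequality at $z=v_\gamma(x)$ gives $H_{\gamma,F}(x)=P_0(v_\gamma(x))\ge e^{-v_\gamma(x)}=H_\gamma(x)$. At points where $v_\gamma(x)=\infty$ (the lower endpoint regions for $\gamma>0$), both cdfs equal $0$, since $P_0(\infty)=\Pr(X=0)=0$ by Assumption~\ref{ass:mixedPoisson}. The inequality therefore holds for all $x$. Pointwise cdf dominance $H_{\gamma,F}\ge H_\gamma$ is exactly first-order stochastic dominance of the homogeneous limit, i.e.\ $\Pr(Z>t)\le\Pr(Z^{\mathrm{hom}}>t)$ for every $t$. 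This is the claimed stochastic ordering, and strictness transfers at every $x$ with $0<v_\gamma(x)<\infty$.
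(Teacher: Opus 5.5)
Your proof is correct and follows the paper's route. Any mean-one $F$ dominates $\delta_1$ in convex order, so Proposition~\ref{prop:laplace_order} with $F_1=\delta_1$, $F_2=F$ gives $P_0(z)\ge e^{-z}$, and strict convexity of $x\mapsto e^{-zx}$ for $z>0$ gives strictness; your tangent-line argument simply makes that strict Jensen step explicit, and your extension to points with $v_\gamma(x)=\infty$ (using $\Pr(X=0)=0$) is a harmless addition beyond what the paper states.
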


\begin{remark}[Tail index versus level effects]\label{rem:first_order_tail_equivalence}
Order comparisons such as Proposition~\ref{prop:laplace_order} operate at the level of the entire distribution function.
They are consistent with the tail index invariance emphasized by \citet{Mangin2025}: heterogeneity in the mixed Poisson draw counts changes the level of extreme outcomes but does not change the limiting tail index $\gamma$.
Under the mean-one normalization, one can actually say more: since $0\le (1-e^{-zX})/z \le X$ for $z>0$ and $\E[X]=1$, dominated convergence gives $P_0(z)=1-z+o(z)$ as $z\downarrow 0$.
Consequently, $1-H_{\gamma,F}(x)=1-P_0\!\big(v_\gamma(x)\big)\sim v_\gamma(x)\sim 1-H_\gamma(x)$ whenever $v_\gamma(x)\downarrow 0$.
Thus heterogeneity changes distributional levels and high quantiles, but it does not change the first-order ultra-tail mass.
\end{remark}

The previous results motivate quantitative measures of how far $F$ is from the egalitarian benchmark $\delta_1$.

\begin{definition}[Extreme misallocation indices]\label{def:misallocation}
Fix the benchmark $F^{\mathrm{eq}}:=\delta_1$.
\begin{enumerate}
\item For $p\ge 1$ and $F\in\mathcal{P}_p([0,\infty))$, define the metric misallocation index
\[
\mathcal{M}_p(F):=W_p(F,\delta_1).
\]
\item Further define the Laplace misallocation curve
\[
\Delta_F(z):=P_0(z)-e^{-z},\qquad z\ge 0.
\]
\end{enumerate}
\end{definition}

\begin{remark}[Interpretation]
$\mathcal{M}_p(F)$ measures opportunity dispersion in a transportation geometry and will be directly compatible with the stability bounds in Section~\ref{sec:Wstab}.
Because the benchmark is a point mass, the first two indices are especially transparent: $\mathcal{M}_1(F)=\E[|X-1|]$ and, under $\E[X]=1$, $\mathcal{M}_2(F)=\sqrt{\E[(X-1)^2]}=\sqrt{\Var(X)}$.
The Laplace curve $z\mapsto \Delta_F(z)$ measures the induced distortion in extreme value laws at the Laplace scale.
By Corollary~\ref{cor:hetero_vs_hom}, $\Delta_F(z)\ge 0$ for all $z\ge 0$, and $\Delta_F\equiv 0$ if and only if $F=\delta_1$.
In particular, these indices quantify how far the opportunity distribution is from the homogeneous benchmark before Section~\ref{sec:Wstab} translates that gap into explicit bounds on the induced law of extremes.
\end{remark}

\subsection{Wasserstein geometry on the type space}\label{subsec:wass_type}

Now let $\mathcal{P}_p([0,\infty))$ denote the set of probability measures on $[0,\infty)$ with finite $p$th moment.
For $\mu,\nu\in\mathcal{P}_p([0,\infty))$, the Wasserstein $p$ distance is
\[
W_p(\mu,\nu):=\left(\inf_{\pi\in\Gamma(\mu,\nu)} \int_{[0,\infty)^2} |x-y|^p\,\pi(dx,dy)\right)^{1/p},
\]
where $\Gamma(\mu,\nu)$ is the set of couplings with marginals $\mu$ and $\nu$.
We refer to \citet{Villani2009} and \citet{Galichon2016} for background.

In one dimension, the geometry is particularly explicit.

\begin{lemma}[Quantile representation and canonical geodesics in one dimension]\label{lem:quantile_geodesic}
Let $\mu,\nu\in\mathcal{P}_p([0,\infty))$ and let $Q_\mu,Q_\nu$ denote their quantile functions.
Then
\[
W_p^p(\mu,\nu)=\int_0^1 |Q_\mu(u)-Q_\nu(u)|^p\,du.
\]
Moreover, the path $(\mu_t)_{t\in[0,1]}$ defined by
\[
Q_{\mu_t}(u)=(1-t)\,Q_\mu(u)+t\,Q_\nu(u),\qquad u\in(0,1),
\]
is a constant-speed $W_p$ geodesic from $\mu$ to $\nu$.
For $p>1$ it is the unique geodesic, while for $p=1$ it is the canonical monotone geodesic and uniqueness need not hold.
\end{lemma}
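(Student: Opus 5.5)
The proof rests on the quantile coupling. Let $U\sim\mathrm{Unif}(0,1)$. Since $Q_\mu$ is nondecreasing and left-continuous, $Q_\mu(U)\sim\mu$ and $Q_\nu(U)\sim\nu$. The pair $(Q_\mu(U),Q_\nu(U))$ is therefore an admissible coupling, which gives
\[
W_p^p(\mu,\nu)\le\int_0^1|Q_\mu(u)-Q_\nu(u)|^p\,du .
\]

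For the reverse inequality I use the fact that $c(x,y)=|x-y|^p$ with $p\ge1$ is of the form $h(x-y)$ with $h$ convex. Hence $c$ satisfies the Monge (submodularity) property
\[
c(x,y)+c(x',y')\le c(x,y')+c(x',y) \quad\text{for } x\le x',\ y\le y'.
\]
The standard consequence is that the comonotone coupling is optimal. I would establish this in three steps.
\begin{enumerate}
\item Approximate $\mu,\nu$ by empirical measures with $n$ equal atoms.
\item For these measures, optimal couplings can be taken to be permutations (Birkhoff). A swapping argument using the Monge inequality shows the sorted matching is optimal.
\item Pass to the limit, using continuity of $W_p$ under $W_p$-convergence and $L^p(0,1)$ convergence of the quantile functions.
\end{enumerate}
Alternatively, one can cite \citet{Villani2009}, Theorem 2.18 / Remark 2.19, which is standard. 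This limiting step, and the care needed with the $p=1$ case where $h$ is not strictly convex, is the main technical point. Even for $p=1$ the Monge inequality still holds, so optimality, though not uniqueness, goes through.

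Write $\mu_t$ for the law of $(1-t)Q_\mu(U)+tQ_\nu(U)$. A convex combination of nondecreasing left-continuous functions is again of that type, so $Q_{\mu_t}=(1-t)Q_\mu+tQ_\nu$. The geodesic claim then follows from the isometry: $\mu\mapsto Q_\mu$ embeds $(\mathcal{P}_p,W_p)$ isometrically onto the convex cone of quantile functions in $L^p(0,1)$. Applying the formula twice gives
\[
W_p(\mu_s,\mu_t)=\|(t-s)(Q_\nu-Q_\mu)\|_{L^p}=|t-s|\,W_p(\mu,\nu),
\]
so the path is a constant-speed geodesic.

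For uniqueness when $p>1$, let $(\rho_t)$ be any constant-speed geodesic and fix $t$. Through the isometry, $q=Q_{\rho_t}$ satisfies
\[
\|q-Q_\mu\|_p=t\,D,\qquad \|Q_\nu-q\|_p=(1-t)D,\qquad D=\|Q_\nu-Q_\mu\|_p .
\]
Equality therefore holds in Minkowski's inequality. Because $L^p$ is strictly convex for $1<p<\infty$, this forces $q-Q_\mu$ and $Q_\nu-q$ to be nonnegative multiples of each other, hence $q=(1-t)Q_\mu+tQ_\nu$ a.e.

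For $p=1$ this fails. A counterexample is $\mu=\delta_0$-type shifts of a uniform law: translating $\mathrm{Unif}[0,1]$ to $\mathrm{Unif}[1,2]$ admits the non-displacement geodesic obtained by mixing $(1-t)\mu+t\nu$. This justifies the statement that the displacement path is only the canonical monotone geodesic in that case.
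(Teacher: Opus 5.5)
Your proof is correct, and it goes further than the paper, which never proves this lemma. The paper treats it as a standard one-dimensional fact: the quantile formula is restated as Lemma~\ref{lem:quantile_rep} in Appendix~B, again without proof and with a pointer to \citet{Villani2009} and \citet{Santambrogio2015}. The geodesic, uniqueness and non-uniqueness claims are simply asserted. The paper's route buys brevity; yours makes everything self-contained. The quantile coupling gives the upper bound, and the Monge inequality plus sorting of equal-mass atoms gives optimality of the comonotone coupling. After that, $\mu\mapsto Q_\mu$ is an isometry onto a convex subset of $L^p(0,1)$, so the geodesic property reduces to linearity and uniqueness for $p>1$ to strict convexity of $L^p$. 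That isometry is exactly what the paper uses implicitly later, in Proposition~\ref{prop:convexity_along_geodesic}, Remark~\ref{rem:raw_geodesics_mean_preserving} and Corollary~\ref{cor:adapted_geodesic_stability}. Your argument also works verbatim on $\mathcal{P}_p(\R)$. Corollary~\ref{cor:adapted_geodesic_stability} actually needs that version when $\gamma=0$, since the transformed types $\log X$ take negative values and the lemma as stated covers only $[0,\infty)$.

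Two small tightenings. First, in Step~1 take the deterministic quantization $\mu_n$ whose quantile function averages $Q_\mu$ over each interval $((i-1)/n,i/n]$. Then $Q_{\mu_n}\to Q_\mu$ in $L^p(0,1)$ is elementary, and $W_p(\mu_n,\mu)\to 0$ follows from the upper bound you already proved. With random empirical measures, the $L^p$ convergence of quantiles would need its own argument. Second, for $p=1$ your description of the example is garbled, though the idea is right. The clean justification uses Theorem~\ref{thm:KR}: $W_1(\alpha,\beta)$ is a seminorm of $\alpha-\beta$. For the mixture path $\rho_t=(1-t)\mu+t\nu$ one has $\rho_s-\rho_t=(t-s)(\mu-\nu)$, so $W_1(\rho_s,\rho_t)=|t-s|\,W_1(\mu,\nu)$ for every pair. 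With $\mu=\delta_0$ and $\nu=\delta_1$, this path differs from the displacement path $\delta_t$.

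(The numbering Theorem~2.18/Remark~2.19 you cite appears to come from Villani's 2003 \emph{Topics in Optimal Transportation}, not from the 2009 monograph in the paper's bibliography.)
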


This one-dimensional characterization has a direct implication for convexity of expectations along the canonical monotone geodesic.

\begin{proposition}[Convexity of expectations along type geodesics]\label{prop:convexity_along_geodesic}
Let $\mu,\nu\in\mathcal{P}_1([0,\infty))$ and let $(\mu_t)_{t\in[0,1]}$ be the canonical monotone Wasserstein geodesic from $\mu$ to $\nu$.
Let $\varphi:[0,\infty)\to\R$ be convex and such that $\int |\varphi|\,d\mu_t<\infty$ for all $t\in[0,1]$.
Then $t\mapsto \int \varphi(x)\,\mu_t(dx)$ is convex on $[0,1]$.
\end{proposition}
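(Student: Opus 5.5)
The plan is to pass to the quantile representation of Lemma~\ref{lem:quantile_geodesic} and write the expectation as an integral over $u\in(0,1)$. For the canonical monotone geodesic we have $Q_{\mu_t}(u)=(1-t)Q_\mu(u)+tQ_\nu(u)$. If $U$ is uniform on $(0,1)$, then $Q_{\mu_t}(U)\sim\mu_t$, so for every $t\in[0,1]$
\[
\int \varphi(x)\,\mu_t(dx)=\int_0^1 \varphi\big((1-t)Q_\mu(u)+tQ_\nu(u)\big)\,du .
\]
One point needs checking here: $Q_{\mu_t}$ takes values in $[0,\infty)$, because both $Q_\mu$ and $Q_\nu$ do. This matters since $\varphi$ is only defined on $[0,\infty)$.

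Convexity then comes from a pointwise argument. Fix $u\in(0,1)$ and set $g_u(t):=\varphi\big((1-t)Q_\mu(u)+tQ_\nu(u)\big)$. This is a convex function composed with an affine map of $t$, so it is convex on $[0,1]$. Now take $s,r\in[0,1]$ and $\lambda\in[0,1]$, and put $t=\lambda s+(1-\lambda)r$. The affine structure gives $Q_{\mu_t}(u)=\lambda Q_{\mu_s}(u)+(1-\lambda)Q_{\mu_r}(u)$. Applying convexity of $\varphi$ at this point yields
\[
g_u(t)\le \lambda g_u(s)+(1-\lambda)g_u(r)\qquad\text{for every }u .
\]
Integrating over $u$ gives the desired inequality $\Phi(t)\le\lambda\Phi(s)+(1-\lambda)\Phi(r)$, where $\Phi(t):=\int\varphi\,d\mu_t$.

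The only real obstacle is justifying the integration step. The hypothesis $\int|\varphi|\,d\mu_t<\infty$ for all $t$ means exactly that $g_\cdot(t)$ is in $L^1(du)$ for each fixed $t$, by the same change of variables applied to $|\varphi|$. Every term in the pointwise inequality is therefore integrable, and integrating an inequality between integrable functions is legitimate without any dominated convergence argument. I would also note that the hypothesis $\mu,\nu\in\mathcal{P}_1$ is used only to make sense of the geodesic via Lemma~\ref{lem:quantile_geodesic}. Since convexity is the only property needed, no differentiability of $\varphi$ and no uniqueness of the geodesic enter the argument.
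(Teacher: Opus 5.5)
Your proposal is correct and follows the paper's proof. Both arguments write $\int\varphi\,d\mu_t$ via the quantile representation, apply convexity of $\varphi$ pointwise along the affine interpolation $Q_{\mu_t}=(1-t)Q_\mu+tQ_\nu$, and integrate over $u$. You are somewhat more explicit than the paper in two places: you handle an arbitrary pair of times directly through $Q_{\mu_t}=\lambda Q_{\mu_s}+(1-\lambda)Q_{\mu_r}$, where the paper says only that the same argument applies to any pair $s,t$, and you spell out the range and integrability checks.
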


\subsection{The extreme value map as a pushforward and coupling transform}\label{subsec:pushforward}

The map $T$ admits a useful representation in cases where the limit law can be written as a measurable image of $(X,\Xi)$ for an auxiliary variable $\Xi$ independent of $X$.
This representation is the basis for the coupling constructions in Section~\ref{sec:Wstab} and for the variational arguments in Section~\ref{sec:entropic}.

\begin{proposition}[Pushforward representations in Fr\'echet and Gumbel cases]\label{prop:pushforward_rep}
\begin{enumerate}
\item \textit{Fr\'echet case} ($\gamma>0$). Let $Z_\gamma$ have the standard Fr\'echet law on $(0,\infty)$. Its cdf is $z\mapsto \exp(-z^{-1/\gamma})$, and $Z_\gamma$ is independent of $X\sim F$.
Then the heterogeneous Fr\'echet limit $Z$ satisfies
\[
Z\ \overset{d}{=}\ X^\gamma Z_\gamma,
\]
and its cdf is $z\mapsto P_0(z^{-1/\gamma})$ on $(0,\infty)$.
Equivalently, the canonically normalized variable
\[
Z_{\mathrm{GEV}}:=\frac{Z-1}{\gamma}
\]
is distributed according to $H_{\gamma,F}$.

\item \textit{Gumbel case} ($\gamma=0$). Let $E$ be exponential with mean one, independent of $X\sim F$.
Define $Z:=\log(X/E)$.
Then $Z$ has cdf $x\mapsto P_0(e^{-x})$, hence $Z$ is distributed according to $H_{0,F}$.
\end{enumerate}
\end{proposition}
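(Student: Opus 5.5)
The plan is to compute each cdf directly by conditioning on $X$ and then applying Tonelli/Fubini, using independence of the auxiliary variable ($Z_\gamma$ or $E$) from $X$. In both parts the conditional cdf given $X=x$ has the form $\exp(-x\,u)$ for an explicit $u\ge 0$. Integrating against $F$ then produces $P_0(u)$, which is the Laplace-mixture form $H_{\gamma,F}=P_0\circ v_\gamma$ from Proposition~\ref{prop:HEV}. Throughout, $\Pr(X=0)=0$ (Assumption~\ref{ass:mixedPoisson}) ensures that $X^\gamma$ and $\log X$ are a.s. finite and positive/real, so no boundary atoms arise.

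\textbf{Fréchet part.} Fix $z>0$. Conditional on $X=x>0$, compute
\[
\Pr(x^\gamma Z_\gamma\le z)=\Pr\bigl(Z_\gamma\le z x^{-\gamma}\bigr)=\exp\bigl(-(zx^{-\gamma})^{-1/\gamma}\bigr)=\exp\bigl(-x z^{-1/\gamma}\bigr).
\]
Taking expectations over $X$ gives $\Pr(X^\gamma Z_\gamma\le z)=P_0(z^{-1/\gamma})$, which reproduces the computation behind \eqref{eq:product_rep}. For the normalized variable, set $z=1+\gamma x$ for $1+\gamma x>0$. Then
\[
\Pr(Z_{\mathrm{GEV}}\le x)=\Pr(Z\le 1+\gamma x)=P_0\bigl((1+\gamma x)^{-1/\gamma}\bigr)=P_0(v_\gamma(x))=H_{\gamma,F}(x).
\]
For $1+\gamma x\le 0$, both sides vanish: $Z>0$ a.s., and by the stated convention $H_{\gamma,F}(x)=0$ there, consistent with $v_\gamma=+\infty$ and $P_0(\infty)=\Pr(X=0)=0$.

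\textbf{Gumbel part.} Fix $x\in\R$. Conditional on $X=y>0$, compute
\[
\Pr\bigl(\log(y/E)\le x\bigr)=\Pr\bigl(E\ge y e^{-x}\bigr)=\exp(-y e^{-x}).
\]
Integrating over $F$ gives $P_0(e^{-x})=P_0(v_0(x))=H_{0,F}(x)$. The variable $Z$ is well defined and real-valued a.s. because $E>0$ and $X>0$ a.s.

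There is no real obstacle here. The only points requiring care are the measure-theoretic justification for conditioning (a product measure plus Tonelli, with bounded nonnegative integrands) and the boundary conventions at $1+\gamma x\le 0$ and $X=0$. The exclusion of an atom at zero handles the latter.
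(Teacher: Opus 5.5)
Your proposal is correct and is essentially the paper's proof: you condition on $X$, obtain the conditional cdf $\exp(-X z^{-1/\gamma})$ (respectively $\exp(-Xe^{-x})$), take expectations to get $P_0$, and use $\{Z_{\mathrm{GEV}}\le x\}=\{Z\le 1+\gamma x\}$ for the normalized variable. Your explicit check of the region $1+\gamma x\le 0$, using $\Pr(X=0)=0$, is a small addition that the paper leaves implicit.
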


A key implication is that many extreme outcome functionals are linear in the mixing distribution $F$ once the auxiliary variable is integrated out.

\begin{lemma}[Kernel representation of expectations in the Fr\'echet case]\label{lem:kernel_expectation_frechet}
Assume $\gamma>0$ and let $Z=X^\gamma Z_\gamma$ be as in Proposition~\ref{prop:pushforward_rep}.
Let $\psi:(0,\infty)\to\R$ be measurable and assume $\E[|\psi(Z)|]<\infty$.
Define the kernel
\[
\kappa_\psi(x):=\E\!\left[\psi(x^\gamma Z_\gamma)\right],\qquad x>0.
\]
Then
\[
\E[\psi(Z)]=\int_0^\infty \kappa_\psi(x)\,F(dx).
\]
\end{lemma}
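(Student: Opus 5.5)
The plan is to combine the product representation of Proposition~\ref{prop:pushforward_rep} with Fubini--Tonelli. By that proposition, $Z\overset{d}{=}X^\gamma Z_\gamma$ with $X\sim F$ and $Z_\gamma$ independent of $X$, so the joint law of $(X,Z_\gamma)$ is the product measure $F\otimes \Law(Z_\gamma)$ on $(0,\infty)^2$. Here we use $\Pr(X=0)=0$ from Assumption~\ref{ass:mixedPoisson}, which lets $F$ be viewed as a measure on $(0,\infty)$, so that $x^\gamma z$ is well defined and positive. Since the law of $\psi(Z)$ depends only on the law of $Z$, it suffices to compute $\E[\psi(X^\gamma Z_\gamma)]$ on this product space.

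The first step is measurability. The map $(x,z)\mapsto x^\gamma z$ is continuous from $(0,\infty)^2$ to $(0,\infty)$, so $g(x,z):=\psi(x^\gamma z)$ is jointly Borel measurable. The second step applies Tonelli to $|g|$:
\[
\int_0^\infty \E\bigl[|\psi(x^\gamma Z_\gamma)|\bigr]\,F(dx)=\E\bigl[|\psi(Z)|\bigr]<\infty .
\]
This has two consequences. First, $\kappa_{|\psi|}(x)<\infty$ for $F$-a.e.\ $x$, so $\kappa_\psi(x)$ is well defined and finite $F$-a.e. Second, $\kappa_\psi\in L^1(F)$ and is measurable, since Fubini gives measurability of the section integrals. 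The third step applies Fubini to $g$ itself, which yields $\E[\psi(Z)]=\int \kappa_\psi\,dF$.

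The only delicate point is that $\kappa_\psi(x)$ may fail to exist for some $x$. This can happen if $\psi$ is not integrable against $\Law(x^\gamma Z_\gamma)$ for $x$ in an $F$-null set. We handle it by the Tonelli step: the kernel is finite off an $F$-null set, and we may set it to zero there without affecting the integral. Apart from this, the argument is routine and needs no lower support bound, consistent with Table~\ref{tab:regime_map}.
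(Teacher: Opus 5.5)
Your proof is correct and is essentially the paper's argument: the paper writes $\E[\psi(Z)]=\E\!\left[\E[\psi(X^\gamma Z_\gamma)\mid X]\right]=\E[\kappa_\psi(X)]$, which is the same Fubini step you carry out on the product measure $F\otimes\Law(Z_\gamma)$. Your extra Tonelli step, showing $\kappa_\psi$ is measurable and finite off an $F$-null set, makes explicit a point the paper leaves implicit, since the paper's one-line proof and its definition of $\kappa_\psi$ tacitly assume the kernel exists for every $x>0$.
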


Lemma~\ref{lem:kernel_expectation_frechet} is conceptually important for design:
whenever the planner objective can be written as $\E[\psi(Z)]$ with $Z$ in product form,
the dependence on $F$ is through a linear functional with kernel $\kappa_\psi$.

\subsection[Differentiating the map F to H(gamma,F)]{Differentiating the map $F\mapsto H_{\gamma,F}$}\label{subsec:frechet_derivative}

Because $H_{\gamma,F}(x)$ is an integral of a fixed kernel against $F$,
the operator $T$ has an explicit variational derivative.
Let $\mathcal{M}([0,\infty))$ denote the space of finite signed measures on $[0,\infty)$.

\begin{proposition}[G\^ateaux derivative of the extreme value operator]\label{prop:gateaux}
Fix $\gamma$ and consider the map $T:F\mapsto H_{\gamma,F}$.
Let $\nu\in\mathcal{M}([0,\infty))$ satisfy $\nu([0,\infty))=0$ and $\int_0^\infty u\,\nu(du)=0$, and let $F_\varepsilon:=F+\varepsilon \nu$ be a perturbation for $\varepsilon$ in a neighborhood of $0$ such that $F_\varepsilon$ remains a probability measure.
Then for every $x$ with $v_\gamma(x)<\infty$,
\[
\left.\frac{d}{d\varepsilon} H_{\gamma,F_\varepsilon}(x)\right|_{\varepsilon=0}
=\int_{0}^{\infty} e^{-v_\gamma(x)u}\,\nu(du).
\]
\end{proposition}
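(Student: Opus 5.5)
The plan is to exploit the fact that, for fixed $x$, the map $F\mapsto H_{\gamma,F}(x)$ is the restriction of a \emph{linear} functional on finite signed measures. Fix $x$ with $v_\gamma(x)<\infty$ and write $z:=v_\gamma(x)\in[0,\infty)$. Using the definition $H_{\gamma,F}(x)=P_0(v_\gamma(x))=\int_0^\infty e^{-zu}\,F(du)$, I will extend it to any $\mu\in\mathcal{M}([0,\infty))$ by $L_z(\mu):=\int_0^\infty e^{-zu}\,\mu(du)$. This integral is well defined and finite, because the kernel $u\mapsto e^{-zu}$ is bounded by $1$ on $[0,\infty)$ for $z\ge 0$, and hence $|L_z(\mu)|\le|\mu|([0,\infty))<\infty$.

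The first step is to check that the perturbation stays inside the domain where $H_{\gamma,F_\varepsilon}(x)=L_z(F_\varepsilon)$. The hypotheses ensure that $F_\varepsilon=F+\varepsilon\nu$ is a probability measure for $\varepsilon$ near $0$. Moreover, $\nu([0,\infty))=0$ and $\int u\,\nu(du)=0$ preserve total mass and the mean-one normalization. The convention $\Pr(X=0)=0$ plays no role in the formula, since the kernel is bounded and continuous at $u=0$. So, for $\varepsilon$ in the admissible neighborhood, the identity $H_{\gamma,F_\varepsilon}(x)=P_0^{(\varepsilon)}(z)$ holds by the definition of the operator.

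The second step is to apply linearity:
\[
H_{\gamma,F_\varepsilon}(x)=\int_0^\infty e^{-zu}\,F(du)+\varepsilon\int_0^\infty e^{-zu}\,\nu(du).
\]
This expression is affine in $\varepsilon$. Its difference quotient at $\varepsilon=0$ is therefore exactly $\int e^{-zu}\,\nu(du)$ for every admissible $\varepsilon\neq0$, and the derivative exists and equals the claimed expression. No dominated-convergence argument is needed, because there is no limit to justify.

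The only point requiring care is the \emph{interpretation} of the derivative at the boundary. The admissible set of $\varepsilon$ may be one-sided, for instance when $F$ has atoms that $\nu$ removes. In that case I will read the statement as the one-sided derivative, which is still the same constant. I also note that the mean-zero condition on $\nu$ is not needed for the formula itself; it only keeps $F_\varepsilon$ on the economic slice. I expect the boundary interpretation to be the main subtlety, and the rest to be immediate.
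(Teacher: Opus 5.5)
Your proposal is correct and matches the paper's proof: both write $H_{\gamma,F_\varepsilon}(x)=\int e^{-v_\gamma(x)u}\,F(du)+\varepsilon\int e^{-v_\gamma(x)u}\,\nu(du)$ by linearity of the integral in the measure, and then differentiate this affine function of $\varepsilon$. Your additional remarks are sound but not needed by the paper: the boundedness of the kernel, which makes the $\nu$-integral finite, the case $v_\gamma(x)=0$, and the one-sided reading at the boundary.
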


\begin{remark}[Why the mean constraint is stated here]
The derivative formula itself is algebraically valid without the restriction $\int u\,\nu(du)=0$.
We impose it in the proposition because the economic domain studied in the paper fixes $\E[X]=1$, so admissible perturbations should remain on that slice.
\end{remark}

\begin{remark}[Link to entropic design]
Proposition~\ref{prop:gateaux} identifies the kernel that governs marginal changes in the distribution of extremes when the type distribution is perturbed.
In Section~\ref{sec:entropic}, the same kernel reappears in the entropy-regularized design problem whenever the planner objective is expressed through cdf levels or their smooth linear combinations.
\end{remark}

\subsection{Geodesics and convexity}\label{subsec:geodesics}

We now combine the one-dimensional canonical Wasserstein geodesics with convexity of the Laplace kernel.

\begin{proposition}[Convexity of heterogeneous extreme value laws along type geodesics]\label{prop:convexity_H_along_geodesic}
Let $F^0,F^1\in\mathcal{P}_1([0,\infty))$ with $\E[X]=1$ and let $(F^t)_{t\in[0,1]}$ be the canonical monotone Wasserstein geodesic between them.
Then for every $z\ge 0$, the map $t\mapsto P_0^{\,t}(z):=\int e^{-zx}\,F^t(dx)$ is convex on $[0,1]$.
Consequently, for every $x$ such that $v_\gamma(x)<\infty$, the map
\[
t\ \mapsto\ H_{\gamma,F^t}(x)=P_0^{\,t}\!\big(v_\gamma(x)\big)
\]
is convex on $[0,1]$.
\end{proposition}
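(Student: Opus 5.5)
The plan is to reduce the statement to Proposition~\ref{prop:convexity_along_geodesic}, applied with the convex test function $\varphi_z(x):=e^{-zx}$, followed by an evaluation at $z=v_\gamma(x)$.

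Fix $z\ge 0$. The map $x\mapsto e^{-zx}$ is convex on $[0,\infty)$ because its second derivative is $z^2e^{-zx}\ge 0$. It is also bounded, since $0\le e^{-zx}\le 1$. The integrability condition $\int|\varphi_z|\,dF^t<\infty$ in Proposition~\ref{prop:convexity_along_geodesic} therefore holds trivially for every $t$. The geodesic $(F^t)$ lies in $\mathcal{P}_1$, because $Q_{F^t}=(1-t)Q_{F^0}+tQ_{F^1}$ is integrable, and $F^t$ lives on $[0,\infty)$ because both endpoint quantile functions are nonnegative. Proposition~\ref{prop:convexity_along_geodesic} then gives convexity of $t\mapsto P_0^{\,t}(z)$ directly.

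To make the argument self-contained, I would also spell out the one-line mechanism through Lemma~\ref{lem:quantile_geodesic}. With $U$ uniform on $(0,1)$, we have $P_0^{\,t}(z)=\E\big[\varphi_z\big((1-t)Q_{F^0}(U)+tQ_{F^1}(U)\big)\big]$. For each fixed $u$, the integrand is the composition of a convex function with an affine map of $t$, so it is convex in $t$. Convexity is preserved under integration in $u$, and boundedness makes the integral finite.

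For the second claim, fix $x$ with $v_\gamma(x)<\infty$ and set $z=v_\gamma(x)\in[0,\infty)$. Then $H_{\gamma,F^t}(x)=P_0^{\,t}(v_\gamma(x))$ by definition of the operator $T$, so its convexity in $t$ is the first claim at this particular $z$.

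The only real point of care is a well-definedness issue. $H_{\gamma,F^t}$ is defined in the paper for mean-one $F$, so one should check that the geodesic stays on the economic slice. It does: $\E_{F^t}[X]=\int_0^1 Q_{F^t}(u)\,du=(1-t)\cdot 1+t\cdot 1=1$ by linearity of the quantile interpolation. Moreover, $\Pr_{F^t}(X=0)=0$ for $t\in(0,1)$, since $Q_{F^t}(u)=0$ would force both endpoint quantiles to vanish at $u$, which has measure zero. Even without these checks, the formula $P_0^{\,t}(v_\gamma(x))$ makes sense, so no genuine obstacle arises.
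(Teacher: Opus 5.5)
Your proposal is correct and follows the paper's proof: apply Proposition~\ref{prop:convexity_along_geodesic} with the bounded convex test function $\varphi_z(x)=e^{-zx}$, then evaluate at $z=v_\gamma(x)$. Your extra checks are also correct but not needed for the convexity claim itself: that the raw-space geodesic stays on the mean-one slice (the paper records this in Remark~\ref{rem:raw_geodesics_mean_preserving}), and that it has no atom at zero under the section's standing convention.
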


Proposition~\ref{prop:convexity_H_along_geodesic} provides a disciplined way to compare extremes along canonical paths in the space of heterogeneity distributions.
It is also useful for identification and policy discussions:
when $F$ is estimated, geodesic perturbations provide a structured local sensitivity analysis.

\begin{remark}[Raw-space geodesics preserve the mean one slice]\label{rem:raw_geodesics_mean_preserving}
If $F^0$ and $F^1$ both satisfy $\int x\,F^k(dx)=1$, then the canonical monotone Wasserstein geodesic $(F^t)_{t\in[0,1]}$ in Proposition~\ref{prop:convexity_H_along_geodesic} also satisfies $\int x\,F^t(dx)=1$ for every $t$.
Indeed, in one dimension the mean equals the integral of the quantile function, and quantiles interpolate linearly along this path.
This mean-preserving property is specific to the raw type geometry used in this section and will contrast with the adapted geodesics of Section~\ref{sec:Wstab}.
\end{remark}

We also record a moment functional that will recur in heavy tailed applications.

\begin{proposition}[Geodesic convexity of negative power moments]\label{prop:neg_moment_geodesic_convex}
Fix $\rho>0$.
Let $F^0,F^1$ be supported on $[a,\infty)$ for some $a>0$, and let $(F^t)_{t\in[0,1]}$ be their canonical monotone Wasserstein geodesic.
Then the map
\[
t\ \mapsto\ \int x^{-\rho}\,F^t(dx)
\]
is convex on $[0,1]$.
\end{proposition}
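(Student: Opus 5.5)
Along the canonical monotone geodesic, Lemma~\ref{lem:quantile_geodesic} gives the quantile function explicitly as $Q_t(u)=(1-t)Q_0(u)+tQ_1(u)$ for $u\in(0,1)$. Here $Q_0,Q_1$ are the quantile functions of $F^0,F^1$. The plan is therefore to write the functional as an integral over quantiles,
\[
\int x^{-\rho}\,F^t(dx)=\int_0^1 \big((1-t)Q_0(u)+tQ_1(u)\big)^{-\rho}\,du ,
\]
which holds because $F^t$ is the law of $Q_t(U)$ with $U$ uniform on $(0,1)$. Convexity in $t$ then follows pointwise in $u$.

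First, the support condition gives $Q_0(u),Q_1(u)\ge a>0$ for every $u\in(0,1)$. Hence $Q_t(u)\ge a$ for all $t\in[0,1]$, since a convex combination of numbers at least $a$ is at least $a$.

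Second, the function $\varphi(x)=x^{-\rho}$ is convex on $(0,\infty)$, because $\varphi''(x)=\rho(\rho+1)x^{-\rho-2}>0$. Moreover, $t\mapsto Q_t(u)$ is affine. The composition of a convex function with an affine map is convex, so for each fixed $u$ the map $t\mapsto Q_t(u)^{-\rho}$ is convex on $[0,1]$.

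Third, the integrand is bounded by $a^{-\rho}$ uniformly in $u$ and $t$, so all integrals are finite. Integrating the pointwise convexity inequality
\[
Q_{(1-\lambda)s+\lambda r}(u)^{-\rho}\le (1-\lambda)Q_s(u)^{-\rho}+\lambda Q_r(u)^{-\rho}
\]
over $u\in(0,1)$ gives the desired convexity of $t\mapsto \int x^{-\rho}\,F^t(dx)$.

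Alternatively, one can invoke Proposition~\ref{prop:convexity_along_geodesic} directly, applied to the function $\varphi(x)=\min(x,a)^{-\rho}$ extended convexly to all of $[0,\infty)$. This extension is needed because $x^{-\rho}$ is not finite at $0$. It agrees with $x^{-\rho}$ on $[a,\infty)$, where every $F^t$ lives, and is convex and finite on $[0,\infty)$. Concretely, one takes the tangent-line extension of $x^{-\rho}$ at $a$ on $[0,a]$; I will use this version rather than $\min(x,a)^{-\rho}$, whose convexity I would otherwise have to check separately.

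The only delicate point is this domain issue: Proposition~\ref{prop:convexity_along_geodesic} requires $\varphi$ defined and convex on $[0,\infty)$. The lower support bound $a$ is exactly what makes both the extension argument and the integrability of the functional work. The direct quantile computation above avoids the issue entirely, so I would present that one.

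Note that the statement imposes no mean-one condition, and the argument does not use one.
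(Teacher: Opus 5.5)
Your argument is correct and is essentially the paper's. The paper applies Proposition~\ref{prop:convexity_along_geodesic} with $\varphi(x)=x^{-\rho}$, and that proposition's proof is exactly your pointwise-in-$u$ quantile convexity argument; writing it out directly also handles cleanly the fact that $x^{-\rho}$ is not a finite convex function on all of $[0,\infty)$, which the paper treats only informally via boundedness on $[a,\infty)$. One small slip in your aside: $\min(x,a)^{-\rho}$ does not agree with $x^{-\rho}$ on $[a,\infty)$ (it equals $a^{-\rho}$ there and blows up at $0$), but nothing depends on it since you switch to the tangent-line extension and ultimately present the direct computation.
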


\begin{remark}[Where this enters later]
In Fr\'echet settings, many tail sensitive functionals can be written in terms of power moments of $X$ through the product representation in Proposition~\ref{prop:pushforward_rep} and the kernel representation in Lemma~\ref{lem:kernel_expectation_frechet}.
The convexity statements above alow one to control these objects along structured perturbations of $F$.
\end{remark}

\section{Wasserstein stability and coupling bounds}\label{sec:Wstab}

This section contains the genuinely geometric layer of the paper.
The Laplace-mixture representation already gives several pointwise and order-theoretic statements.
What transport adds is different: by coupling transformed types with a common exponential shock, we obtain quantitative control of whole induced laws, canonical ambient interpolation paths, and robustness statements with explicit moduli.
The key object remains the operator $T:F\mapsto H_{\gamma,F}$.
The substantive content is not merely that one may rewrite the type space after a monotone transform, but that the operator representation and the canonical coupling make this transformed geometry propagate into law-level Wasserstein bounds for heterogeneous extremes.
Complementary quantitative control of extreme-value approximation in Wasserstein-type distances has also recently been developed with Stein-type methods for Fr\'echet laws; see \citet{MansanarezPolySwan2025}.

\subsection{Wasserstein preliminaries and duality}\label{subsec:wass_prelim}

We work on $\R$ equipped with the metric $d(x,y)=|x-y|$.
For $p\ge 1$ and $\mu,\nu\in\cP_p(\R)$,
\[
W_p(\mu,\nu):=\left(\inf_{\pi\in\Gamma(\mu,\nu)} \int_{\R^2} |x-y|^p\,\pi(dx,dy)\right)^{1/p},
\]
where $\Gamma(\mu,\nu)$ is the set of couplings with marginals $(\mu,\nu)$.
In one dimension, $W_p$ admits the quantile representation, and for $p=1$ it admits the Kantorovich-Rubinstein dual formulation.
We collect the statements used below in Appendix~\ref{app:tools_wass}.

\subsection{A canonical coupling representation}\label{subsec:canonical_coupling}

Recall the canonical GEV parametrization from Section~\ref{subsec:evt_setup} and define
\[
v_\gamma(x):=-\log H_\gamma(x).
\]
Let $w_\gamma$ denote the inverse of $v_\gamma$ on its range.
For the canonical GEV family,
\[
w_\gamma(t)=
\begin{cases}
\dfrac{t^{-\gamma}-1}{\gamma}, & \gamma\neq 0,\\[0.6em]
-\log t, & \gamma=0,
\end{cases}
\qquad t>0.
\]

\begin{proposition}[Canonical representation of $H_{\gamma,F}$]\label{prop:canonical_rep}
Let $X\sim F$ and let $E\sim\mathrm{Exp}(1)$ be independent of $X$.
Define
\[
Z_{\gamma,F}:=w_\gamma\!\left(\frac{E}{X}\right).
\]
Then $Z_{\gamma,F}$ has cdf $H_{\gamma,F}$.

Equivalently,
\[
Z_{\gamma,F}=
\begin{cases}
\dfrac{X^\gamma E^{-\gamma}-1}{\gamma}, & \gamma\neq 0,\\[0.6em]
\log X-\log E, & \gamma=0,
\end{cases}
\]
and this random variable has cdf $H_{\gamma,F}$.
\end{proposition}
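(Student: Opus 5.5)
The plan is to compute the cdf of $Z_{\gamma,F}$ directly by conditioning on $X$, using the monotonicity of $w_\gamma$ to turn the event $\{Z_{\gamma,F}\le x\}$ into an event for $E/X$.

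\textbf{Monotonicity and inversion.} First check that $w_\gamma:(0,\infty)\to\{1+\gamma x>0\}$ is a strictly decreasing bijection inverse to $v_\gamma$.
\begin{itemize}
\item For $\gamma\neq0$, $t\mapsto t^{-\gamma}$ is decreasing when $\gamma>0$ and increasing when $\gamma<0$. After dividing by $\gamma$, $w_\gamma$ is decreasing in both cases.
\item For $\gamma=0$, $w_0(t)=-\log t$ is decreasing.
\item Solving $t=(1+\gamma x)^{-1/\gamma}$ for $x$ gives $x=(t^{-\gamma}-1)/\gamma$, so $w_\gamma=v_\gamma^{-1}$. The case $\gamma=0$ is immediate.
\end{itemize}
Hence, for $x$ in the interior of the support, $\{w_\gamma(E/X)\le x\}=\{E/X\ge v_\gamma(x)\}=\{E\ge X v_\gamma(x)\}$.

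\textbf{Conditioning on $X$.} Because $E$ is independent of $X$ and $X>0$ a.s. (Assumption~\ref{ass:mixedPoisson}), we have $\Pr(E\ge Xv\mid X)=e^{-vX}$. Taking expectations gives
\[
\Pr\big(Z_{\gamma,F}\le x\big)=\E\big[e^{-v_\gamma(x)X}\big]=P_0\big(v_\gamma(x)\big)=H_{\gamma,F}(x).
\]
The explicit formulas then follow by substitution:
\begin{itemize}
\item $(E/X)^{-\gamma}=X^\gamma E^{-\gamma}$;
\item $-\log(E/X)=\log X-\log E$.
\end{itemize}

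\textbf{Boundary conventions.} The main, if minor, obstacle is to check that the cdf matches at points outside $\{1+\gamma x>0\}$. Since $E/X\in(0,\infty)$ a.s., $Z_{\gamma,F}$ lies a.s. in $w_\gamma((0,\infty))$.
\begin{itemize}
\item For $\gamma>0$, this range is $(-1/\gamma,\infty)$. The cdf is therefore $0$ for $x\le-1/\gamma$, which matches $H_{\gamma,F}$ under $v_\gamma=\infty$ and $P_0(\infty)=0$. The latter uses $\Pr(X=0)=0$.
\item For $\gamma<0$, the range is $(-\infty,-1/\gamma)$. The cdf is therefore $1$ for $x\ge-1/\gamma$, which matches $v_\gamma=0$ and $P_0(0)=1$.
\end{itemize}
This also shows that the cdf is continuous in $x$, by dominated convergence, so no continuity-point caveat arises.
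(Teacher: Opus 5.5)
Your proof is correct and follows the paper's argument. You condition on $X$, use that $w_\gamma$ is decreasing to rewrite $\{Z_{\gamma,F}\le x\}$ as $\{E\ge Xv_\gamma(x)\}$, and integrate $e^{-Xv_\gamma(x)}$ against $F$. Your treatment of the boundary points where $v_\gamma(x)\in\{0,\infty\}$, via the range of $w_\gamma$ and $\Pr(X=0)=0$, is a small completion: the paper's proof only treats $v_\gamma(x)\in(0,\infty)$ and leaves those points implicit.
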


Proposition~\ref{prop:canonical_rep} reduces the construction of couplings for $H_{\gamma,F}$ to couplings on the type space.
Given any coupling of $X_1\sim F_1$ and $X_2\sim F_2$, one obtains a coupling of the corresponding extreme laws by using a common exponential variable $E$.

\subsection{Stability of the extreme value operator}\label{subsec:stab_HEV}

The canonical representation depends on $X$ through the transform $x\mapsto x^\gamma$ when $\gamma\neq 0$, and through $x\mapsto \log x$ when $\gamma=0$.
This motivates metrics on the type space that are adapted to the index $\gamma$.
For the geometric results in this section, the mean one normalization from Assumption~\ref{ass:mixedPoisson} is not needed.
We therefore temporarily enlarge the domain from the normalized economic type space to the ambient space of probability measures on $(0,\infty)$ for which the relevant transformed $p$th moment is finite.
That ambient domain is the natural one for the adapted metric and, crucially, the one on which the geodesic construction is closed.

\begin{definition}[Type metrics adapted to $\gamma$]\label{def:type_metrics}
Fix $p\ge 1$ and $\gamma\in\R$.
Let $s_\gamma:(0,\infty)\to\R$ be defined by
\[
s_\gamma(x):=
\begin{cases}
x^\gamma, & \gamma\neq 0,\\
\log x, & \gamma=0.
\end{cases}
\]
For $F_1,F_2$ on $(0,\infty)$ such that the pushforwards $F_i\circ s_\gamma^{-1}$ belong to $\cP_p(\R)$, define
\[
d_{\gamma,p}(F_1,F_2):=
W_p\!\left(F_1\circ s_\gamma^{-1},\,F_2\circ s_\gamma^{-1}\right).
\]
\end{definition}

\begin{proposition}[Domain of the adapted metric and behavior near zero]\label{prop:domain_dgamma}
Fix $p\ge 1$ and $\gamma\in\R$.
For probability measures $F_1,F_2$ on $(0,\infty)$, the distance $d_{\gamma,p}(F_1,F_2)$ is finite if and only if
\[
\int |s_\gamma(x)|^p\,F_i(dx)<\infty,\qquad i\in\{1,2\}.
\]
Equivalently:
\begin{enumerate}
\item if $\gamma>0$, one needs $\int x^{\gamma p}\,F_i(dx)<\infty$ for $i=1,2$;
\item if $\gamma=0$, one needs $\int |\log x|^p\,F_i(dx)<\infty$ for $i=1,2$;
\item if $\gamma<0$, one needs $\int x^{-|\gamma|p}\,F_i(dx)<\infty$ for $i=1,2$.
\end{enumerate}
In particular, mass near zero is harmless for $\gamma>0$ but can make $d_{\gamma,p}$ infinite for $\gamma\le 0$ even when an ordinary Wasserstein distance on the raw type space is finite.
\end{proposition}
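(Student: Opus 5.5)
The plan is to reduce everything to the standard fact that a Wasserstein distance between two probability measures on $\R$ is finite exactly when both lie in $\cP_p(\R)$, and then read off the transformed moment conditions case by case. Write $\mu_i:=F_i\circ s_\gamma^{-1}$, the law of $s_\gamma(X_i)$ with $X_i\sim F_i$. This is a probability measure on $\R$, well defined because $F_i$ lives on $(0,\infty)$, where $s_\gamma$ is finite and measurable. By the change-of-variables formula,
\[
\int_\R |y|^p\,\mu_i(dy)=\int_{(0,\infty)} |s_\gamma(x)|^p\,F_i(dx),
\]
so finiteness of the transformed $p$th moment is literally the statement $\mu_i\in\cP_p(\R)$.

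\emph{Sufficiency.} If both $\mu_i$ have finite $p$th moments, I would use the product coupling $\mu_1\otimes\mu_2$ together with $|y_1-y_2|^p\le 2^{p-1}(|y_1|^p+|y_2|^p)$. This bounds $W_p^p(\mu_1,\mu_2)$ by $2^{p-1}$ times the sum of the two moments, which is finite.

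\emph{Necessity.} Suppose $d_{\gamma,p}(F_1,F_2)<\infty$. I interpret the distance through the infimum over couplings, or equivalently through the quantile formula of Lemma~\ref{lem:quantile_geodesic} applied on $\R$. Choose a coupling $\pi$ with $\int|y_1-y_2|^p\,d\pi<\infty$.

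This direction is the one subtle point: finiteness of $W_p$ alone does not force finite moments. For example, two equal measures without a $p$th moment have $W_p=0$. So the statement must be read with the convention built into Definition~\ref{def:type_metrics}. There, $d_{\gamma,p}$ is defined only when the pushforwards lie in $\cP_p(\R)$, and it is set to $+\infty$ otherwise. Under that convention, finiteness presupposes membership in $\cP_p(\R)$, and the equivalence follows.

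I would state this convention explicitly. I would also note the weaker fact that does hold unconditionally: if one $\mu_i$ is in $\cP_p$ and the distance is finite, then so is the other. This follows from Minkowski's inequality in $L^p(\pi)$, via $\|y_2\|_{L^p}\le\|y_1\|_{L^p}+\|y_1-y_2\|_{L^p}$.

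The three enumerated cases are immediate substitutions. For $\gamma>0$, $|s_\gamma(x)|^p=x^{\gamma p}$. For $\gamma=0$, it is $|\log x|^p$. For $\gamma<0$, it is $x^{\gamma p}=x^{-|\gamma|p}$.

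For the final remark, two observations suffice.
\begin{itemize}
\item For $\gamma>0$, $x^{\gamma p}$ is bounded near $0$, so only the upper tail matters.
\item For $\gamma\le0$, the integrand blows up at $0$. A concrete example shows the contrast with raw space: take $F_1$ with density proportional to $x^{-1/2}$ on $(0,1]$, mixed with mass at a larger point to restore mean one. This measure has all raw moments, so the ordinary $W_p(F_1,\delta_1)<\infty$. Yet $\int x^{-|\gamma|p}\,dF_1=\infty$ once $|\gamma|p\ge 1/2$. For $\gamma=0$, a density $\propto x^{-1}|\log x|^{-2}$ near zero gives $\int|\log x|^p\,dF_1=\infty$ for $p\ge1$.

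Hence $d_{\gamma,p}(F_1,\delta_1)=\infty$ while the raw Wasserstein distance is finite.
\end{itemize}
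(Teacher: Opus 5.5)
Your argument is correct and follows the same route as the paper's proof. Change of variables identifies $\int|s_\gamma|^p\,dF_i$ with the $p$th moment of $F_i\circ s_\gamma^{-1}$, finiteness of $W_p$ is tied to membership in $\cP_p(\R)$, and the three cases are then substitutions. Your caveat on necessity is well taken: the paper simply asserts that a Wasserstein distance on $\R$ is finite if and only if both marginals lie in $\cP_p(\R)$, which holds only under the convention built into Definition~\ref{def:type_metrics}, since e.g.\ $W_p(\mu,\mu)=0$ for any $\mu$. Your examples with the Dirac benchmark $\delta_1$ for the final clause, which the paper's proof does not address, remain valid even without that convention, because the product coupling is the only coupling with $\delta_1$. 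For $0<|\gamma|p<1/2$, just use a density $\propto x^{\alpha-1}$ with $0<\alpha\le|\gamma|p$.
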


\begin{remark}[Low-access types and the limits of adapted geometry]\label{rem:low_access_domain}
The transport theorem therefore remains available for very small access types as long as the relevant transformed moments are finite.
For $\gamma=0$ this is a logarithmic integrability requirement; for $\gamma<0$ it is an inverse-moment requirement of order $|\gamma|p$.
If an application permits heavier mass near zero than those conditions allow, the adapted metric is no longer the right comparison tool.
One should then revert to the Laplace-order statements, the raw space mean-preserving geodesics of Section~\ref{sec:geometry}, or the finite horizon and pointwise bounds in this section, none of which require $d_{\gamma,p}<\infty$.
\end{remark}

We now state the main stability bound.
The constants are explicit and reflect moment restrictions intrinsic to extreme-value tails.

\begin{theorem}[Wasserstein stability of $F\mapsto H_{\gamma,F}$]\label{thm:stability_HEV}
Fix $\gamma\in\R$ and $p\ge 1$.
Let $F_1,F_2$ be probability measures on $(0,\infty)$ such that $d_{\gamma,p}(F_1,F_2)<\infty$.

If $\gamma>0$, assume in addition that $p\gamma<1$.
If $\gamma\neq 0$, let $E\sim\mathrm{Exp}(1)$ and define $\Xi_\gamma:=E^{-\gamma}$.
Then
\[
W_p\!\left(H_{\gamma,F_1},H_{\gamma,F_2}\right)
\le
\frac{\left(\E[|\Xi_\gamma|^p]\right)^{1/p}}{|\gamma|}\, d_{\gamma,p}(F_1,F_2).
\]

If $\gamma=0$, then
\[
W_p\!\left(H_{0,F_1},H_{0,F_2}\right)\le d_{0,p}(F_1,F_2).
\]
\end{theorem}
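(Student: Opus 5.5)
The plan is a synchronous coupling built on Proposition~\ref{prop:canonical_rep}: couple the types optimally after applying $s_\gamma$, and drive both extreme laws with one common exponential shock. Let $\mu_i:=F_i\circ s_\gamma^{-1}\in\cP_p(\R)$. Because $d_{\gamma,p}(F_1,F_2)<\infty$, Lemma~\ref{lem:quantile_geodesic} gives an optimal coupling. Concretely, take $U\sim\mathrm{Unif}(0,1)$ and set $X_i:=Q_{F_i}(U)$. Since $s_\gamma$ is monotone (increasing for $\gamma\ge 0$, decreasing for $\gamma<0$), the pair $(s_\gamma(X_1),s_\gamma(X_2))$ is comonotone or antitone. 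In the antitone case one uses $Q_{F_i}(1-U)$ instead, so the pair is again comonotone. In either case $\E|s_\gamma(X_1)-s_\gamma(X_2)|^p=d_{\gamma,p}(F_1,F_2)^p$. Take $E\sim\mathrm{Exp}(1)$ independent of $U$ and define $Z_i:=w_\gamma(E/X_i)$. By Proposition~\ref{prop:canonical_rep}, $Z_i$ has law $H_{\gamma,F_i}$, so $(Z_1,Z_2)$ is a coupling and
\[
W_p^p\bigl(H_{\gamma,F_1},H_{\gamma,F_2}\bigr)\le \E|Z_1-Z_2|^p .
\]

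Next compute the difference pathwise. For $\gamma\neq0$, the explicit form in Proposition~\ref{prop:canonical_rep} gives $Z_1-Z_2=\gamma^{-1}E^{-\gamma}\bigl(X_1^\gamma-X_2^\gamma\bigr)=\gamma^{-1}\Xi_\gamma\bigl(s_\gamma(X_1)-s_\gamma(X_2)\bigr)$. For $\gamma=0$ it gives $Z_1-Z_2=\log X_1-\log X_2$, because the $-\log E$ terms cancel. In the Gumbel case this yields $\E|Z_1-Z_2|^p=d_{0,p}^p$ directly. In the case $\gamma\neq0$, the independence of $E$ and $U$ factorizes the expectation:
\[
\E|Z_1-Z_2|^p=|\gamma|^{-p}\,\E|\Xi_\gamma|^p\,\E|s_\gamma(X_1)-s_\gamma(X_2)|^p=|\gamma|^{-p}\,\E|\Xi_\gamma|^p\,d_{\gamma,p}^p .
\]
Taking $p$th roots gives the stated bound.

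The main obstacle is not the algebra but checking that every object is finite, so that the coupling inequality is meaningful and both laws lie in $\cP_p$. For $\gamma>0$ we have $\E[E^{-\gamma p}]=\Gamma(1-\gamma p)$, which is finite exactly when $p\gamma<1$; this is where that hypothesis enters. For $\gamma<0$, $\E[E^{|\gamma|p}]=\Gamma(1+|\gamma|p)$ is always finite. It then remains to verify $H_{\gamma,F_i}\in\cP_p(\R)$. Using the representation and independence, $\E|Z_i|^p\lesssim |\gamma|^{-p}\bigl(\E|\Xi_\gamma|^p\,\E|s_\gamma(X_i)|^p+1\bigr)$, which is finite by Proposition~\ref{prop:domain_dgamma}. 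In the Gumbel case, $\E|\log X_i|^p<\infty$ and $\E|\log E|^p<\infty$ give the same conclusion. If one preferred to avoid this step, the inequality could be read as holding in $[0,\infty]$, but finiteness is what makes it a genuine Lipschitz statement. A minor additional point is that $X_i>0$ almost surely, so $E/X_i\in(0,\infty)$ lies in the domain of $w_\gamma$.
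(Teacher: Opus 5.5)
Your proof is correct and is essentially the paper's argument: an optimal (comonotone) coupling of the transformed types $s_\gamma(X_1),s_\gamma(X_2)$ driven by one common shock $E\sim\mathrm{Exp}(1)$. Your explicit factorization $Z_1-Z_2=\gamma^{-1}\Xi_\gamma\,(s_\gamma(X_1)-s_\gamma(X_2))$ is exactly what the paper packages as the random Lipschitz contraction Lemma~\ref{lem:random_lipschitz}, with $\phi(u,v)=(uv-1)/\gamma$, or $\phi(u,v)=u-\log v$ when $\gamma=0$. Two small remarks: the antitone case split is unnecessary, though harmless, because with $X_i=Q_{F_i}(U)$ the pair $(s_\gamma(X_1),s_\gamma(X_2))$ is already comonotone for $\gamma<0$ (both are decreasing in the same $U$); and your check that $H_{\gamma,F_i}\in\cP_p(\R)$ is a genuine addition that the paper leaves implicit.
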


\begin{remark}[What the transport layer adds]\label{rem:what_transport_adds}
The transform $s_\gamma$ by itself is only a reparameterization of types.
The substantive content of Theorem~\ref{thm:stability_HEV} is that, under the canonical coupling $Z_{\gamma,F}=w_\gamma(E/X)$, optimal transport on $s_\gamma(X)$ yields an explicit Wasserstein bound on the entire induced extreme law, with constant determined by the tail index through $\E[E^{-p\gamma}]$ in the Fr\'echet regime.
Corollary~\ref{cor:adapted_geodesic_stability} then propagates this to canonical ambient paths in law space.
These quantitative whole law statements do not follow from the Laplace mixture alone.
\end{remark}

\begin{remark}[Moment restrictions are intrinsic]\label{rem:moment_intrinsic}
For $\gamma>0$ one has $\E[E^{-p\gamma}]=\Gamma(1-p\gamma)$, so $\E[|\Xi_\gamma|^p]<\infty$ holds if and only if $p\gamma<1$.
For $\gamma<0$, by contrast, $\Xi_\gamma=E^{|\gamma|}$ has finite moments of every order.
The heavy-tail restriction is therefore not an artifact of the coupling argument.
It reflects that the Fr\'echet-type limit law has finite $p$th moment only when $p<1/\gamma$, and therefore $W_p(H_{\gamma,F_1},H_{\gamma,F_2})$ is only defined in the usual Wasserstein sense in that regime.
\end{remark}

Theorem~\ref{thm:stability_HEV} is proved by applying the random Lipschitz contraction Lemma~\ref{lem:random_lipschitz} to the representation in Proposition~\ref{prop:canonical_rep}, using a coupling of $s_\gamma(X_1)$ and $s_\gamma(X_2)$ and a common exponential variable.
The construction is in the same spirit as the coupling arguments emphasized by \citet{BobbiaDombryVarron2019}, but here the comparison is between heterogeneous extreme laws generated by different opportunity distributions.

The theorem becomes most useful once the adapted metric is treated as a geometry in its own right.
Because $d_{\gamma,p}$ is an ordinary Wasserstein distance after the transform $s_\gamma$, it automatically induces canonical constant-speed geodesics in the ambient transformed type space.
Along those paths, the entire law of extremes moves in a Lipschitz way.
This is the main result in the paper that specifically requires the Wasserstein geometry rather than only the Laplace form.

\begin{corollary}[Geodesic control in the adapted type geometry]\label{cor:adapted_geodesic_stability}
Fix $\gamma\in\R$ and $p\ge 1$.
Let $F^0,F^1$ be two probability measures on $(0,\infty)$ for which the assumptions of Theorem~\ref{thm:stability_HEV} hold.
Let $G^0:=s_\gamma\#F^0$ and $G^1:=s_\gamma\#F^1$, and let $(G^t)_{t\in[0,1]}$ be the canonical monotone constant-speed $W_p$ geodesic from $G^0$ to $G^1$.
Define $F^t:=(s_\gamma^{-1})\#G^t$.
Then, for all $s,t\in[0,1]$,
\[
d_{\gamma,p}(F^s,F^t)=|t-s|\,d_{\gamma,p}(F^0,F^1).
\]
Moreover,
\[
W_p\!\left(H_{\gamma,F^s},H_{\gamma,F^t}\right)
\le C_{\gamma,p}\,|t-s|\,d_{\gamma,p}(F^0,F^1),
\]
where $C_{\gamma,p}$ denotes the constant from Theorem~\ref{thm:stability_HEV}.
\end{corollary}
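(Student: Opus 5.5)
The corollary follows from combining the one-dimensional geodesic structure of Lemma~\ref{lem:quantile_geodesic} with Theorem~\ref{thm:stability_HEV}, once we check that the pulled-back path is well defined and stays in the domain of the theorem. First, $s_\gamma$ is a strictly monotone bijection from $(0,\infty)$ onto its image: onto $(0,\infty)$ when $\gamma\neq 0$ (increasing for $\gamma>0$, decreasing for $\gamma<0$), and onto $\R$ when $\gamma=0$. Hence $s_\gamma^{-1}$ is defined on that image. We must show that each $G^t$ is supported there. For $\gamma\neq0$, $G^0$ and $G^1$ live on $(0,\infty)$, so their quantile functions are positive on $(0,1)$. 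The interpolated quantile $Q_{G^t}=(1-t)Q_{G^0}+tQ_{G^1}$ is then positive and nondecreasing, so it is the quantile function of a law on $(0,\infty)$. For $\gamma=0$ there is nothing to check. So $F^t:=(s_\gamma^{-1})\#G^t$ is a probability measure on $(0,\infty)$, and $s_\gamma\#F^t=G^t$ because $s_\gamma\circ s_\gamma^{-1}$ is the identity on the image.

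Second, the identity $d_{\gamma,p}(F^s,F^t)=W_p(G^s,G^t)$ holds by Definition~\ref{def:type_metrics}. By the quantile representation in Lemma~\ref{lem:quantile_geodesic},
\[
W_p^p(G^s,G^t)=\int_0^1\bigl|(t-s)\bigl(Q_{G^1}(u)-Q_{G^0}(u)\bigr)\bigr|^p\,du=|t-s|^p\,W_p^p(G^0,G^1).
\]
Here we use that each $Q_{G^t}$ is nondecreasing, so it is indeed the quantile function of $G^t$. This gives $d_{\gamma,p}(F^s,F^t)=|t-s|\,d_{\gamma,p}(F^0,F^1)$, which is finite.

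Third, we verify the hypotheses of Theorem~\ref{thm:stability_HEV} for the pair $(F^s,F^t)$. Finiteness of $d_{\gamma,p}$ was just shown. Equivalently, by Proposition~\ref{prop:domain_dgamma}, $\int|s_\gamma|^p\,dF^t=\int|y|^p\,dG^t(y)<\infty$, since $\|Q_{G^t}\|_{L^p}\le(1-t)\|Q_{G^0}\|_{L^p}+t\|Q_{G^1}\|_{L^p}$ by Minkowski's inequality. The side condition $p\gamma<1$ for $\gamma>0$ depends only on $(\gamma,p)$, so it is inherited from the endpoints. Applying the theorem gives $W_p(H_{\gamma,F^s},H_{\gamma,F^t})\le C_{\gamma,p}\,d_{\gamma,p}(F^s,F^t)$, and substituting the identity from the second step completes the proof.

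The main point needing care is the first step: showing that the geodesic in the transformed space stays in the range of $s_\gamma$, so that the pullback is a legitimate law on $(0,\infty)$. This relies on positivity being preserved under convex combinations of quantile functions. For $\gamma<0$ there is a mild order reversal, since $s_\gamma$ is decreasing. This does not matter, because the argument works with $G^t$ directly and never with the quantiles of $F^t$.
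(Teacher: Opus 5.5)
Your proof is correct and follows the same route as the paper: identify $d_{\gamma,p}(F^s,F^t)$ with $W_p(G^s,G^t)$, use the constant-speed property of the quantile-interpolation geodesic, and apply Theorem~\ref{thm:stability_HEV} to the pair $(F^s,F^t)$. The paper leaves two points implicit that you verify correctly: that $G^t$ stays in the range of $s_\gamma$, so $F^t$ is a genuine law on $(0,\infty)$ with $s_\gamma\#F^t=G^t$, and that $G^t$ has finite $p$th moment, so the theorem's hypotheses hold along the path.
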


\begin{remark}[Ambient geodesics versus the mean one normalization]\label{rem:ambient_vs_meanone}
The path in Corollary~\ref{cor:adapted_geodesic_stability} is a geodesic in the ambient transformed type space, not necessarily in the mean one slice used in the economic model.
Indeed, $Q_{G^t}(u)$ is affine in $t$ in transformed coordinates, but $\int s_\gamma^{-1}(Q_{G^t}(u))\,du$ is generally not constant unless $s_\gamma^{-1}$ is affine, which occurs only when $\gamma=1$.
Thus the adapted path should be read as a geometric comparison device, not as a literal mean-preserving policy path.
If one instead wants exact mean preservation at every step, the raw space geodesics of Section~\ref{subsec:geodesics} remain available, but they need not inherit the constant-speed statement for $d_{\gamma,p}$.
\end{remark}

\begin{proposition}[Renormalization bridge back to the mean one slice]\label{prop:renormalization_bridge}
Fix $\gamma\in\R$ and $p\ge 1$.
Let $F$ be a probability measure on $(0,\infty)$ with finite mean $m(F):=\int x\,F(dx)\in(0,\infty)$ and finite transformed $p$th moment.
Define the mean one renormalization
\[
\widetilde F:=(x\mapsto x/m(F))_\#F.
\]
Then
\[
d_{\gamma,p}(F,\widetilde F)=
\begin{cases}
\left|1-m(F)^{-\gamma}\right|\left(\int x^{\gamma p}\,F(dx)\right)^{1/p}, & \gamma\neq 0,\\[0.8em]
\left|\log m(F)\right|, & \gamma=0.
\end{cases}
\]
If, in addition, the assumptions of Theorem~\ref{thm:stability_HEV} hold for $F$ and $\widetilde F$, then
\[
W_p\!\left(H_{\gamma,F},H_{\gamma,\widetilde F}\right)
\le
C_{\gamma,p}\,d_{\gamma,p}(F,\widetilde F),
\]
where $C_{\gamma,p}$ is the modulus from Theorem~\ref{thm:stability_HEV}.
In particular, if $m_t:=\int x\,F^t(dx)$ and $\widetilde F^t:=(x\mapsto x/m_t)_\#F^t$ along the ambient geodesic from Corollary~\ref{cor:adapted_geodesic_stability}, the same bound applies pointwise in $t$.
\end{proposition}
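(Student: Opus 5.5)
The plan is to compute $d_{\gamma,p}(F,\widetilde F)$ exactly via the one-dimensional quantile representation (Lemma~\ref{lem:quantile_geodesic}), applied to the pushforwards $G:=s_\gamma\#F$ and $\widetilde G:=s_\gamma\#\widetilde F$. Write $m:=m(F)$.

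\emph{Case $\gamma=0$.} Here $s_0(x/m)=\log x-\log m$. So $\widetilde G$ is the translate of $G$ by $-\log m$, and $Q_{\widetilde G}(u)=Q_G(u)-\log m$ for all $u$. The quantile formula then gives
\[
W_p^p(G,\widetilde G)=\int_0^1|\log m|^p\,du=|\log m|^p .
\]

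\emph{Case $\gamma\neq 0$.} Here $s_\gamma(x/m)=m^{-\gamma}x^\gamma$, so $\widetilde G$ is the image of $G$ under the linear map $y\mapsto m^{-\gamma}y$. Because $m^{-\gamma}>0$, this map is increasing. Hence $Q_{\widetilde G}=m^{-\gamma}Q_G$. This holds whether $s_\gamma$ is increasing ($\gamma>0$) or decreasing ($\gamma<0$), since we work directly with the quantiles of the pushforward $G$ and never with those of $F$. Then
\[
W_p^p(G,\widetilde G)=|1-m^{-\gamma}|^p\int_0^1|Q_G(u)|^p\,du=|1-m^{-\gamma}|^p\int x^{\gamma p}\,F(dx),
\]
using $\int_0^1|Q_G|^p\,du=\int|y|^p\,G(dy)$ and $|x^\gamma|=x^\gamma$ on $(0,\infty)$.

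\emph{Finiteness.} The finite transformed $p$th moment hypothesis, via Proposition~\ref{prop:domain_dgamma}, makes both quantities finite. For $\widetilde F$ this needs a separate check: its transformed moment is a scaling of that of $F$ for $\gamma\neq0$. For $\gamma=0$ one uses $|\log x-\log m|^p\le 2^{p-1}(|\log x|^p+|\log m|^p)$.

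\emph{The $W_p$ bound and the path statement.} The inequality $W_p(H_{\gamma,F},H_{\gamma,\widetilde F})\le C_{\gamma,p}\,d_{\gamma,p}(F,\widetilde F)$ is a direct application of Theorem~\ref{thm:stability_HEV} with $F_1=F$, $F_2=\widetilde F$; its extra condition $p\gamma<1$ when $\gamma>0$ is assumed. For the geodesic statement, fix $t$ and apply the same result to $F^t$. The only thing to check is that $m_t\in(0,\infty)$ and that $F^t$ has a finite transformed $p$th moment. The latter holds because $Q_{G^t}=(1-t)Q_{G^0}+tQ_{G^1}$ lies in $L^p$. Finiteness of $m_t$ has to be assumed, or argued separately, since $s_\gamma^{-1}$ can be nonlinear. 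I read the ``in particular'' clause as conditional on $m_t<\infty$, which is the natural reading.

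\emph{Main obstacle.} The computations are routine. The one point needing care is the quantile monotonicity bookkeeping for $\gamma<0$. Working with the quantiles of $G$ rather than of $F$ avoids any sign issue.
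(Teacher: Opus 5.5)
Your proposal is correct and follows essentially the same route as the paper: you apply the quantile representation to $s_\gamma\#F$, note that $s_\gamma\#\widetilde F$ is a positive rescaling ($\gamma\neq 0$) or a translate ($\gamma=0$) of it, and then appeal directly to Theorem~\ref{thm:stability_HEV}. Your finiteness checks and your caveat on $m_t$ go beyond the paper, which does not treat the path clause explicitly; if $F^0,F^1$ have finite means, then $s_\gamma^{-1}(Q_{G^t}(u))$ lies between its values at $t=0$ and $t=1$ by monotonicity of $s_\gamma^{-1}$, so integrating in $u$ gives $m_t\le m_0+m_1$ and finiteness of $m_t$ is automatic.
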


\begin{corollary}[Quantile-schedule control of the induced extreme law]\label{cor:quantile_schedule_stability}
Under the assumptions of Theorem~\ref{thm:stability_HEV}, let $Q_{\gamma,F_i}:(0,1)\to\R$ denote the quantile function of the law $H_{\gamma,F_i}$ for $i\in\{1,2\}$.
Then
\[
\left(\int_0^1 \left|Q_{\gamma,F_1}(u)-Q_{\gamma,F_2}(u)\right|^p\,du\right)^{1/p}
=
W_p\!\left(H_{\gamma,F_1},H_{\gamma,F_2}\right)
\le
C_{\gamma,p}\,d_{\gamma,p}(F_1,F_2).
\]
Along the adapted geodesic from Corollary~\ref{cor:adapted_geodesic_stability}, this yields
\[
\|Q_{\gamma,F^s}-Q_{\gamma,F^t}\|_{L^p(0,1)}
\le
C_{\gamma,p}\,|t-s|\,d_{\gamma,p}(F^0,F^1).
\]
\end{corollary}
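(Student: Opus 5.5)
The plan is to reduce the statement to two ingredients already available: the one-dimensional quantile representation of $W_p$ from Lemma~\ref{lem:quantile_geodesic} (in the version on $\R$ recorded in Appendix~\ref{app:tools_wass}), and the bound of Theorem~\ref{thm:stability_HEV} together with the constant-speed property of Corollary~\ref{cor:adapted_geodesic_stability}.

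\textbf{Step 1 (the equality).} First I will check that $H_{\gamma,F_i}$ has a finite $p$th moment, so that the quantile identity applies. The coupling $Z_{\gamma,F_i}=w_\gamma(E/X_i)$ of Proposition~\ref{prop:canonical_rep} makes this transparent.
\begin{itemize}
\item For $\gamma\neq0$ we have $|Z_{\gamma,F_i}|\le (|s_\gamma(X_i)|\,\Xi_\gamma+1)/|\gamma|$. Independence of $X_i$ and $E$ gives a bound of the form $\|Z_{\gamma,F_i}\|_p\le(\|s_\gamma(X_i)\|_p\|\Xi_\gamma\|_p+1)/|\gamma|$.
\item For $\gamma=0$ we have $|Z|\le|\log X_i|+|\log E|$, and $\log E$ has moments of all orders.
\end{itemize}
Finiteness of $\|s_\gamma(X_i)\|_p$ follows from $d_{\gamma,p}(F_1,F_2)<\infty$ via Proposition~\ref{prop:domain_dgamma}. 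Finiteness of $\|\Xi_\gamma\|_p$ follows from $p\gamma<1$ when $\gamma>0$, and is automatic when $\gamma<0$ (Remark~\ref{rem:moment_intrinsic}).

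With $H_{\gamma,F_i}\in\cP_p(\R)$ established, the one-dimensional identity $W_p^p(\mu,\nu)=\int_0^1|Q_\mu-Q_\nu|^p\,du$ applies verbatim on $\R$; its proof uses only the comonotone coupling $(Q_\mu(U),Q_\nu(U))$ and does not depend on the support being $[0,\infty)$. Taking $p$th roots gives the displayed equality. Here $Q_{\gamma,F_i}$ is the generalized inverse of $H_{\gamma,F_i}$, which takes values in $\R$ because $\Pr(X=0)=0$ rules out atoms at $\pm\infty$.

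\textbf{Step 2 (the inequality).} The bound $W_p(H_{\gamma,F_1},H_{\gamma,F_2})\le C_{\gamma,p}\,d_{\gamma,p}(F_1,F_2)$ is exactly Theorem~\ref{thm:stability_HEV}. Here $C_{\gamma,p}$ equals $\|\Xi_\gamma\|_p/|\gamma|$ for $\gamma\neq0$ and equals $1$ for $\gamma=0$.

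\textbf{Step 3 (the geodesic statement).} For $s,t\in[0,1]$ I need the pair $(F^s,F^t)$ to satisfy the hypotheses of Theorem~\ref{thm:stability_HEV}.
\begin{itemize}
\item Since $Q_{G^t}=(1-t)Q_{G^0}+tQ_{G^1}$, Minkowski's inequality gives $G^t\in\cP_p$, so $d_{\gamma,p}(F^s,F^t)<\infty$.
\item The condition $p\gamma<1$ does not depend on $t$.
\end{itemize}
One point needs care: $F^t=(s_\gamma^{-1})\#G^t$ must be well defined on $(0,\infty)$. For $\gamma>0$ the range of $s_\gamma$ is $(0,\infty)$, and a convex combination of positive quantiles stays positive. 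The case $\gamma<0$ is handled the same way. For $\gamma=0$ the range is all of $\R$, so there is nothing to check.

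Applying Steps 1--2 to $(F^s,F^t)$ and then the identity $d_{\gamma,p}(F^s,F^t)=|t-s|\,d_{\gamma,p}(F^0,F^1)$ from Corollary~\ref{cor:adapted_geodesic_stability} yields the $L^p(0,1)$ quantile bound.

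The main obstacle is the moment verification in Step 1, specifically confirming that the induced laws lie in $\cP_p(\R)$ so that the quantile identity holds. Everything else is direct citation of earlier results.
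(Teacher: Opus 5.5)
Your proposal is correct and follows the paper's own proof: the equality is the one-dimensional quantile representation of $W_p$, the inequality is Theorem~\ref{thm:stability_HEV}, and the geodesic bound comes from applying both to $(F^s,F^t)$ together with the constant-speed identity in Corollary~\ref{cor:adapted_geodesic_stability}. Your extra checks are sound, and they make explicit what the paper leaves implicit: that $H_{\gamma,F_i}$ has a finite $p$th moment (via the canonical coupling), and that every intermediate pair $(F^s,F^t)$ is well defined on $(0,\infty)$ and satisfies the theorem's hypotheses.
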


An immediate benchmark implication is obtained by comparing $F$ to $\delta_1$.
Whenever the conditions of Theorem~\ref{thm:stability_HEV} hold,
\[
W_p\!\left(H_{\gamma,F},H_\gamma\right)\le C_{\gamma,p}\,d_{\gamma,p}(F,\delta_1),
\]
where $C_{\gamma,p}$ denotes the constant from Theorem~\ref{thm:stability_HEV}.
Combined with Proposition~\ref{prop:type_metric_controls} in the regimes covered there, this translates metric misalocation on the type space into an explicit distortion bound for extreme outcomes relative to the homogeneous benchmark.

\begin{corollary}[Pointwise stability of the cdf]\label{cor:pointwise_cdf_stability}
Fix $\gamma\in\R$ and let $F_1,F_2\in\cP_1([0,\infty))$.
Then for every $x$ in the interior of the support of $H_\gamma$,
\[
\left|H_{\gamma,F_1}(x)-H_{\gamma,F_2}(x)\right|
\le
v_\gamma(x)\,W_1(F_1,F_2).
\]
\end{corollary}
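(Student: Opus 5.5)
The plan is to write the difference of cdfs at a fixed point as a difference of integrals of a single Lipschitz kernel, and then apply Kantorovich--Rubinstein duality for $W_1$ on the type space.

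Fix $x$ in the interior of the support of $H_\gamma$, so that $z:=v_\gamma(x)\in(0,\infty)$. By the definition $H_{\gamma,F}(x)=P_0(v_\gamma(x))$ we have
\[
H_{\gamma,F_1}(x)-H_{\gamma,F_2}(x)=\int_0^\infty e^{-zu}\,F_1(du)-\int_0^\infty e^{-zu}\,F_2(du).
\]
This is the same Laplace kernel that appears in Proposition~\ref{prop:gateaux}. The key step is to observe that $\phi_z(u):=e^{-zu}$ is Lipschitz on $[0,\infty)$ with constant $\sup_{u\ge 0}|\phi_z'(u)|=\sup_{u\ge0} z e^{-zu}=z$.

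Two routes then give the bound. Via Kantorovich--Rubinstein duality (recorded in Appendix~\ref{app:tools_wass}), $|\int\phi\,d(F_1-F_2)|\le \mathrm{Lip}(\phi)\,W_1(F_1,F_2)$ for every Lipschitz $\phi$, and both integrals are finite since $0<\phi_z\le 1$. Alternatively, one can argue directly from Lemma~\ref{lem:quantile_geodesic}. Take the comonotone coupling $X_i=Q_{F_i}(U)$ with $U$ uniform; then
\[
|\E[\phi_z(X_1)-\phi_z(X_2)]|\le z\,\E|X_1-X_2|=z\,W_1(F_1,F_2).
\]
Either way the result is $|H_{\gamma,F_1}(x)-H_{\gamma,F_2}(x)|\le v_\gamma(x)\,W_1(F_1,F_2)$.

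There is no real obstacle. The only care needed is in the domain. The interior-of-support condition guarantees $v_\gamma(x)<\infty$, so the Lipschitz constant is finite, and $v_\gamma(x)>0$ there; at the boundary the bound would be either trivial or vacuous. The hypothesis $F_i\in\cP_1$ ensures that $W_1(F_1,F_2)$ is finite.

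No mean-one normalization and no lower support bound are used. This is consistent with the table's claim that the corollary holds for all $\gamma$ under the baseline assumptions alone.
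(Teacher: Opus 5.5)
Your proof is correct and matches the paper's. The paper also writes $H_{\gamma,F_i}(x)=\int e^{-v_\gamma(x)z}\,F_i(dz)$, notes that this kernel is $v_\gamma(x)$-Lipschitz on $[0,\infty)$, and concludes via the Lipschitz test-function consequence of Kantorovich--Rubinstein duality (Corollary~\ref{cor:lipschitz_test}); your comonotone-coupling alternative is an equally valid, slightly more elementary version of the same argument.
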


\begin{remark}[Pointwise versus geometric bounds]\label{rem:pointwise_vs_geometric}
Corollary~\ref{cor:pointwise_cdf_stability} follows directly from the Laplace representation
$H_{\gamma,F}(x)=\int e^{-v_\gamma(x)z}\,F(dz)$ and the fact that $z\mapsto e^{-v_\gamma(x)z}$ is $v_\gamma(x)$-Lipschitz on $[0,\infty)$.
It is therefore not itself a geometric theorem.
Theorem~\ref{thm:stability_HEV}, Proposition~\ref{prop:renormalization_bridge}, and Corollary~\ref{cor:quantile_schedule_stability} are the results that genuinely use the adapted Wasserstein geometry. Those statements control whole induced laws, the bridge back to the mean one slice, and the full quantile schedule. They are not recovered from fixed threshold cdf bounds without additional density information.
\end{remark}

In many applications one also wants conditions under which $d_{\gamma,p}$ is controlled by a standard Wasserstein distance on the type space.
We record sufficient conditions that isolate the role of behavior near $0$.

\begin{proposition}[Relating adapted type metrics to standard Wasserstein distances]\label{prop:type_metric_controls}
Let $p\ge 1$ and let $F_1,F_2\in\cP_p([0,\infty))$.

\paragraph{Case 1: $0<\gamma\le 1$.}
Then
\[
d_{\gamma,p}(F_1,F_2)\le W_p(F_1,F_2)^\gamma.
\]

\paragraph{Case 2: $\gamma<0$.}
Assume $F_1$ and $F_2$ are supported on $[a,\infty)$ for some $a>0$.
Then
\[
d_{\gamma,p}(F_1,F_2)\le |\gamma|\,a^{\gamma-1}\,W_p(F_1,F_2).
\]

\paragraph{Case 3: $\gamma=0$.}
Assume $F_1$ and $F_2$ are supported on $[a,\infty)$ for some $a>0$.
Then
\[
d_{0,p}(F_1,F_2)\le a^{-1}\,W_p(F_1,F_2).
\]
\end{proposition}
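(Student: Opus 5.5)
The plan is to use a single coupling argument for all three cases. Let $\pi$ be an optimal coupling for $W_p(F_1,F_2)$; in one dimension we may take the comonotone (quantile) coupling $(X_1,X_2)=(Q_{F_1}(U),Q_{F_2}(U))$ with $U$ uniform on $(0,1)$, which is optimal by Lemma~\ref{lem:quantile_geodesic}. Under this coupling, $(s_\gamma(X_1),s_\gamma(X_2))$ has marginals $F_1\circ s_\gamma^{-1}$ and $F_2\circ s_\gamma^{-1}$. By definition of $W_p$ as an infimum over couplings,
\[
d_{\gamma,p}(F_1,F_2)^p\le \E\big[|s_\gamma(X_1)-s_\gamma(X_2)|^p\big].
\]
It therefore suffices to bound $|s_\gamma(x)-s_\gamma(y)|$ pointwise on the relevant support by a modulus of $|x-y|$, and then integrate. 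Since $s_\gamma$ is monotone, the comonotone coupling is in fact also optimal for the pushforwards, but we only need the inequality.

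\textbf{Case 1} ($0<\gamma\le1$). Here we use the elementary subadditivity inequality $|x^\gamma-y^\gamma|\le|x-y|^\gamma$ for $x,y\ge0$. To prove it, take $x\ge y$; the map $t\mapsto t^\gamma$ is concave and vanishes at $0$, hence subadditive, so $x^\gamma\le y^\gamma+(x-y)^\gamma$. This gives
\[
d_{\gamma,p}^p\le \E|X_1-X_2|^{\gamma p}.
\]
Since $\gamma\le1$, Jensen (the concavity of $r\mapsto r^\gamma$) yields $\E|X_1-X_2|^{\gamma p}\le(\E|X_1-X_2|^p)^\gamma=W_p^{\gamma p}$, and taking $p$th roots gives the claim. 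Finiteness follows because $F_i\in\cP_p$ implies $\int x^{\gamma p}\,dF_i<\infty$. One point to note is that $s_\gamma$ is defined on $(0,\infty)$, while the case $x=0$ has $F_i$-mass zero; the inequality in any case extends continuously to $0$, so no lower support bound is needed.

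\textbf{Cases 2 and 3} ($\gamma\le0$, support in $[a,\infty)$). We use the mean value theorem on $[a,\infty)$. For $\gamma<0$, $|s_\gamma'(x)|=|\gamma|x^{\gamma-1}\le|\gamma|a^{\gamma-1}$ because $\gamma-1<0$ and $x\ge a$. For $\gamma=0$, $s_0'(x)=1/x\le a^{-1}$. Thus $s_\gamma$ is Lipschitz on $[a,\infty)$ with constant $L=|\gamma|a^{\gamma-1}$, respectively $a^{-1}$, and the coupling gives $d_{\gamma,p}^p\le L^p\,\E|X_1-X_2|^p=L^pW_p^p$. Finiteness of the transformed moments is automatic from the support bound.

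The only step requiring any care is the Jensen step in Case 1, which needs $\gamma p$ versus $p$ with exponent $\gamma\le1$. We also need to ensure that the chosen coupling places all its mass in $(0,\infty)^2$ (respectively $[a,\infty)^2$). Both points are routine, so no real obstacle is expected.
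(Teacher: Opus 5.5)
Your proposal is correct and is essentially the paper's proof. Case 1 uses the H\"older bound $|x^\gamma-y^\gamma|\le|x-y|^\gamma$ followed by Jensen for $r\mapsto r^\gamma$, and Cases 2--3 use the Lipschitz constant of $s_\gamma$ on $[a,\infty)$; the paper differs only in working with an arbitrary coupling and then taking the infimum (citing Lemma~\ref{lem:lipschitz_pushforward} for Cases 2--3) rather than fixing the comonotone coupling, which changes nothing. One small imprecision: for $\gamma=0$, finiteness of $\int|\log x|^p\,dF_i$ also needs $F_i\in\cP_p$ to control the upper tail, not only the support bound.
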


\begin{remark}[Why we keep $d_{\gamma,p}$ in the main bound]
In the positive-tail regime covered by Theorem~\ref{thm:stability_HEV}, namely $0<\gamma<1/p$, the map $x\mapsto x^\gamma$ is not globally Lipschitz on $[0,\infty)$.
For $0<\gamma<1$, the obstruction is behavior near zero.
Consequently, a linear stability bound stated directly in terms of $W_p(F_1,F_2)$ would either require extra support restrictions or lose the natural linear geometry of the canonical coupling.
The adapted metric $d_{\gamma,p}$ is therefore the correct primitive quantity in the main bound, while Proposition~\ref{prop:type_metric_controls} records conditions under which it can be controlled by a standard Wasserstein distance.
\end{remark}

\subsection{Stability of extreme outcome functionals}\label{subsec:stab_functionals}

The Wasserstein bounds above immediately imply stability of a large class of tail-sensitive functionals.

\begin{proposition}[Lipschitz functionals of the limit law]\label{prop:lip_functionals_limit}
Let $\psi:\R\to\R$ be $L$-Lipschitz.
Let $Z_i\sim H_{\gamma,F_i}$ for $i\in\{1,2\}$ and assume $W_1(H_{\gamma,F_1},H_{\gamma,F_2})<\infty$.
Then
\[
\left|\E[\psi(Z_1)]-\E[\psi(Z_2)]\right|
\le
L\,W_1\!\left(H_{\gamma,F_1},H_{\gamma,F_2}\right).
\]
Combining with Theorem~\ref{thm:stability_HEV} yields explicit bounds in terms of $d_{\gamma,p}(F_1,F_2)$ whenever the conditions of that theorem hold.
\end{proposition}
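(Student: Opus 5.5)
The plan is to prove the inequality through a coupling argument and the primal (coupling) form of $W_1$, so Kantorovich--Rubinstein duality is not needed. Fix any coupling $\pi\in\Gamma(H_{\gamma,F_1},H_{\gamma,F_2})$ and realize it as a pair $(Z_1,Z_2)$ on a common probability space. The same coupling is used for both functionals.

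\begin{enumerate}
\item \emph{Integrability of each term.} Since $W_1(H_{\gamma,F_1},H_{\gamma,F_2})<\infty$, at least one coupling has finite cost $\E|Z_1-Z_2|$. We make sure the individual expectations $\E[\psi(Z_i)]$ exist. The Lipschitz property gives $|\psi(z)|\le|\psi(0)|+L|z|$, so it suffices that $\E|Z_i|<\infty$. If the standing convention is that $H_{\gamma,F_i}\in\cP_1(\R)$, this holds automatically. For example, under the conditions of Theorem~\ref{thm:stability_HEV} with $p\ge1$, the canonical representation of Proposition~\ref{prop:canonical_rep} gives first moments via the transformed moments of $X$ and of $\Xi_\gamma$. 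Otherwise we read the statement as presupposing that both expectations are well defined and finite, which is how $W_1$ is defined on $\cP_1$. We state this interpretation explicitly; it is the only delicate point.
\item \emph{Bound for a fixed coupling.} For any coupling with $\E|Z_1-Z_2|<\infty$,
\[
|\E\psi(Z_1)-\E\psi(Z_2)|=|\E[\psi(Z_1)-\psi(Z_2)]|\le\E|\psi(Z_1)-\psi(Z_2)|\le L\,\E|Z_1-Z_2|.
\]
Linearity of expectation applies because both expectations are finite.
\item \emph{Optimization over couplings.} Taking the infimum over $\pi$ gives the bound $L\,W_1(H_{\gamma,F_1},H_{\gamma,F_2})$. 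An optimal coupling exists in one dimension (the comonotone quantile coupling of Lemma~\ref{lem:quantile_geodesic}), but the infimum is enough.
\end{enumerate}

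For the combination with Theorem~\ref{thm:stability_HEV}, use monotonicity of Wasserstein distances, $W_1\le W_p$ for $p\ge1$, which follows from Jensen's or H\"older's inequality applied to any coupling. This yields
\[
|\E\psi(Z_1)-\E\psi(Z_2)|\le L\,C_{\gamma,p}\,d_{\gamma,p}(F_1,F_2).
\]
When $p=1$ the step is immediate. In the Fr\'echet regime the admissible $p$ satisfy $1\le p<1/\gamma$, which needs $\gamma<1$, so the monotonicity step is consistent with the theorem's hypotheses.

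The main obstacle is bookkeeping rather than estimation: finiteness of $W_1$ alone does not guarantee that each $\E\psi(Z_i)$ is finite. We settle this in step 1 either through the $\cP_1$ convention or through the moment conditions of Theorem~\ref{thm:stability_HEV}.
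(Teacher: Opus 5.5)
Your proof is correct, but it takes a different route from the paper. You work with the primal (coupling) definition of $W_1$; the paper's proof is a one-line appeal to Corollary~\ref{cor:lipschitz_test}, which the appendix derives from Kantorovich--Rubinstein duality (Theorem~\ref{thm:KR}). Your argument bounds $|\E\psi(Z_1)-\E\psi(Z_2)|$ by $L\,\E|Z_1-Z_2|$ under an arbitrary coupling and then takes the infimum. This is exactly the easy half of that duality, so it needs nothing beyond the definition of $W_1$. The paper's route is shorter only because the duality statement is already in the appendix, where it is also reused for Corollary~\ref{cor:pointwise_cdf_stability} and Proposition~\ref{prop:finite_theta_stability}. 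Your version also makes explicit an integrability point the paper leaves implicit: Corollary~\ref{cor:lipschitz_test} presupposes $H_{\gamma,F_i}\in\cP_1(\R)$. Your check is correct that the hypotheses of Theorem~\ref{thm:stability_HEV} give finite first moments through $Z=(X^\gamma\Xi_\gamma-1)/\gamma$ (respectively $Z=\log X-\log E$ when $\gamma=0$). The paper gives no argument for the proposition's last sentence, and your $W_1\le W_p$ step supplies one. A slightly sharper option is to apply Theorem~\ref{thm:stability_HEV} directly with $p=1$. This is admissible whenever the theorem applies for some $p\ge 1$, because $p\gamma<1$ forces $\gamma<1$ and $d_{\gamma,1}\le d_{\gamma,p}<\infty$. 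By Jensen it gives $C_{\gamma,1}\,d_{\gamma,1}(F_1,F_2)\le C_{\gamma,p}\,d_{\gamma,p}(F_1,F_2)$.
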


\subsection{Finite horizon maxima and tail probability stability}\label{subsec:finite_theta}

Fix $\theta>0$ and let $M_\theta$ be the maximum of $N_\theta$ i.i.d.\ draws from $G$, where $N_\theta\mid X\sim\mathrm{Poisson}(\theta X)$ as in Section~\ref{sec:env}.
Conditional on $X$, one has
\[
\Pr(M_\theta\le x\mid X)=\exp\!\big(-\theta X(1-G(x))\big),
\]
and therefore $\Pr(M_\theta\le x)=\E[\exp(-\theta X(1-G(x)))]$.

\begin{proposition}[Pointwise stability for finite $\theta$]\label{prop:finite_theta_stability}
Let $F_1,F_2\in\cP_1([0,\infty))$.
Then for every $x\in\R$,
\[
\left|\Pr_{F_1}(M_\theta\le x)-\Pr_{F_2}(M_\theta\le x)\right|
\le
\theta\,(1-G(x))\,W_1(F_1,F_2).
\]
\end{proposition}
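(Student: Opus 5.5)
The plan is to work directly from the finite-horizon Laplace representation and the Kantorovich--Rubinstein dual formulation of $W_1$, in the same way Corollary~\ref{cor:pointwise_cdf_stability} follows from the limit representation. Fix $x\in\R$ and set $c:=\theta(1-G(x))\ge 0$. Conditioning on $X$ and using the mixed-Poisson structure (Lemma~\ref{lem:mp_pgf}), or equivalently the display preceding the proposition, gives
\[
\Pr_{F_i}(M_\theta\le x)=\int_0^\infty e^{-c u}\,F_i(du),\qquad i\in\{1,2\}.
\]
This identity uses only that $F_i$ is a probability measure on $[0,\infty)$. Neither the mean-one normalization nor the exclusion of an atom at zero is needed. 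If $X=0$, the convention $\sup\emptyset=-\infty$ gives $\Pr(M_\theta\le x\mid X=0)=1=e^{0}$, which is consistent.

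Next I consider the kernel $\phi_c(u):=e^{-cu}$ on $[0,\infty)$. Its derivative satisfies $|\phi_c'(u)|=c\,e^{-cu}\le c$, so $\phi_c$ is $c$-Lipschitz. In the degenerate case $c=0$ (when $G(x)=1$) both probabilities equal $1$, and the bound holds trivially. For $c>0$, the function $\phi_c/c$ is $1$-Lipschitz and bounded, so it is integrable against any $F_i\in\cP_1([0,\infty))$. Kantorovich--Rubinstein duality, as recorded in Appendix~\ref{app:tools_wass}, then yields
\[
\left|\int \phi_c\,dF_1-\int\phi_c\,dF_2\right|\le c\,W_1(F_1,F_2).
\]
Substituting $c=\theta(1-G(x))$ gives the claim.

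If one prefers to avoid duality, there is a coupling route. Take an optimal coupling $(X_1,X_2)$ of $F_1$ and $F_2$, for instance the quantile coupling from Lemma~\ref{lem:quantile_geodesic} with $p=1$. Then $|\E[\phi_c(X_1)-\phi_c(X_2)]|\le c\,\E|X_1-X_2|=c\,W_1(F_1,F_2)$.

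I do not expect a genuine obstacle here. The only point requiring care is the bookkeeping at the boundary: the case $G(x)=1$, and the atom at zero, which is permitted here since $F_i$ is only assumed to lie in $\cP_1([0,\infty))$. Both cases are handled by the observations above.
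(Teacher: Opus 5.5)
Your proposal is correct and follows the same route as the paper: write $\Pr_{F_i}(M_\theta\le x)=\int e^{-\theta(1-G(x))z}\,F_i(dz)$, note that this kernel is $\theta(1-G(x))$-Lipschitz on $[0,\infty)$, and apply the Kantorovich--Rubinstein bound (Corollary~\ref{cor:lipschitz_test}). Your extra boundary remarks (the case $G(x)=1$, a possible atom at zero) and the alternative quantile-coupling argument are valid.
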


\begin{remark}[Finite-horizon versus asymptotic robustness]\label{rem:finite_vs_asymptotic}
Proposition~\ref{prop:finite_theta_stability} is a finite horizon robustness statement.
It does not use the extreme-value limit.
By contrast, Theorem~\ref{thm:second_order_expansion} below is an asymptotic expansion around the limit law.
Keeping these two layers separate is important in empirical work: finite-sample prediction error and EVT approximation error are different objects.
\end{remark}

\subsection{Rates under second-order tail conditions}\label{subsec:rates}

We conclude with a second-order expansion for the prelimit cdf of normalized maxima.
The proof is given in Appendix~A.
In this subsection, unlike the first-order limit theorem in Section~\ref{subsec:hev_limit}, we work with continuous normalizing functions $a:(1,\infty)\to(0,\infty)$ and $b:(1,\infty)\to\R$.
This is the standard continuous-parameter formulation in extreme value theory and avoids the artificial $O(1/\theta)$ oscillation created by the step extension $a_{\lfloor\theta\rfloor},b_{\lfloor\theta\rfloor}$.
Assumption~\ref{ass:second_order} is a standard second-order regularity condition; see, for example, \citet[Chapters~2.10 and~9]{deHaanFerreira2006}.

\begin{assumption}[Second-order tail condition]\label{ass:second_order}
In addition to Assumption~\ref{ass:doa}, suppose there exist continuous normalizing functions $a:(1,\infty)\to(0,\infty)$ and $b:(1,\infty)\to\R$, a scalar function $A(\theta)\to 0$, and a locally bounded function $h_\gamma$ such that
\[
\theta\big(1-G(b(\theta)+a(\theta)x)\big)=v_\gamma(x) + A(\theta)\,h_\gamma(x) + o\!\big(A(\theta)\big)
\]
as $\theta\to\infty$, locally uniformly in $x$ on the interior of the support of $H_\gamma$.
\end{assumption}

\begin{theorem}[Second-order expansion for heterogeneous extremes]\label{thm:second_order_expansion}
Suppose Assumptions~\ref{ass:mixedPoisson} and~\ref{ass:second_order} hold and let
\[
Z_\theta:=\frac{M_\theta-b(\theta)}{a(\theta)}.
\]
Fix a compact set $K$ contained in the interior of the support of $H_\gamma$.
Then there exists a remainder $r_{\theta,K}$ on $K$ such that
\[
\sup_{x\in K}\frac{|r_{\theta,K}(x)|}{|A(\theta)|}\to 0
\qquad\text{as }\theta\to\infty,
\]
and, uniformly for $x\in K$,
\[
\Pr(Z_\theta\le x)=H_{\gamma,F}(x) - A(\theta)\,h_\gamma(x)\,\E\!\left[X e^{-v_\gamma(x)X}\right] + r_{\theta,K}(x).
\]
\end{theorem}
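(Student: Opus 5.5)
The plan is to start from the exact finite-horizon identity \eqref{eq:cdf_Mtheta} and Taylor-expand the Laplace transform $P_0$. With continuous normalizations,
\[
\Pr(Z_\theta\le x)=P_0\big(u_\theta(x)\big),\qquad u_\theta(x):=\theta\big(1-G(b(\theta)+a(\theta)x)\big).
\]
Assumption~\ref{ass:second_order} gives $u_\theta(x)=v_\gamma(x)+A(\theta)h_\gamma(x)+\varepsilon_\theta(x)$ with $\sup_{x\in K}|\varepsilon_\theta(x)|/|A(\theta)|\to 0$, uniformly on the compact set $K$. It is enough to expand $P_0(v+\delta)$ around $v=v_\gamma(x)$ to first order in $\delta_\theta(x):=u_\theta(x)-v_\gamma(x)$, with $P_0'(v)=-\E[Xe^{-vX}]$.

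\textbf{Step 1 (compactness facts).} Since $K$ lies in the interior of the support of $H_\gamma$ and $v_\gamma$ is continuous and strictly positive there, there is some $c>0$ with $v_\gamma(x)\ge 2c$ on $K$. Local boundedness of $h_\gamma$ gives $\sup_K|h_\gamma|<\infty$. Hence $\sup_K|\delta_\theta|=O(|A(\theta)|)\to 0$, so for $\theta$ large we have $u_\theta(x)\ge c$ on $K$.

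\textbf{Step 2 (derivatives of $P_0$ away from zero).} For $z\ge c$, differentiation under the integral is justified by $x e^{-zx}\le 1/(ec)$ and $x^2e^{-zx}\le 4/(ec)^2$. So $P_0\in C^2([c,\infty))$ with $|P_0''|\le 4/(e^2c^2)$. The mean-one condition is not even needed; this is the one place where a potential obstacle (mass of $X$ near $\infty$, or $v$ near zero) is handled, and working on compacts inside the support is exactly what makes it harmless.

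\textbf{Step 3 (Taylor with remainder).} By the mean value form,
\[
P_0(u_\theta(x))=P_0(v_\gamma(x))+P_0'(v_\gamma(x))\,\delta_\theta(x)+R_\theta(x),\qquad |R_\theta(x)|\le \tfrac12\sup_{z\ge c}|P_0''(z)|\,\delta_\theta(x)^2 .
\]
Substituting $\delta_\theta=A h_\gamma+\varepsilon_\theta$ gives the main term
\[
H_{\gamma,F}(x)-A(\theta)h_\gamma(x)\,\E\big[Xe^{-v_\gamma(x)X}\big],
\]
together with the remainder
\[
r_{\theta,K}(x):=-\E\big[Xe^{-v_\gamma(x)X}\big]\,\varepsilon_\theta(x)+R_\theta(x).
\]

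\textbf{Step 4 (uniform control of the remainder).} The first piece is bounded by $\frac{1}{2ec}\sup_K|\varepsilon_\theta|=o(|A(\theta)|)$. The second is $O(\sup_K\delta_\theta^2)=O(A(\theta)^2)=o(|A(\theta)|)$. Both bounds hold uniformly on $K$, which yields $\sup_K|r_{\theta,K}|/|A(\theta)|\to 0$.

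The only delicate point is making sure the uniformity in Assumption~\ref{ass:second_order} and the lower bound on $v_\gamma$ over $K$ are combined correctly; the rest is routine.
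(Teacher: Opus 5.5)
Your proposal is correct and follows essentially the same route as the paper's proof. Both start from the exact identity $\Pr(Z_\theta\le x)=P_0\big(\theta(1-G(b(\theta)+a(\theta)x))\big)$. Both use the lower bound $\inf_K v_\gamma>0$ to get a uniformly bounded $P_0''$ on $[\tfrac12\inf_K v_\gamma,\infty)$, and both take the remainder $r_{\theta,K}=-\E[Xe^{-v_\gamma(x)X}]\,\varepsilon_\theta(x)+R_\theta(x)$, controlled uniformly on $K$ by $o(|A(\theta)|)+O(A(\theta)^2)$.
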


\begin{remark}[Interpretation]
The leading correction term is multiplicatively separable.
The function $h_\gamma$ captures the second-order tail behavior of $G$, while $x\mapsto \E[X e^{-v_\gamma(x)X}]$ is the heterogeneity kernel inherited from the Laplace mixture.
This separation is what allows one to distinguish EVT approximation error from heterogeneity-driven sensitivity.
Converting the expansion into global $W_1$ rates require additional tail-envelope conditions and we therefore left outside the present theorem.
\end{remark}

\section{A complementary entropy projection for heterogeneity}\label{sec:entropic}

In this section we study a tractable normative problem on the same type space used in the positive analysis.
A planner reallocates opportunities across agents to improve a tail-sensitive criterion while penalizing deviations from a baseline distribution by relative entropy.
The object is deliberately modest: it is a one-marginal entropy projection, not a full two-marginal transport problem, and it does not rely on the Wasserstein geometry developed earlier.
It is useful in our framework, but the connection should be described accurately: this section is complementary rather than foundational.
The structural link to the positive analysis is that the same Laplace kernel that defines $F\mapsto H_{\gamma,F}$ also generates the marginal score functions that enter the planner's first-order conditions for cdf criteria and, through the stochastic representation below, for expected utility of normalized extremes.

\subsection{A variational planner problem}\label{subsec:planner_problem}

Fix a baseline distribution $F_0$ on $(0,\infty)$ such that $\E_{F_0}[X]=1$.
We interpret $F_0$ as the status quo distribution of effective search intensities, network degrees, or other shifters of arrival rates.
A policy induces a new distribution $F$, and we write $X\sim F$.

For $F\ll F_0$, define the relative entropy
\[
D_{\mathrm{KL}}(F\|F_0):=\int \log\!\left(\frac{dF}{dF_0}\right)\,dF,
\]
with the convention $D_{\mathrm{KL}}(F\|F_0)=+\infty$ if $F$ is not absolutely continuous with respect to $F_0$.

Let $H_{\gamma,F}$ denote the heterogeneous extreme-value law from Proposition~\ref{prop:HEV}.
We focus on tail-sensitive objectives of the form
\[
U(F)=\E\big[u(Z)\big],\qquad Z\sim H_{\gamma,F},
\]
for a measurable function $u$ such that the expectation is well defined under the candidate distributions considered below.
This class includes expected utility of normalized extremes and many smoothed tail criteria.
We comment on high-quantile objectives at the end of the section.

The planner solves
\begin{equation}\label{eq:design}
\sup_{F:\ \E_F[X]=1}\ \Big\{U(F) - \lambda\,D_{\mathrm{KL}}(F\|F_0)\Big\},
\end{equation}
where $\lambda>0$ governs the marginal cost of reshaping opportunities away from the baseline.

\begin{remark}[Interpretation]
The constraint $\E_F[X]=1$ normalizes the aggregate scale of opportunities.
In a network interpretation it fixes total expected offer arrival in the population and alows the planner to reallocate the cross-sectional distribution.
The entropy term penalizes concentrated reallocations and yields a strictly concave problem whenever $U$ is linear in $F$, which is the main case treated below.
\end{remark}

\begin{remark}[Normalized welfare and prelimit maxima]\label{rem:normalized_welfare}
The criterion $U(F)=\E[u(Z)]$ should be read as the asymptotic counterpart of prelimit preferences over maxima.
Indeed, if $u$ is bounded and continuous, then Proposition~\ref{prop:HEV} implies
\[
\E\!\left[u\!\left(\frac{M_\theta-b_\theta}{a_\theta}\right)\right]\to U(F)
\qquad\text{as }\theta\to\infty.
\]
Equivalently, the planner may be viewed as evaluating a sequence of prelimit utilities $u_\theta(m):=u((m-b_\theta)/a_\theta)$ that preserves the affine normalization used by extreme-value theory.
\end{remark}

\subsection{A stochastic representation and linearization of expected utility}\label{subsec:frechet_linearization}

The canonical representation from Section~\ref{subsec:canonical_coupling} immediately linearizes expected-utility objectives in the heterogeneity distribution.
Let $w_\gamma:=v_\gamma^{-1}$ denote the inverse map introduced there.

\begin{lemma}[Linearization of expected utility]\label{lem:exp_rep}
Let $X\sim F$ and let $E\sim\mathrm{Exp}(1)$ be independent.
Define
\[
Z:=w_\gamma\!\left(\frac{E}{X}\right),
\]
so that $Z\sim H_{\gamma,F}$ by Proposition~\ref{prop:canonical_rep}.
If $u:\R\to\R$ is measurable and $\E\big[|u(Z)|\big]<\infty$, then
\[
U(F)=\E\big[u(Z)\big]=\int \psi_u(x)\,F(dx),
\qquad
\psi_u(x):=\E\!\left[u\!\left(w_\gamma\!\left(\frac{E}{x}\right)\right)\right].
\]
\end{lemma}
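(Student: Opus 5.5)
The plan is to combine Proposition~\ref{prop:canonical_rep} with Fubini--Tonelli for the product law of the independent pair $(X,E)$. The first equality, $U(F)=\E[u(Z)]$, is just the definition of $U$: by Proposition~\ref{prop:canonical_rep}, $Z=w_\gamma(E/X)$ has cdf $H_{\gamma,F}$. Since $U(F)$ depends only on the law of $Z$, we may evaluate it on this particular realization.

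For the second equality, independence means the joint law of $(X,E)$ is the product measure $F\otimes\mathrm{Exp}(1)$ on $(0,\infty)^2$. We have $X>0$ almost surely because $F$ lives on $(0,\infty)$, or $\Pr(X=0)=0$ under Assumption~\ref{ass:mixedPoisson}. We also have $E>0$ almost surely. Hence $E/X\in(0,\infty)$ almost surely, and this lies in the domain of $w_\gamma$. The map $(x,e)\mapsto u(w_\gamma(e/x))$ is then Borel measurable on $(0,\infty)^2$, being a composition of the measurable $u$ with the continuous maps $(x,e)\mapsto e/x$ and $w_\gamma$.

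The main technical point is integrability. Apply Tonelli to the nonnegative function $|u(w_\gamma(e/x))|$. This gives
\[
\int\!\!\int |u(w_\gamma(e/x))|\,\mathrm{Exp}(1)(de)\,F(dx)=\E[|u(Z)|]<\infty .
\]
It follows that $\E[|u(w_\gamma(E/x))|]<\infty$ for $F$-almost every $x$, so $\psi_u(x)$ is well defined and finite $F$-a.e. Tonelli also shows that $x\mapsto \E[|u(w_\gamma(E/x))|]$ is $F$-integrable, and hence so is $\psi_u$. On the $F$-null set where this inner expectation diverges, we set $\psi_u:=0$; this does not change $\int\psi_u\,dF$.

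Next apply Fubini to the now integrable function $u(w_\gamma(e/x))$. This yields
\[
\E[u(Z)]=\int\Big(\int u(w_\gamma(e/x))\,\mathrm{Exp}(1)(de)\Big)F(dx)=\int \psi_u(x)\,F(dx).
\]
Measurability of $\psi_u$ in $x$ is also supplied by Fubini, since sections of an integrable product-measurable function integrate to measurable functions.

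This is the same argument as in Lemma~\ref{lem:kernel_expectation_frechet}, now for general $\gamma$. No step is a real obstacle; the only care needed is that $\psi_u$ may be undefined on an $F$-null set, which does not affect the integral.
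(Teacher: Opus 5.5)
Your proof is correct and takes the same route as the paper: use Proposition~\ref{prop:canonical_rep} for the law of $Z=w_\gamma(E/X)$, then take iterated expectations over the independent pair $(X,E)$, which is the paper's entire one-line argument. Your Tonelli/Fubini bookkeeping, including the observation that $\psi_u$ is only guaranteed finite $F$-almost everywhere, just makes explicit the details the paper leaves implicit.
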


\begin{proof}
The distributional representation of $Z$ is Proposition~\ref{prop:canonical_rep}.
The formula for $U(F)$ then follows by iterated expectation.
\end{proof}

\begin{remark}[Fr\'echet specialization]
When $\gamma>0$, Proposition~\ref{prop:canonical_rep} reduces to the product representation from Section~\ref{subsec:pareto_rep}: if $\Xi_\gamma:=E^{-\gamma}$, then
\[
1+\gamma Z=X^\gamma \Xi_\gamma.
\]
\end{remark}

\subsection{Duality and exponential tilting}\label{subsec:duality}

We now solve \eqref{eq:design} in the canonical case where the objective is linear in $F$.
This case is both the baseline design problem and the relevant reduction for expected-utility objectives by Lemma~\ref{lem:exp_rep}.
The exponential-tilt form itself is the standard entropy-projection conclusion.
What is specific here is that the heterogeneous-EVT operator supplies the score $\psi$ and reduces the mean constraint to a one-dimensional dual variable.
The duality argument is recorded in Appendix~\ref{app:tools_entropic}; see also \citet{DonskerVaradhan1975}, \citet{Leonard2014}, \citet{GhosalNutzBernton2022}, and \citet{Nutz2022}, for background on the broader Schr\"odinger and entropic-transport literature.

\begin{assumption}[Linear objective, local integrability, and interiority]\label{ass:linear_objective}
There exists a measurable function $\psi:(0,\infty)\to\R$ such that
\[
U(F)=\int \psi(x)\,F(dx)
\quad\text{for all }F\text{ with }\E_F[X]=1\text{ and }D_{\mathrm{KL}}(F\|F_0)<\infty.
\]
For $\eta\in\R$, define
\[
Z(\eta):=\int \exp\!\left(\frac{\psi(x)+\eta x}{\lambda}\right)\,F_0(dx)
\]
and, for $k\in\{1,2\}$,
\[
M_k(\eta):=\int x^k\exp\!\left(\frac{\psi(x)+\eta x}{\lambda}\right)\,F_0(dx).
\]
Assume there exists a nonempty open interval $\mathcal D\subset\R$ such that, for all $\eta\in\mathcal D$,
\[
Z(\eta)<\infty,\qquad M_1(\eta)<\infty,\qquad M_2(\eta)<\infty,
\qquad
\int |\psi(x)|\exp\!\left(\frac{\psi(x)+\eta x}{\lambda}\right)\,F_0(dx)<\infty.
\]
Define
\[
m(\eta):=\frac{M_1(\eta)}{Z(\eta)}.
\]
Assume moreover that there exist $\eta_-,\eta_+\in\mathcal D$ with
\[
m(\eta_-)<1<m(\eta_+).
\]
\end{assumption}

\begin{theorem}[Strong duality and exponential tilt]\label{thm:entropic_duality}
Suppose Assumption~\ref{ass:linear_objective} holds.
Then there exists a unique scalar $\eta^\star\in\mathcal D$ satisfying $m(\eta^\star)=1$.
The design problem \eqref{eq:design} admits a unique optimizer $F^\star$, and $F^\star$ is given by the exponential tilt
\begin{equation}\label{eq:tilt}
\frac{dF^\star}{dF_0}(x)
=
\frac{\exp\!\left(\frac{\psi(x)+\eta^\star x}{\lambda}\right)}
{\int \exp\!\left(\frac{\psi(t)+\eta^\star t}{\lambda}\right)\,F_0(dt)}.
\end{equation}
Moreover, $\E_{F^\star}[X]=1$.

The optimal value of \eqref{eq:design} admits the dual representation
\begin{equation}\label{eq:dual_eta}
\sup_{F:\E_F[X]=1}\Big\{U(F)-\lambda D_{\mathrm{KL}}(F\|F_0)\Big\}
=
\inf_{\eta\in\mathcal D}\left\{
\lambda \log\!\int \exp\!\left(\frac{\psi(x)+\eta x}{\lambda}\right)\,F_0(dx)
-\eta
\right\}.
\end{equation}
\end{theorem}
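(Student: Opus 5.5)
The plan is the standard Gibbs variational argument with a one-dimensional Lagrange multiplier for the mean constraint.

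\emph{Step 1: root $\eta^\star$.} On $\mathcal D$, define $\Lambda(\eta):=\lambda\log Z(\eta)$. By dominated convergence, using the finiteness of $Z$, $M_1$ and $M_2$ on the open interval $\mathcal D$, $\Lambda$ is twice differentiable. Its derivatives are $\Lambda'(\eta)=m(\eta)$ and $\Lambda''(\eta)=\lambda^{-1}\Var_{F_\eta}(X)$, where $F_\eta$ denotes the tilted law with density $\exp((\psi+\eta x)/\lambda)/Z(\eta)$ with respect to $F_0$. Because $m(\eta_-)<1<m(\eta_+)$, the law $F_\eta$ cannot be a point mass. Its variance is therefore strictly positive, so $m$ is continuous and strictly increasing on $\mathcal D$. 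To keep the intermediate value argument inside $\mathcal D$, I will use that $\mathcal D$ is an interval, so $[\eta_-,\eta_+]\subset\mathcal D$. The intermediate value theorem then yields $\eta^\star$ with $m(\eta^\star)=1$, and strict monotonicity gives uniqueness. Setting $F^\star:=F_{\eta^\star}$, we get $\E_{F^\star}[X]=1$ directly from the definition of $m$.

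\emph{Step 2: Gibbs inequality.} Take any $F$ with $\E_F[X]=1$, $D_{\mathrm{KL}}(F\|F_0)<\infty$ and $\int\psi\,dF$ well defined. For any $\eta\in\mathcal D$ the identity
\[
D_{\mathrm{KL}}(F\|F_\eta)=D_{\mathrm{KL}}(F\|F_0)-\lambda^{-1}\!\int(\psi+\eta x)\,dF+\log Z(\eta)
\]
holds. Rearranging and using $\E_F[X]=1$ gives
\[
U(F)-\lambda D_{\mathrm{KL}}(F\|F_0)=\Lambda(\eta)-\eta-\lambda D_{\mathrm{KL}}(F\|F_\eta)\le\Lambda(\eta)-\eta .
\]
This is weak duality. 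The main obstacle is making the identity legitimate, since it splits $\int\log(dF/dF_\eta)\,dF$ into pieces that could be $\infty-\infty$. My first attempt is to note that finiteness of $D_{\mathrm{KL}}(F\|F_0)$, together with the Donsker--Varadhan bound $\int g\,dF\le D_{\mathrm{KL}}(F\|F_0)+\log\int e^{g}dF_0$ applied to $g=(\psi+\eta x)/\lambda$, gives a finite upper bound on the positive part of $\int(\psi+\eta x)\,dF$. That rules out the value $+\infty$ for this integral, and the case $-\infty$ makes the objective $-\infty$, so it is trivially dominated. I will cite the Appendix~\ref{app:tools_entropic} duality tool for this step.

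\emph{Step 3: attainment, uniqueness and dual equality.} At $\eta=\eta^\star$ and $F=F^\star$, the term $D_{\mathrm{KL}}(F^\star\|F_{\eta^\star})$ vanishes. The integrability conditions $\int|\psi|e^{(\psi+\eta x)/\lambda}dF_0<\infty$ and $M_1<\infty$ make all terms finite, so the upper bound is attained. Hence $F^\star$ is optimal with value $\Lambda(\eta^\star)-\eta^\star$.

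Uniqueness follows because any other optimizer $F$ must satisfy $D_{\mathrm{KL}}(F\|F^\star)=0$, which forces $F=F^\star$.

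For the dual representation \eqref{eq:dual_eta}, weak duality holds for every $\eta\in\mathcal D$, so the primal value is at most $\inf_{\eta\in\mathcal D}\{\Lambda(\eta)-\eta\}$. The function $\eta\mapsto\Lambda(\eta)-\eta$ is convex with derivative $m(\eta)-1$, so it is minimized at $\eta^\star$ and the infimum equals $\Lambda(\eta^\star)-\eta^\star$. This matches the primal value from attainment, so equality holds in \eqref{eq:dual_eta}.
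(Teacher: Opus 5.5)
Your proposal is correct and follows the same architecture as the paper's proof: strict convexity of $\lambda\log Z(\eta)-\eta$ via $\lambda^{-1}\Var_{F_\eta}(X)>0$, the intermediate value theorem to get $\eta^\star$, Gibbs/Donsker--Varadhan weak duality, and attainment at $F_{\eta^\star}$ using the integrability conditions in Assumption~\ref{ass:linear_objective}. The only real difference is that you keep the remainder $\lambda D_{\mathrm{KL}}(F\|F_\eta)$ explicit, so uniqueness follows directly from $D_{\mathrm{KL}}(F\|F^\star)=0$ rather than from the paper's strict-concavity argument.
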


\paragraph{Comment.}
Equation \eqref{eq:tilt} is the key structural conclusion of the section.
Relative to $F_0$, the planner reweights types with higher score $\psi(x)$, with the magnitude of reweighting disciplined by $\lambda$.
The scalar $\eta^\star$ is the Lagrange multiplier on the mean one constraint and is pinned down by a single scalar equation $m(\eta)=1$.
With the sign convention in \eqref{eq:dual_eta}, the shadow value of relaxing the target mean upward is $-\eta^\star$.
The novelty here is not the Gibbs form by itself, but the identification of the heterogeneous-EVT score $\psi$ and the explicit one-dimensional implementation of the mean constraint.
The interiority assumption in Assumption~\ref{ass:linear_objective} is the precise condition that guarantees existence of a finite dual optimizer.
In many applications it is verified through a standard steepness argument for the tilted mean $m(\eta)$.

A complete proof of Theorem~\ref{thm:entropic_duality} is given in Appendix~A.
It applies the entropy-duality tools in Appendix~\ref{app:tools_entropic}, but the proof is written directly for the one-dimensional mean constraint used here so that the existence of the dual optimizer is explicit rather than implicit.

\subsection{First-order conditions and the link to the heterogeneous EVT kernel}\label{subsec:connect_A_C}

The design problem inherits its structure from the heterogeneous extreme-value operator
\[
H_{\gamma,F}(x)=\int e^{-v_\gamma(x)z}\,F(dz).
\]
Section~\ref{subsec:frechet_derivative} records the corresponding directional derivative,
\[
\delta H_{\gamma,F}(x)=\int e^{-v_\gamma(x)z}\,\delta F(dz),
\]
which makes explicit that the Laplace kernel $z\mapsto e^{-v_\gamma(x)z}$ is the marginal channel through which perturbations of $F$ affect the distribution of extremes.

In the linear class treated in Theorem~\ref{thm:entropic_duality}, the score function $\psi$ is the marginal value of shifting mass toward type $x$.
For expected-utility objectives $U(F)=\E[u(Z)]$, Lemma~\ref{lem:exp_rep} identifies this marginal value explicitly as $\psi_u(x)=\E[u(w_\gamma(E/x))]$.
This is the direct bridge to Section~\ref{sec:Wstab}: the same stochastic representation that yields linearity for design also provides the canonical coupling device for stability.

\begin{remark}[Direct cdf criteria use the EVT kernel literally]\label{rem:direct_kernel_criteria}
For any fixed threshold $y$, the criterion
\[
U_y(F):=H_{\gamma,F}(y)
\]
is already linear in $F$, with score
\[
\psi_y(x)=e^{-v_\gamma(y)x}.
\]
More generally, any weighted average of cdf levels of the form $\int H_{\gamma,F}(y)\,\nu(dy)$ has score $x\mapsto \int e^{-v_\gamma(y)x}\,\nu(dy)$ whenever the integral is finite.
This is the cleanest sense in which the same kernel governs both the heterogeneous extreme-value operator and a nontrivial class of normative objectives.
\end{remark}

\subsection{Closed-form solutions for canonical tail objectives}\label{subsec:closed_form}

Closed-form solutions are especially transparent when $\psi$ is a simple transform of $x$.
This class is relevant economically because several extreme outcome statistics in heterogeneous EVT reduces to moments or inverse moments of $X$ under Fr\'echet-type normalization, as in \citet{Mangin2025}.
The formulas below should be read together with their admissibility conditions: the exponential tilt is meaningful only when Assumption~\ref{ass:linear_objective} is satisfied.

\paragraph{Power and inverse-power objectives.}
Let $\rho>0$ and consider objectives of the form
\[
U(F)=C\cdot \E_F\!\big[X^{-\rho}\big]
\quad\text{or}\quad
U(F)=C\cdot \E_F\!\big[X^{\rho}\big],
\]
for a constant $C\in\R$.
These correspond to $\psi(x)=C x^{-\rho}$ or $\psi(x)=C x^\rho$ in Assumption~\ref{ass:linear_objective}.
When the relevant integrability conditions hold, Theorem~\ref{thm:entropic_duality} yields
\[
\frac{dF^\star}{dF_0}(x)\propto \exp\!\left(\frac{C x^{-\rho}+\eta^\star x}{\lambda}\right)
\quad\text{or}\quad
\frac{dF^\star}{dF_0}(x)\propto \exp\!\left(\frac{C x^{\rho}+\eta^\star x}{\lambda}\right),
\]
with $\eta^\star$ chosen so that $\E_{F^\star}[X]=1$.

\paragraph{Admissibility conditions.}
For inverse-power scores $x\mapsto Cx^{-\rho}$, a sufficient condition is that the baseline support be bounded away from zero, say $\mathrm{supp}(F_0)\subseteq [a,\infty)$ with $a>0$; without such a condition, the case of the positive coefficient $C>0$ typically makes the normalizing integral diverge near zero.
For positive-power scores $x\mapsto Cx^{\rho}$, admissibility depends on the right tail of $F_0$.
A sufficient condition is that $F_0$ be light enough so that
\[
\int \exp\!\left(\frac{C x^{\rho}+\eta x}{\lambda}\right)\,F_0(dx)<\infty
\]
for all $\eta$ in some open interval containing the optimizer.
When $\rho>1$ and $C>0$, this excludes exponential or heavier baseline tails.
These support and tail restrictions are not technical decoration; they are exactly what ensures the closed-form tilt is well defined.

\paragraph{Implementation.}
In applications, $\eta^\star$ is obtained by solving the scalar equation $m(\eta)=1$.
The value function is then obtained from the dual expression \eqref{eq:dual_eta}.
The entire computation is one-dimensional once the score $\psi$ is known.

\subsection{Interpretation: policy as reallocation of opportunities}\label{subsec:policy_interp}

The entropy penalty $D_{\mathrm{KL}}(F\|F_0)$ imposes a reduced-form cost of reallocating opportunities across types.
In a labor market network interpretation, $F$ can for instance represent the distribution of effective meeting rates or centralities, and policy changes referral frictions, search subsidies, or platform rules that alter the induced $F$.
The entropy term summarizes implementation frictions that render highly concentrated reallocations costly.
In richer environments, one can interpret the exponential tilt as the reduced-form allocation induced by subsidy, tax, or referral policies that change the private return to search intensity.
It is beyond the scope of this paper to study decentralized implementation, but we note that the one-dimensional form of \eqref{eq:tilt} makes that question especially tractable.

A natural extension, which we keep separate from the core analysis to preserve the one-dimensional tractability of \eqref{eq:design}, replaces the marginal penalty by a genuine entropic transport problem on a richer type space with cost $c(x_0,x)$.
That extension would produce a pair of dual potentials on the source and target type spaces.
The present section should therefore be read as an entropy projection on the target marginal, not as a full Schr\"odinger bridge.

\paragraph{Quantile objectives.}
If the objective is a high quantile, $U(F)=\mathrm{Q}_{1-\alpha}(H_{\gamma,F})$, then $U$ is typically not linear in $F$.
Two tractable approaches are to work with smoothed tail criteria or tail expectations that admit linear representations via Lemma~\ref{lem:exp_rep}, or to characterize optimizers through first-order conditions using the directional derivative of $F\mapsto H_{\gamma,F}$ from Section~\ref{subsec:frechet_derivative}.
We implement the linear class in the core text, leaving exact quantile design as a promising extension.

\section{Application: labor market networks and the distribution of top wages}\label{sec:apps}

This section illustrates how heterogeneous extreme-value theory can be used as a reduced-form device in labor market settings where access to job opportunities is mediated by social networks.
The key object is the cross-sectional distribution $F$ of effective offer-arrival intensities induced by heterogeneity in network position, referrals, and related opportunity channels.
Once $F$ is specified or measured, the general results in Sections~\ref{sec:geometry}, \ref{sec:Wstab}, and \ref{sec:entropic} translate into three types of statements:
(i) comparative statics for the right tail of wages as a function of network inequality,
(ii) robustness bounds for tail predictions under measurement or estimation error in $F$, and
(iii) an entropy-regularized policy problem that reallocates opportunities subject to implementation frictions.

While the broader labor network motivation was discussed in Section~\ref{subsec:literature}, here our goal is to translate the general results into a familiar reduced-form environment for tail wage outcomes.
Throughout, the object remains the cross-sectional law of a randomly drawn worker's normalized maximum under a common offer distribution.
Remark that we do not seek to model herein equilibrium wage setting, endogenous search, or welfare.
Accordingly, statements below about segregation, homophily, or policy should be read as claims about how those forces reshape the reduced-form access distribution $F$, not as standalone equilibrium or welfare conclusions.

\subsection{A stylized network-based search environment}\label{subsec:net_env}

\paragraph{Types and offer arrivals.}
Consider a large population of workers indexed by $i$.
Each worker has a network position summarized by a scalar type $X_i>0$ capturing effective access to job opportunities.
The interpretation of $X_i$ is as a shifter of job offer arrival rates, generated by a referral network, local contacts, or platform-mediated matching.
We impose the normalization $\E[X_i]=1$, so the average scale of opportunities is fixed and the object of interest is the cross-sectional distribution of access.

Fix a market thickness or horizon parameter $\theta>0$.
Conditional on $X_i$, the number of offers received by worker $i$ follows the mixed-Poisson specification from Section~\ref{sec:env}:
\[
N_{i,\theta}\mid X_i \sim \mathrm{Poisson}(\theta X_i).
\]
Let $F$ denote the cross-sectional distribution of $X_i$ in the population, so $X\sim F$ for a randomly drawn worker.

\paragraph{Offer values and realized wages.}
Let $\{Y_{ij}\}_{j\ge 1}$ be i.i.d.\ wage offers with common distribution $G$, independent of $(X_i,N_{i,\theta})$.
Worker $i$ accepts the best offer over the horizon,
\[
M_{i,\theta}:=\sup_{1\le j\le N_{i,\theta}} Y_{ij},
\]
with the convention $\sup\emptyset=-\infty$.
The economic content is that heterogeneity enters through access to opportunities, while the distribution of offer values is common across workers.

\paragraph{From network primitives to $F$.}
The distribution $F$ can be linked to standard network objects.
As a benchmark, suppose workers are nodes in an undirected contact network and each neighbor generates job leads at approximately independent random times.
Conditional on degree $D_i$, the number of leads over a horizon is then well approximated by a Poisson random variable with mean proportional to $\theta D_i$.
After normalization, one can set $X_i:=D_i/\E[D]$, so that $\E[X_i]=1$ and $F$ is the distribution of normalized degrees.
With weighted links, one can take $X_i$ as normalized weighted degree.
If the arrival mechanism aggregates information over longer paths, $X_i$ can represent a normalized centrality index.
The analysis that follows uses $F$ as the reduced-form sufficient statistic for the cross-sectional opportunity structure.

\paragraph{Tail normalization and the heterogeneous extreme-value law.}
Assume $G$ is in the max domain of attraction of the generalized extreme-value law $H_\gamma$ with index $\gamma$.
Under the normalization in Section~\ref{subsec:evt_setup}, the distribution of normalized outcomes for a randomly drawn worker converges to the heterogenous extreme-value law from Proposition~\ref{prop:HEV}:
\[
H_{\gamma,F}(x)=\E\!\left[\exp\!\big(-X\,v_\gamma(x)\big)\right],
\qquad X\sim F.
\]
Thus, once $F$ is specified or estimated, the model delivers a tractable mapping from network heterogeneity into the cross-sectional distribution of workers' top wages in the sense of maxima over offers.

\subsection{Positive implications: network heterogeneity and the right tail}\label{subsec:positive}

\paragraph{Network inequality and tail losses under convex order.}
A natural notion of increased network inequality is a mean-preserving spread of $F$ holding $\E[X]=1$ fixed, equivalently an increase in the convex order; see, e.g., \citet{RothschildStiglitz1970} and \citet{ShakedShanthikumar2007}.
In the present setting, this order has a direct implication for extremes because, for each $z>0$, the kernel $x\mapsto \exp(-zx)$ is convex and decreasing.
This is precisely the mechanism behind the Jensen and Laplace-ordering results in Section~\ref{subsec:order}.

\begin{corollary}[Network inequality and normalized top wages]\label{cor:net_inequality}
Fix $\gamma$ and let $F_1,F_2$ satisfy $\E[X]=1$.
If $F_2$ is a mean-preserving spread of $F_1$, then
\[
H_{\gamma,F_2}(x)\ge H_{\gamma,F_1}(x)
\quad\text{for all }x,
\]
so the corresponding limit distribution of normalized top wages under $F_2$ is first-order stochastically smaller than under $F_1$.
Equivalently, for any increasing measurable $\varphi$ for which expectations are finite,
\[
\E\big[\varphi(Z_2)\big]\le \E\big[\varphi(Z_1)\big],
\qquad Z_k\sim H_{\gamma,F_k}.
\]
\end{corollary}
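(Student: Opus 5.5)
This corollary is a relabeling of Proposition~\ref{prop:laplace_order} in network language, together with the standard equivalence between pointwise cdf dominance and first-order stochastic dominance. I will first record that a mean-preserving spread with $\E[X]=1$ is exactly the relation $F_2\succeq_{\mathrm{cx}}F_1$ of Definition~\ref{def:convex_order}. For each $z\ge 0$ the kernel $x\mapsto e^{-zx}$ is convex and bounded on $[0,\infty)$, so both expectations are finite and convex order gives $P_0^{(2)}(z)\ge P_0^{(1)}(z)$.

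Next I evaluate this at $z=v_\gamma(x)$, which gives $H_{\gamma,F_2}(x)\ge H_{\gamma,F_1}(x)$ whenever $v_\gamma(x)<\infty$. The statement says ``for all $x$'', so I handle the boundary conventions separately.
\begin{itemize}
\item If $v_\gamma(x)=+\infty$ (for $\gamma>0$ with $x\le-1/\gamma$), then $H_\gamma(x)=0$ and $\Pr(X=0)=0$, so both laws equal $0$.
\item If $v_\gamma(x)=0$ (for $\gamma<0$ with $x\ge -1/\gamma$), both laws equal $P_0(0)=1$.
\end{itemize}
The inequality therefore holds on all of $\R$.

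For the expectation statement, I will use the coupling implied by cdf dominance. Let $U\sim\mathrm{Unif}(0,1)$ and set $Z_k:=Q_{\gamma,F_k}(U)$, where $Q_{\gamma,F_k}$ is the generalized inverse. Pointwise $H_{\gamma,F_2}\ge H_{\gamma,F_1}$ implies $Q_{\gamma,F_2}\le Q_{\gamma,F_1}$ on $(0,1)$. Hence $Z_2\le Z_1$ almost surely, and each $Z_k\sim H_{\gamma,F_k}$. For increasing measurable $\varphi$ this gives $\varphi(Z_2)\le\varphi(Z_1)$ almost surely, and taking expectations (finite by hypothesis) yields $\E[\varphi(Z_2)]\le\E[\varphi(Z_1)]$. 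The converse direction of the ``equivalently'' follows by taking $\varphi=\mathbf 1_{(x,\infty)}$, which recovers the cdf inequality.

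The only delicate point is the bookkeeping for the extended values of $v_\gamma$ at the support endpoints and the possibility of atoms at $\pm$ endpoints when $\gamma\ne0$. These are dispatched by the conventions for $H_\gamma$ and the exclusion of an atom of $X$ at zero.
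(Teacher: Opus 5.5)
Your proposal is correct and follows the paper's route: the corollary is Proposition~\ref{prop:laplace_order} restated, obtained by applying convex order to the bounded convex kernel $x\mapsto e^{-zx}$ and then setting $z=v_\gamma(x)$. Your explicit handling of the endpoint cases ($v_\gamma(x)=+\infty$ or $v_\gamma(x)=0$) and your quantile-coupling proof of the equivalence with $\E[\varphi(Z_2)]\le\E[\varphi(Z_1)]$ fill in steps the paper leaves implicit when it calls cdf dominance ``equivalent to first-order stochastic dominance.''
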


Corollary~\ref{cor:net_inequality} isolates a clean reduced-form message about the right tail.
Holding fixed the mean scale of opportunities and the common offer distribution, redistributing access more unequally lowers the distribution of maxima for a randomly drawn worker because the Laplace kernel is convex in types.
In this sense, homophily, segregation, or referral frictions matter here through the extent to which they reshape the opportunity distribution $F$.
At the same time, one should not overread the result at the furthest tail.
By Remark~\ref{rem:first_order_tail_equivalence}, the mean one normalization implies $1-H_{\gamma,F}(x)\sim v_\gamma(x)\sim 1-H_\gamma(x)$ whenever $v_\gamma(x)\downarrow 0$, so greater network inequality changes distributional levels and high quantiles, but not the first-order ultra-tail mass.
Translating that comparative static into welfare, equilibrium efficiency, or group incidence would require additional structure on search behavior, offer-quality heterogeneity, and market clearing.

\paragraph{A concrete rewiring counterfactual.}
Consider a referral-platform redesign, mentoring intervention, or local rewiring experiment that changes the empirical distribution of normalized degree from $F^0$ to $F^1$ while preserving the population mean.
If $F^1$ is a mean-preserving contraction of $F^0$, Corollary~\ref{cor:net_inequality} implies an everywhere improvement in the normalized top wage distribution for a randomly drawn worker.
If the change is not ordered by convex order, the adapted geometry below still provide a disciplined whole law comparison.

\paragraph{A simple numerical illustration.}
Consider a baseline network distribution
\[
F_0=0.8\,\delta_{0.5}+0.2\,\delta_{3},
\]
which can be read as a labor market with many peripheral workers and a smaller highly connected group, while preserving the normalization $\E[X]=1$.
At the centered threshold $x=0$, one has $v_\gamma(0)=1$ for every $\gamma$, so
\[
H_{\gamma,F_0}(0)=P_0(1)=0.8e^{-0.5}+0.2e^{-3}\approx 0.495,
\qquad
H_\gamma(0)=e^{-1}\approx 0.368.
\]
Thus the unequal-opportunity economy places about $12.7$ percentage points more mass below the centered threshold, equivalently about $12.7$ percentage points less mass above it, than the homogeneous benchmark.
The corresponding misallocation indices are $\mathcal M_1(F_0)=\E[|X-1|]=0.8$ and $\mathcal M_2(F_0)=\sqrt{\Var(X)}=1$.
This simple example makes concrete how a small, highly connected minority can coexist with a materially weaker distribution of normalized top wages for a randomly drawn worker.

\paragraph{Geometric counterfactual paths.}
When two economies with type distributions $F^0$ and $F^1$ are not ordered by convex order, the adapted-Wasserstein geometry can still provide, quite remarkably, a disciplined comparison.
Corollary~\ref{cor:adapted_geodesic_stability} constructs a canonical interpolation path $(F^t)_{t\in[0,1]}$ in the ambient transformed type geometry such that the induced law of top wages moves at most linearly in path length.
Corollary~\ref{cor:quantile_schedule_stability} turns the same result into an economically direct statement for the entire counterfactual schedule of top wage quantiles:
\[
\|Q_{\gamma,F^s}-Q_{\gamma,F^t}\|_{L^p(0,1)}
\le
C_{\gamma,p}\,|t-s|\,d_{\gamma,p}(F^0,F^1).
\]
Thus the adapted metric is not only a pointwise cdf device. It controls the whole wage distribution in quantile space, which is the natural object for many counterfactual exercises.
Because transformed coordinate geodesics need not preserve $\E[X]=1$ at intermediate times when $s_\gamma$ is nonlinear, this path should be read as a sensitivity device in ambient shape space, not as a literal mean-preserving policy path.
If a counterfactual must keep aggregate opportunity scale fixed throughout, Proposition~\ref{prop:renormalization_bridge} quantifies the discrepancy between the ambient path and its mean one renormalization, while the raw space mean-preserving geodesics of Section~\ref{sec:geometry} remain available when exact preservation is required throughout.

\paragraph{Robust tail prediction under measurement error in $F$.}
In practice, $F$ can be estimated from network data or inferred from proxies for the intensity of job contact.
The Wasserstein stability results in Section~\ref{sec:Wstab} then provide a direct way to translate estimation error in $F$ into error bounds for predicted tail outcomes.
The main linear bound is stated in the adapted metric $d_{\gamma,p}$.
When the available statistical control is instead in raw space Wasserstein distance, Proposition~\ref{prop:type_metric_controls} provides the additional bridge, which can be weaker in the Fr\'echet regime.

\begin{corollary}[Robustness of predicted tail distributions]\label{cor:net_robustness}
Let $\widehat F$ be an estimator of $F$ with $\E[X]=1$.
Assume the conditions of Theorem~\ref{thm:stability_HEV} hold for some $p\ge 1$.
Then the induced error on the predicted distribution of normalized top wages satisfies
\[
W_p\!\left(H_{\gamma,\widehat F},H_{\gamma,F}\right)
\le
C_{\gamma,p}\,d_{\gamma,p}\!\left(\widehat F,F\right),
\]
for the modulus $C_{\gamma,p}$ characterized in Section~\ref{sec:Wstab}.
Under the support and tail-shape conditions of Proposition~\ref{prop:type_metric_controls}, the adapted distance on the right-hand side can in turn be controlled by a standard Wasserstein distance on the type space.
Moreover, if $\psi$ is Lipschitz on the support of $H_{\gamma,F}$ and the relevant moments exist, then
\[
\left|\E[\psi(Z_{\widehat F})]-\E[\psi(Z_{F})]\right|
\le
\mathrm{Lip}(\psi)\,W_1\!\left(H_{\gamma,\widehat F},H_{\gamma,F}\right),
\]
where $Z_{\widehat F}\sim H_{\gamma,\widehat F}$ and $Z_F\sim H_{\gamma,F}$.
\end{corollary}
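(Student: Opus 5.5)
The corollary combines three earlier results: Theorem~\ref{thm:stability_HEV}, Proposition~\ref{prop:type_metric_controls} and Proposition~\ref{prop:lip_functionals_limit}. I would prove it in three steps, one for each part of the statement.

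\textbf{Step 1: the Wasserstein bound.} Set $F_1:=\widehat F$ and $F_2:=F$. By hypothesis the conditions of Theorem~\ref{thm:stability_HEV} hold for this pair. That means $d_{\gamma,p}(\widehat F,F)<\infty$, and $p\gamma<1$ when $\gamma>0$. The theorem then gives $W_p(H_{\gamma,\widehat F},H_{\gamma,F})\le C_{\gamma,p}\,d_{\gamma,p}(\widehat F,F)$, where
\[
C_{\gamma,p}=\frac{\left(\E[|\Xi_\gamma|^p]\right)^{1/p}}{|\gamma|}\quad(\gamma\neq 0),\qquad C_{0,p}=1 .
\]
For $\gamma>0$, Remark~\ref{rem:moment_intrinsic} gives $C_{\gamma,p}=\Gamma(1-p\gamma)^{1/p}/\gamma$. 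The mean-one normalization of $\widehat F$ and $F$ plays no role here, since the theorem is stated on the ambient space.

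\textbf{Step 2: passing to a raw-space distance.} This step needs $\widehat F,F\in\cP_p([0,\infty))$, plus a lower support bound $a>0$ when $\gamma\le 0$. Under these conditions Proposition~\ref{prop:type_metric_controls} gives:
\begin{itemize}
\item for $0<\gamma\le 1$: $d_{\gamma,p}\le W_p^\gamma$;
\item for $\gamma<0$: $d_{\gamma,p}\le |\gamma|a^{\gamma-1}W_p$;
\item for $\gamma=0$: $d_{0,p}\le a^{-1}W_p$.
\end{itemize}
Substituting into Step 1 yields a bound in terms of $W_p(\widehat F,F)$. It is Hölder of order $\gamma$ in the Fréchet case and linear otherwise.

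\textbf{Step 3: Lipschitz functionals.} I would apply Proposition~\ref{prop:lip_functionals_limit} with $Z_1\sim H_{\gamma,\widehat F}$ and $Z_2\sim H_{\gamma,F}$. This step has two points to check.

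\emph{Finiteness of $W_1$.} Proposition~\ref{prop:lip_functionals_limit} requires $W_1<\infty$. Since $p\ge 1$, Jensen's inequality gives $W_1\le W_p$, and $W_p<\infty$ by Step 1.

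\emph{$\psi$ Lipschitz only on the support.} The statement assumes $\psi$ is Lipschitz only on the support of $H_{\gamma,F}$, whereas Proposition~\ref{prop:lip_functionals_limit} assumes it is Lipschitz on all of $\R$. This is the one real obstacle. Both limit laws have the same support, namely the support of $H_\gamma$. To see this, use the canonical representation of Proposition~\ref{prop:canonical_rep}, $Z=w_\gamma(E/X)$: the range of $w_\gamma$ on $(0,\infty)$ does not depend on $F$. The support is an interval, so I would extend $\psi$ to $\R$ by a constant beyond each finite endpoint, which keeps the Lipschitz constant equal to $\mathrm{Lip}(\psi)$. Alternatively, one can argue directly: take the optimal $W_1$ coupling of the two laws, which lives on the product of the supports. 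Then $|\E\psi(Z_1)-\E\psi(Z_2)|\le \mathrm{Lip}(\psi)\,\E|Z_1-Z_2|$. The assumed existence of the relevant moments guarantees $\E[\psi(Z_i)]$ is finite.
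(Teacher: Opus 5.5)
Your proposal is correct and follows the route the paper intends. The paper gives no separate proof of this corollary; it treats it as a direct combination of Theorem~\ref{thm:stability_HEV}, Proposition~\ref{prop:type_metric_controls}, and Proposition~\ref{prop:lip_functionals_limit}. Your extra check that $H_{\gamma,\widehat F}$ and $H_{\gamma,F}$ both have the support of $H_\gamma$ (because $E/X$ has full support on $(0,\infty)$), so that a $\psi$ Lipschitz only there extends to $\R$ with the same constant, fills in a detail the paper leaves implicit.
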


Corollary~\ref{cor:net_robustness} is the operational bridge between statistical or measurement uncertainty in network heterogeneity and uncertainty in predicted tail-wage outcomes, but only after the maintained comparison metric has been specified clearly.
If the researcher controls $d_{\gamma,p}$ directly, the bound is linear.
If the researcher begins from raw space Wasserstein error, the relevant conclusion is obtained only after applying Proposition~\ref{prop:type_metric_controls}, and in Fr\'echet settings that bridge can be merely H\"older.
For peripheral or intermittently inactive workers, the relevant issue is behavior near $X=0$.
Proposition~\ref{prop:domain_dgamma} makes the boundary explicit: the adapted transport bounds remain available as long as the transformed moments required by $d_{\gamma,p}$ are finite.
When those conditions fail, one should fall back on the pointwise or finite horizon comparisons rather than on the adapted metric.

\paragraph{Group heterogeneity, segregation, and tail gaps.}
The mapping can of course be aplied group by group.
If two groups $g\in\{A,B\}$ face different induced opportunity distributions $F_g$ due to segregation, homophily, or differential access to referrals, then the model implies group-specific limit laws $H_{\gamma,F_g}$.
The geometric tools in Sections~\ref{sec:geometry} and \ref{sec:Wstab} then provide two complementary comparisons:
order comparisons when $F_A$ dominates $F_B$ in convex order, and metric comparisons when $F_A$ and $F_B$ are close in Wasserstein distance but not ordered.

\subsection{Normative implications: opportunity reallocation and policy design}\label{subsec:normative}

Many interventions that affect labor market outcomes operate through changing access to opportunities, rather than directly changing offer values.
Examples include referral programs, mentoring and placement initiatives, changes in search subsidies, and institutional designs that affect who meets whom.
At an abstract level, such policies can all be represented as perturbations of the induced type distribution $F$.

Section~\ref{sec:entropic} provides a tractable design problem in which a planner chooses $F$ to optimize a tail-sensitive objective while controlling deviations from a baseline $F_0$ via relative entropy.
In the present setting, a canonical objective is expected utility of normalized top wages,
\[
U(F)=\E[u(Z)],\qquad Z\sim H_{\gamma,F},
\]
or a smoothed tail criterion that places higher marginal weight on large wage realizations.
The planner problem takes the form
\[
\sup_{F:\ \E_F[X]=1}\Big\{U(F)-\lambda D_{\mathrm{KL}}(F\|F_0)\Big\},
\]
where $\lambda$ governs the marginal cost of reshaping the cross-sectional distribution of access.
When $U$ is linear in $F$, including the expected-utility class identified via the stochastic representation in Section~\ref{subsec:frechet_linearization}, Theorem~\ref{thm:entropic_duality} yields a unique optimizer $F^\star$ given by an exponential tilt of $F_0$.

\paragraph{A discrete illustration with genuine degrees of freedom.}
To make the mean-one constraint non-vacuous, consider a baseline with three support points,
\[
F_0=\pi_1^0\,\delta_{x_1}+\pi_2^0\,\delta_{x_2}+\pi_3^0\,\delta_{x_3},
\qquad 0<x_1<x_2<x_3,
\]
with $\sum_j \pi_j^0=1$ and $\sum_j \pi_j^0 x_j=1$.
The optimal policy keeps the same support and reweights the masses according to
\[
\pi_j^\star
\propto
\pi_j^0\exp\!\left(\frac{\psi(x_j)+\eta^\star x_j}{\lambda}\right),
\qquad j=1,2,3,
\]
where $\eta^\star$ is chosen so that $\sum_j \pi_j^\star x_j=1$.
Pairwise odds satisfy
\[
\frac{\pi_i^\star/\pi_j^\star}{\pi_i^0/\pi_j^0}
=
\exp\!\left(\frac{\psi(x_i)-\psi(x_j)+\eta^\star(x_i-x_j)}{\lambda}\right).
\]
Unlike the two-point case, the mean-one restriction doesn't pin down the feasible distribution uniquely, so the entropy penalty and the score produce a genuine tradeoff.
This pairwise-odds equation is the discrete reduced-form analogue of the full optimal policy tradeoff in our network application: the tail score $\psi$, the implementation friction $\lambda$, and the shadow value $\eta^\star$ jointly determine how mass is reallocated across access types.

This upshot formalizes the idea that improving tail outcomes can require reallocating access across the population, but such reallocations are constrained by implementation frictions.
It also complements richer normative analyses of network policy by isolating the tail-sensitive, reduced-form margin through which access is reweighted.

\subsection{Microfoundations and discussion}\label{subsec:microfoundations}

The mixed-Poisson arrival structure can be grounded in standard network-based mechanisms of job search.
For instance, if workers receive job leads from contacts and each contact generates opportunities at approximately independent random times, then conditional on network exposure the number of leads over a horizon is naturally approximated by a Poisson random variable with mean proportional to a network index.
This is the reduced-form counterpart of the job search mechanisms discussed in the network-of-contacts literature overviewed in Section~\ref{subsec:literature}.

In modern settings, $X$ can be interpreted more broadly than degree.
For instance, it can incorporate platform-mediated matching intensity, differential search effort, or institution-specific access, while still entering the analysis only through its distribution $F$ and the normalization $\E[X]=1$.
This interpretation is especially natural when referral processes, homophily, or segregation generate persistent differences in access to opportunities across workers or groups.

\subsection{Empirical interface and identification caveats}\label{subsec:empirical}

The framework suggests a useful empirical interface, but the scope of identification should be delimited carefully.

\paragraph{Counts point-identify $F$ in population, but inversion is ill-posed.}
If one observes offer counts at a single horizon $\theta$, then the data identify the count distribution and its pgf.
Under the mixed-Poisson structure, that data-identified pgf pins down the Laplace transform $P_0$ on the interval $z\in[0,\theta]$.
Because $P_0$ is the Laplace transform of a positive measure, its values on any open interval determine the entire transform and therefore the mixing law $F$.
Population point identification is thus not the issue.
Rather, the difficulty is statistical: recovering $F$ from one observed interval of the transform is a severely ill-posed inverse problem, so multiple horizons, repeated observations, or parametric/semi-parametric structure remain valuable for stable estimation and regularization.

\paragraph{Network proxies require a measurement model.}
Observed degree, weighted degree, referral exposure, or centrality can be informative proxies for opportunity intensity, but they do not mechanically equal the latent type $X$.
Using them to estimate $F$ requires a maintained mapping from observed network statistics to the shifter of arrival rates entering the mixed-Poisson specification.
For the same reason, network proxies alone do not identify the adapted metric $d_{\gamma,p}$ without a maintained choice of tail index $\gamma$ and a maintained measurement model for how the observed proxy maps into the latent type entering the extreme-value law.

\paragraph{Extreme-based inversion requires a first-step tail analysis.}
Inferring $F$ from extreme outcomes requires the tail-limit ingredients, namely the index $\gamma$, the normalization $(a_\theta,b_\theta)$, and, for second-order refinements, the corresponding second-order objects, either to be known or to be estimated separately.
The full parent law $G$ need not be known for the EVT approximation itself.
The asymptotic expansion in Theorem~\ref{thm:second_order_expansion} useful precisely because it separates second-order EVT error from the heterogeneity kernel.
But it does not, on its own, solve the statistical recovery problem for $F$.

\paragraph{Error propagation is conditional on the sampling scheme and can be H\"older in Fr\'echet settings.}
Once an empirical approximation $\widehat F$ is available, Corollary~\ref{cor:net_robustness} converts metric error in $\widehat F$ into error in predicted tail laws.
If $0<\gamma<1$ and the assumptions of Theorem~\ref{thm:stability_HEV} hold with $p\gamma<1$, Proposition~\ref{prop:type_metric_controls} yields
\[
W_p\!\left(H_{\gamma,\widehat F},H_{\gamma,F}\right)
\le
C_{\gamma,p}\,W_p(\widehat F,F)^\gamma.
\]
Combined with Corollary~\ref{cor:quantile_schedule_stability}, the same argument gives
\[
\|Q_{\gamma,\widehat F}-Q_{\gamma,F}\|_{L^p(0,1)}
\le
C_{\gamma,p}\,W_p(\widehat F,F)^\gamma.
\]
Thus the bridge from raw space estimation error in $F$ to error in predicted extremes is generally only H\"older, not linear, in the economically important Fr\'echet regime.
If $W_p(\widehat F,F)=O_P(r_n)$, the induced law error is only $O_P(r_n^\gamma)$.
Standard benchmark Wasserstein rates such as \citet{FournierGuillin2015} therefore remain informative, but only after this nonlinear penalty is made explicit.
Shared-network data, however, feature cross-sectional dependence, so those rates should be treated as benchmarks rather than as automatic guarantees.
Additional graph-dependence assumptions would be needed for a full statistical theory.

\paragraph{Finite-horizon and asymptotic claims should be kept distinct.}
For prelimit predictions at a fixed horizon $\theta$, Proposition~\ref{prop:finite_theta_stability} is the relevant robustness statement.
For asymptotic tail approximations, Theorem~\ref{thm:stability_HEV} and Theorem~\ref{thm:second_order_expansion} are the relevant tools.
Keeping those objects separate helps prevent finite-sample robustness, asymptotic approximation, and statistical identification from being conflated.

\section{Discussion and conclusion}\label{sec:concl}

This paper studies extremes in environments where the number of opportunities is heterogeneous and random.
Starting from the mixed-Poisson heterogeneous agent extreme value limits in \citet{Mangin2025}, we treat the heterogeneity distribution as the primitive object and analyze the induced map $F\mapsto H_{\gamma,F}$ with order, Wasserstein-geometric, and complementary entropy-projection tools.

\subsection{Main takeaways}\label{subsec:summary}

Four points structure the contribution.

First, heterogeneous extremes admit a compact Laplace-transform representation.
Once the primitive offer distribution is in a classical domain of attraction, the effect of unequal access to opportunities is summarized by the mean-one distribution $F$ of draw intensities and its Laplace transform.
This makes heterogeneity analytically tractable without collapsing it to a scalar index.

Second, the paper's main new quantitative theorem is geometric.
Theorem~\ref{thm:stability_HEV} shows that, once the canonical coupling representation is in hand, perturbations in $F$ propagate Lipschitzly into Wasserstein perturbations of the entire induced law of extremes.
Corollary~\ref{cor:adapted_geodesic_stability} packages the same structure into canonical ambient interpolation paths, while Theorem~\ref{thm:second_order_expansion} cleanly separates second-order EVT approximation error from the heterogeneity kernel.
By contrast, convex-order comparisons and several pointwise inequalities are consequences of the Laplace form and are included because they make the operator economically usable.

Third, the labor market network application shows how these ingredients fit together in a familiar environment, but only in reduced form.
Network position maps into opportunity intensity, network inequality maps into tail distortions for a randomly drawn worker's top wage, and the adapted geometry controls the whole counterfactual quantile schedule together with the renormalization error induced by returning to the mean one slice.
Turning those objects into equilibrium or welfare claims would require additional structure beyond the scope of the present paper.

Fourth, the same type space supports a complementary design problem.
Under KL regularization and a mean constraint, the planner's problem is a one-marginal entropy projection with a unique exponential-tilt solution.
The connection to the positive analysis is structural rather than geometric: the same Laplace kernel governs both the heterogeneous extreme-value operator and the score functions of a broad class of linear objectives, including cdf-based criteria and expected utility of normalized extremes.

\subsection{Extensions}\label{subsec:extensions}

Several extensions seem natural.

\begin{itemize}
\item \textbf{Endogenous networks and strategic exposure.}
In applications where network position is chosen or co-determined with outcomes, one could embed the mixed-Poisson intensity into an equilibrium model of link formation or effort choice.
The present stability and design results can then be used either as reduced-form comparative statics or as primitives inside a fixed-point argument.

\item \textbf{Dynamic designs and Schr\"odinger bridges.}
Our entropy-regularized formulation in Section~\ref{sec:entropic} is static.
A dynamic extension would allow the planner to steer heterogeneity over time subject to intertemporal entropic costs, which naturally leads to Schr\"odinger-bridge-type problems.
That extension would be genuinely richer than the one-marginal projection studied here.

\item \textbf{Dynamic record processes and sequential search.}
Because each new offer can be viewed as a potential record, one natural extension is to study the full record process generated by heterogeneous arrival rates rather than only the terminal maximum.
Classical references in this context include \citet{ArnoldBalakrishnanNagaraja1998}.
Such an extension would connect the present geometry of extremes to questions about the timing of record improvements and threshold-crossing times.

\item \textbf{Beyond mixed Poisson counts and beyond independence.}
The mixed-Poisson structure provides tractability through a Laplace transform.
Other count models or dependence structures can be accommodated whenever one can represent the distribution of maxima by a tractable transform and obtain couplings that separate heterogeneity from idiosyncratic shocks.
The classical random-sample-size maxima literature provides the natural benchmark for that extension; see \citet{BarndorffNielsen1964}, \citet{Galambos1973}, or \citet{SilvestrovTeugels1998}.

\item \textbf{Statistical identification and inference.}
The rate expansion in Section~\ref{subsec:rates} and the stability results in Section~\ref{sec:Wstab} suggest a path toward inference for $F$ from tail observations or count data.
A full treatment would have to combine second-order tail estimation for $G$, stable recovery or regularized inversion of $F$ from observed count distributions or extremes, and a dependence-aware theory for network data.
The present paper isolates the operator-theoretic ingredients needed for that future step rather than claiming to complete it.
\end{itemize}

\paragraph{Conclusion.}
Extremes are economically salient precisely because they aggregate rare opportunities and heterogeneity in access.
The framework developed here provides a tractable way to quantify how heterogeneity shapes agent-level extremes, how robust those implications are to perturbations in heterogeneity, and how a planner can reshape heterogeneity when the objective is tail sensitive and policy changes are entropy penalized.
The transport results should be read on their natural geometric domain, Proposition~\ref{prop:renormalization_bridge} as the explicit link back to the mean-one economic slice, and the labor-network application as a disciplined reduced-form rather than a complete equilibrium model.

\appendix
\section{Proofs}\label{app:proofs}

\subsection{Auxiliary limit for maxima}\label{app:aux_evt}

We repeatedly use the elementary implication that turns convergence of powers into a first-order tail approximation.

\begin{lemma}[From powers to first-order tails]\label{lem:powers_to_tails}
Let $\{u_n\}_{n\ge 1}\subset[0,1)$ and suppose there exists $v\in[0,\infty)$ such that
\[
(1-u_n)^n \to e^{-v}\qquad\text{as }n\to\infty.
\]
Then $u_n\to 0$ and $n u_n \to v$.
\end{lemma}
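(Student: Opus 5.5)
The plan is to take logarithms and compare $-\log(1-u)$ with $u$ for small $u$. Since $(1-u_n)^n\to e^{-v}>0$ and $u_n\in[0,1)$, for all large $n$ we have $(1-u_n)^n>0$. Taking logs gives
\[
-n\log(1-u_n)\to v.
\]
In particular $-\log(1-u_n)=O(1/n)\to 0$.

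The first step is to deduce $u_n\to 0$. Because $t\mapsto -\log(1-t)$ is continuous, strictly increasing, and vanishes only at $t=0$ on $[0,1)$, the convergence $-\log(1-u_n)\to 0$ forces $u_n\to 0$. Concretely, if $u_n\ge\varepsilon$ along a subsequence, then $-\log(1-u_n)\ge -\log(1-\varepsilon)>0$ along that subsequence, which contradicts the display above.

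The second step is the comparison. On $[0,1)$ we have the elementary bounds
\[
u\le -\log(1-u)\le \frac{u}{1-u}.
\]
Both follow from $-\log(1-u)=\int_0^u (1-s)^{-1}\,ds$ together with $1\le (1-s)^{-1}\le (1-u)^{-1}$ for $s\in[0,u]$. Multiplying by $n$ yields
\[
(1-u_n)\bigl(-n\log(1-u_n)\bigr)\le n u_n\le -n\log(1-u_n).
\]
Both outer terms converge to $v$, since $u_n\to0$ from the first step. Hence $nu_n\to v$ by squeezing.

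Only one point needs care: the logarithm requires $(1-u_n)^n>0$, i.e.\ $u_n<1$, which the hypothesis guarantees. The limit $e^{-v}$ is strictly positive because $v<\infty$, so there is no degenerate case. The argument is otherwise routine, and the main obstacle is merely getting the two-sided logarithm bound right so that $u_n\to0$ feeds into the squeeze.
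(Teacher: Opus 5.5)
Your proof is correct and follows essentially the same route as the paper: take logarithms and sandwich $-\log(1-u)$ between $u$ and $u/(1-u)$. The remaining arguments match, except that you derive $u_n\to 0$ explicitly from $-\log(1-u_n)=O(1/n)$, where the paper simply asserts $1-u_n\to 1$; yours is the more carefully justified version of the same step.
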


\begin{proof}
Since $(1-u_n)^n \to e^{-v}\in(0,1]$, we must have $1-u_n\to 1$, hence $u_n\to 0$.

Write $(1-u_n)^n=\exp\big(n\log(1-u_n)\big)$.
Taking logs yields $n\log(1-u_n)\to -v$.

For $u\in[0,1)$, the inequalities
\[
-\frac{u}{1-u}\le \log(1-u)\le -u
\]
hold.
Applying them with $u=u_n$ and multiplying by $-n$ gives
\[
\frac{n u_n}{1-u_n}\ge -n\log(1-u_n)\ge n u_n.
\]
Letting $n\to\infty$ and using $u_n\to 0$ and $-n\log(1-u_n)\to v$, we obtain
\[
\limsup_{n\to\infty} n u_n \le v
\quad\text{and}\quad
\liminf_{n\to\infty} n u_n \ge v,
\]
so $n u_n\to v$.
\end{proof}

\subsection{Proof of Lemma~\ref{lem:mp_pgf}}\label{app:proof_mixed_poisson_pgf}

\begin{proof}
Fix $y\in[0,1]$ and $\theta>0$.
By conditioning on $X$ and using the probability generating function of a Poisson random variable,
\[
\E\!\left[y^{N(\theta)}\mid X\right]
=
\exp\!\big(-\theta X(1-y)\big).
\]
Taking expectations over $X$ yields
\[
\E\!\left[y^{N(\theta)}\right]
=
\E\!\left[\exp\!\big(-\theta X(1-y)\big)\right]
=
P_0\!\left(\theta(1-y)\right),
\]
which is the claimed identity.
\end{proof}

\subsection{Proof of Proposition~\ref{prop:HEV}}\label{app:proof_HEV}

\begin{proof}
Fix $x\in\R$.
Let $n=\lfloor \theta\rfloor$ and recall that $a_\theta=a_n$ and $b_\theta=b_n$ by definition.
Set
\[
x_\theta:=b_\theta+a_\theta x=b_n+a_n x.
\]

\medskip\noindent\textit{Step 1: reduce to a Laplace argument.}
By \eqref{eq:cdf_Mtheta},
\[
\Pr\!\left(\frac{M_\theta-b_\theta}{a_\theta}\le x\right)
=
\Pr\!\left(M_\theta\le x_\theta\right)
=
P_0\!\left(\theta\big(1-G(x_\theta)\big)\right).
\]
Since $P_0$ is the Laplace transform of a nonnegative random variable, it is continuous on $[0,\infty)$.

\medskip\noindent\textit{Step 2: identify the limit of $\theta(1-G(x_\theta))$.}
Assumption~\ref{ass:doa} implies
\[
\Pr\!\left(\frac{M_n-b_n}{a_n}\le x\right)
=
\Pr(M_n\le b_n+a_n x)
=
G(b_n+a_n x)^n
\to
H_\gamma(x)=e^{-v_\gamma(x)}
\]
at all continuity points $x$ of $H_\gamma$.
Define $u_n(x):=1-G(b_n+a_n x)\in[0,1)$.
Then $G(b_n+a_n x)^n=(1-u_n(x))^n$.

If $H_\gamma(x)>0$, equivalently $v_\gamma(x)<\infty$, Lemma~\ref{lem:powers_to_tails} gives
\[
n u_n(x)\to v_\gamma(x)
\qquad\text{and}\qquad
u_n(x)\to 0.
\]
Hence
\[
\big|\theta u_n(x)-n u_n(x)\big|
\le |\theta-n|\,u_n(x)
\le u_n(x)\to 0,
\]
since $|\theta-n|<1$.
Therefore
\[
\theta\big(1-G(b_n+a_n x)\big)=\theta u_n(x)\to v_\gamma(x).
\]

If instead $H_\gamma(x)=0$, then $(1-u_n(x))^n\to 0$.
We claim that $n u_n(x)\to\infty$.
If not, there would exist a subsequence $n_k$ and a constant $M<\infty$ such that $n_k u_{n_k}(x)\le M$.
Then $u_{n_k}(x)\le M/n_k\to 0$.
Passing to a further subsequence if needed, we may assume $n_k u_{n_k}(x)\to L$ for some finite $L\in[0,M]$.
Since $\log(1-u)=-u+o(u)$ as $u\downarrow 0$, it follows that
\[
n_k\log(1-u_{n_k}(x))\to -L,
\]
and therefore
\[
(1-u_{n_k}(x))^{n_k}\to e^{-L}>0,
\]
contradicting $(1-u_n(x))^n\to 0$.
Thus $n u_n(x)\to\infty$.
Since $\theta\ge n$, it follows that
\[
\theta u_n(x)\ge n u_n(x)\to\infty.
\]

\medskip\noindent\textit{Step 3: pass to the limit through $P_0$.}
If $H_\gamma(x)>0$, Step 2 and continuity of $P_0$ yield
\[
P_0\!\left(\theta(1-G(x_\theta))\right)
\to
P_0\!\big(v_\gamma(x)\big)
=
H_{\gamma,F}(x).
\]

If $H_\gamma(x)=0$, then $\theta(1-G(x_\theta))\to\infty$ by Step 2.
Because $e^{-zX}\to 1_{\{X=0\}}$ as $z\to\infty$ and Assumption~\ref{ass:mixedPoisson} imposes $\Pr(X=0)=0$,
dominated convergence gives
\[
P_0(z)=\E[e^{-zX}]\to 0
\qquad\text{as }z\to\infty.
\]
Hence
\[
P_0\!\left(\theta(1-G(x_\theta))\right)\to 0 = H_{\gamma,F}(x).
\]
This proves the asserted convergence at all continuity points of $H_{\gamma,F}$.

\medskip\noindent\textit{Degenerate case.}
If $F=\delta_1$, then $X=1$ almost surely, so $P_0(z)=\E[e^{-zX}]=e^{-z}$ and hence
\[
H_{\gamma,F}(x)=P_0(v_\gamma(x))=e^{-v_\gamma(x)}=H_\gamma(x).
\]
\end{proof}

\subsection{Proofs for Section~\ref{sec:geometry}}\label{app:proof_geometry}

\begin{proof}[Proof of Proposition~\ref{prop:laplace_order}]
Let $X_1\sim F_1$ and $X_2\sim F_2$ with $\E[X_1]=\E[X_2]=1$.
If $F_2$ is a mean-preserving spread of $F_1$, then $X_2$ dominates $X_1$ in convex order.
Equivalently,
\[
\E[\varphi(X_2)]\ge \E[\varphi(X_1)]
\quad\text{for every convex }\varphi\text{ for which both expectations are finite}.
\]

Fix $z\ge 0$ and consider $\varphi_z(x):=e^{-z x}$.
Then $\varphi_z$ is convex on $[0,\infty)$ (since $\varphi_z''(x)=z^2 e^{-z x}\ge 0$) and bounded by $1$.
Hence,
\[
P_0^{(2)}(z)=\E[e^{-z X_2}]\ge \E[e^{-z X_1}]=P_0^{(1)}(z)
\quad\text{for all }z\ge 0.
\]
Composing with $z=v_\gamma(x)\ge 0$ yields
\[
H_{\gamma,F_2}(x)=P_0^{(2)}\!\big(v_\gamma(x)\big)\ge P_0^{(1)}\!\big(v_\gamma(x)\big)=H_{\gamma,F_1}(x)
\quad\text{for all }x.
\]
Therefore, the distribution function under $F_2$ is everywhere larger, which is equivalent to first-order stochastic dominance in the direction that extremes are smaller under $F_2$.
\end{proof}

\begin{proof}[Proof of Corollary~\ref{cor:hetero_vs_hom}]
Apply Proposition~\ref{prop:laplace_order} with $F_1=\delta_1$ and $F_2=F$.
Since every mean one distribution on $[0,\infty)$ dominates $\delta_1$ in convex order, one obtains
\[
P_0(z)\ge e^{-z}
\qquad\text{for all }z\ge 0,
\]
and therefore $H_{\gamma,F}(x)\ge H_\gamma(x)$ for every $x$ with $v_\gamma(x)<\infty$.
If $F\neq \delta_1$, then strict convexity of $x\mapsto e^{-zx}$ for every $z>0$ and Jensen's inequality imply strict inequality for all $z>0$.
\end{proof}

\begin{proof}[Proof of Proposition~\ref{prop:convexity_along_geodesic}]
By Lemma~\ref{lem:quantile_geodesic}, the canonical monotone geodesic $(\mu_t)_{t\in[0,1]}$ satisfies
\[
Q_{\mu_t}(u)=(1-t)Q_\mu(u)+tQ_\nu(u),\qquad u\in(0,1).
\]
Since $\varphi$ is convex,
\[
\varphi\big(Q_{\mu_t}(u)\big)
\le
(1-t)\varphi\big(Q_\mu(u)\big)+t\varphi\big(Q_\nu(u)\big)
\qquad\text{for every }u\in(0,1).
\]
Integrating over $u\in(0,1)$ and using the quantile representation of integrals gives
\[
\int \varphi\,d\mu_t
\le
(1-t)\int \varphi\,d\mu+t\int \varphi\,d\nu.
\]
Applying the same argument to any pair of times $s,t\in[0,1]$ in place of $0,1$ yields convexity of $t\mapsto \int \varphi\,d\mu_t$ on $[0,1]$.
\end{proof}

\begin{proof}[Proof of Proposition~\ref{prop:pushforward_rep}]
For the Fr\'echet case, let $Z_\gamma$ be independent of $X\sim F$ and satisfy
\[
\Pr(Z_\gamma\le z)=\exp(-z^{-1/\gamma}),\qquad z>0.
\]
Then, for $z>0$,
\[
\Pr(X^\gamma Z_\gamma\le z\mid X)
=
\Pr\!\left(Z_\gamma\le zX^{-\gamma}\mid X\right)
=
\exp\!\left(-X z^{-1/\gamma}\right).
\]
Taking expectations over $X$ gives
\[
\Pr(X^\gamma Z_\gamma\le z)=\E\!\left[e^{-X z^{-1/\gamma}}\right]=P_0(z^{-1/\gamma}),
\]
which is the cdf of the heterogeneous Fr\'echet limit.
If $Z_{\mathrm{GEV}}:=(X^\gamma Z_\gamma-1)/\gamma$, then
\[
\Pr(Z_{\mathrm{GEV}}\le x)=\Pr(X^\gamma Z_\gamma\le 1+\gamma x)=P_0\!\big((1+\gamma x)^{-1/\gamma}\big)=H_{\gamma,F}(x).
\]

For the Gumbel case, let $E\sim\mathrm{Exp}(1)$ be independent of $X\sim F$ and define $Z:=\log(X/E)$.
Then, for every $x\in\R$,
\[
\Pr(Z\le x\mid X)=\Pr(E\ge Xe^{-x}\mid X)=e^{-Xe^{-x}}.
\]
Taking expectations yields
\[
\Pr(Z\le x)=\E[e^{-Xe^{-x}}]=P_0(e^{-x})=H_{0,F}(x).
\]
\end{proof}

\begin{proof}[Proof of Lemma~\ref{lem:kernel_expectation_frechet}]
By Proposition~\ref{prop:pushforward_rep}, $Z=X^\gamma Z_\gamma$ with $Z_\gamma$ independent of $X$.
Hence, by iterated expectation,
\[
\E[\psi(Z)]
=
\E\!\left[\E\!\left[\psi(X^\gamma Z_\gamma)\mid X\right]\right]
=
\E[\kappa_\psi(X)]
=
\int_0^\infty \kappa_\psi(x)\,F(dx).
\]
\end{proof}

\begin{proof}[Proof of Proposition~\ref{prop:gateaux}]
For every $x$ with $v_\gamma(x)<\infty$,
\[
H_{\gamma,F_\varepsilon}(x)
=
\int e^{-v_\gamma(x)u}\,F_\varepsilon(du)
=
\int e^{-v_\gamma(x)u}\,F(du)
+\varepsilon\int e^{-v_\gamma(x)u}\,\nu(du).
\]
Differentiating with respect to $\varepsilon$ at $0$ yields the claim.
\end{proof}

\begin{proof}[Proof of Proposition~\ref{prop:convexity_H_along_geodesic}]
Fix $z\ge 0$ and apply Proposition~\ref{prop:convexity_along_geodesic} with the convex function $\varphi_z(x)=e^{-zx}$.
Then
\[
t\mapsto \int e^{-zx}\,F^t(dx)=P_0^{\,t}(z)
\]
is convex on $[0,1]$.
Setting $z=v_\gamma(x)$ gives the convexity of $t\mapsto H_{\gamma,F^t}(x)$.
\end{proof}

\begin{proof}[Proof of Proposition~\ref{prop:neg_moment_geodesic_convex}]
On $(0,\infty)$, the function $\varphi(x)=x^{-\rho}$ is convex for every $\rho>0$.
Since the geodesic is supported on $[a,\infty)$, the function is also bounded on the support of every $F^t$.
Applying Proposition~\ref{prop:convexity_along_geodesic} with this $\varphi$ yields convexity of
\[
t\mapsto \int x^{-\rho}\,F^t(dx).
\]
\end{proof}

\subsection{Proofs for Section~\ref{sec:Wstab}}\label{app:proof_Wstab}

\begin{proof}[Proof of Proposition~\ref{prop:canonical_rep}]
Let $X\sim F$ and $E\sim\mathrm{Exp}(1)$ be independent, and define $Z_{\gamma,F}=w_\gamma(E/X)$.
For any $x$ such that $v_\gamma(x)\in(0,\infty)$,
\[
\Pr(Z_{\gamma,F}\le x\mid X)
=
\Pr\!\left(w_\gamma(E/X)\le x\mid X\right)
=
\Pr\!\left(E/X\ge v_\gamma(x)\mid X\right),
\]
since $w_\gamma$ is decreasing.
Therefore,
\[
\Pr(Z_{\gamma,F}\le x\mid X)=\exp\!\big(-Xv_\gamma(x)\big).
\]
Taking expectations yields
\[
\Pr(Z_{\gamma,F}\le x)=\E\!\left[e^{-Xv_\gamma(x)}\right]=P_0\!\big(v_\gamma(x)\big)=H_{\gamma,F}(x).
\]
The explicit formulas follow from $w_\gamma(t)=(t^{-\gamma}-1)/\gamma$ for $\gamma\neq 0$ and $w_0(t)=-\log t$.
\end{proof}

\begin{proof}[Proof of Theorem~\ref{thm:stability_HEV}]
\medskip\noindent\textit{Case $\gamma\neq 0$.}
Let $\mu_i:=s_\gamma\#F_i$ for $i\in\{1,2\}$, so that $d_{\gamma,p}(F_1,F_2)=W_p(\mu_1,\mu_2)$.
Let $(U_1,U_2)$ be an otimal coupling of $(\mu_1,\mu_2)$, and let $E\sim\mathrm{Exp}(1)$ be independent of $(U_1,U_2)$.
Set $V:=E^{-\gamma}$ and define
\[
\phi(u,v):=\frac{uv-1}{\gamma}.
\]
If $X_i:=s_\gamma^{-1}(U_i)$, then $X_i\sim F_i$ and
\[
\phi(U_i,V)=\frac{X_i^\gamma E^{-\gamma}-1}{\gamma}=w_\gamma(E/X_i).
\]
By Proposition~\ref{prop:canonical_rep}, $\Law(\phi(U_i,V))=H_{\gamma,F_i}$.
For each fixed $v$, the map $u\mapsto \phi(u,v)$ is Lipschitz with constant $|v|/|\gamma|$.
Hence Lemma~\ref{lem:random_lipschitz} gives
\[
W_p\!\left(H_{\gamma,F_1},H_{\gamma,F_2}\right)
\le
\frac{\big(\E[|E^{-\gamma}|^p]\big)^{1/p}}{|\gamma|}
W_p(\mu_1,\mu_2)
=
\frac{\big(\E[|E^{-\gamma}|^p]\big)^{1/p}}{|\gamma|}
 d_{\gamma,p}(F_1,F_2).
\]
This proves the bound for $\gamma\neq 0$.

\medskip\noindent\textit{Case $\gamma=0$.}
Let $\mu_i:=(\log)\#F_i$ for $i\in\{1,2\}$, so that $d_{0,p}(F_1,F_2)=W_p(\mu_1,\mu_2)$.
Let $(U_1,U_2)$ be an optimal coupling of $(\mu_1,\mu_2)$, let $E\sim\mathrm{Exp}(1)$ be independent, and define
\[
\phi(u,v):=u-\log v.
\]
If $X_i:=e^{U_i}$, then $X_i\sim F_i$ and
\[
\phi(U_i,E)=\log X_i-\log E=w_0(E/X_i).
\]
By Proposition~\ref{prop:canonical_rep}, $\Law(\phi(U_i,E))=H_{0,F_i}$.
For each fixed $v>0$, the map $u\mapsto \phi(u,v)$ is $1$-Lipschitz, so Lemma~\ref{lem:random_lipschitz} yields
\[
W_p\!\left(H_{0,F_1},H_{0,F_2}\right)
\le W_p(\mu_1,\mu_2)
=d_{0,p}(F_1,F_2).
\]
\end{proof}

\begin{proof}[Proof of Corollary~\ref{cor:adapted_geodesic_stability}]
By construction,
\[
d_{\gamma,p}(F^s,F^t)=W_p(G^s,G^t).
\]
Since $(G^t)_{t\in[0,1]}$ is the canonical monotone constant-speed $W_p$ geodesic from $G^0$ to $G^1$,
\[
W_p(G^s,G^t)=|t-s|\,W_p(G^0,G^1)=|t-s|\,d_{\gamma,p}(F^0,F^1).
\]
Applying Theorem~\ref{thm:stability_HEV} to $F^s$ and $F^t$ yields the second claim.
\end{proof}

\begin{proof}[Proof of Proposition~\ref{prop:renormalization_bridge}]
Let $m:=m(F)$ and let $Y:=s_\gamma(X)$ for $X\sim F$.
If $\gamma\neq 0$, then
\[
s_\gamma\!\left(\frac{X}{m}\right)=\left(\frac{X}{m}\right)^\gamma=m^{-\gamma}X^\gamma=m^{-\gamma}Y.
\]
Hence $s_\gamma\#\widetilde F$ is the law of $m^{-\gamma}Y$.
Because quantiles scale linearly under multiplication by a positive constant,
\[
d_{\gamma,p}(F,\widetilde F)^p
=
W_p\!\left(\Law(Y),\Law(m^{-\gamma}Y)\right)^p
=
\int_0^1 \left|Q_Y(u)-m^{-\gamma}Q_Y(u)\right|^p\,du.
\]
Therefore
\[
d_{\gamma,p}(F,\widetilde F)
=
\left|1-m^{-\gamma}\right|\left(\int_0^1 |Q_Y(u)|^p\,du\right)^{1/p}
=
\left|1-m^{-\gamma}\right|\left(\E[|Y|^p]\right)^{1/p},
\]
which is exactly the stated formula for $\gamma\neq 0$.
If $\gamma=0$, then $s_0(X/m)=\log X-\log m$, so $s_0\#\widetilde F$ is the translate of $s_0\#F$ by $-\log m$.
Again by the quantile representation,
\[
d_{0,p}(F,\widetilde F)
=
W_p\!\left((\log)\#F,\tau_{-\log m}\#((\log)\#F)\right)
=
|\log m|,
\]
where $\tau_c(y):=y+c$.
The Wasserstein bound for the induced extreme laws is then immediate from Theorem~\ref{thm:stability_HEV}.
\end{proof}

\begin{proof}[Proof of Corollary~\ref{cor:quantile_schedule_stability}]
The equality with $W_p\!\left(H_{\gamma,F_1},H_{\gamma,F_2}\right)$ is the one-dimensional quantile representation of Wasserstein distance.
The first inequality is Theorem~\ref{thm:stability_HEV}.
The geodesic statement then follows from Corollary~\ref{cor:adapted_geodesic_stability}.
\end{proof}

\begin{proof}[Proof of Corollary~\ref{cor:pointwise_cdf_stability}]
Fix $x$ in theinterior of the support of $H_\gamma$ and define
\[
f_x(z):=e^{-v_\gamma(x)z},\qquad z\ge 0.
\]
Then $f_x$ is $v_\gamma(x)$-Lipschitz on $[0,\infty)$ because
\[
|f_x'(z)|=v_\gamma(x)e^{-v_\gamma(x)z}\le v_\gamma(x).
\]
Since
\[
H_{\gamma,F_i}(x)=\int f_x(z)\,F_i(dz),
\]
Corollary~\ref{cor:lipschitz_test} implies
\[
|H_{\gamma,F_1}(x)-H_{\gamma,F_2}(x)|\le v_\gamma(x)W_1(F_1,F_2).
\]
\end{proof}

\begin{proof}[Proof of Proposition~\ref{prop:domain_dgamma}]
By definition,
\[
d_{\gamma,p}(F_1,F_2)=W_p\!\left(F_1\circ s_\gamma^{-1},\,F_2\circ s_\gamma^{-1}\right).
\]
A Wasserstein distance on $\R$ is finite if and only if both marginals belong to $\cP_p(\R)$.
Hence $d_{\gamma,p}(F_1,F_2)<\infty$ if and only if
\[
\int |y|^p\,\big(F_i\circ s_\gamma^{-1}\big)(dy)<\infty,\qquad i\in\{1,2\}.
\]
By change of variables this is equivalent to
\[
\int |s_\gamma(x)|^p\,F_i(dx)<\infty,\qquad i\in\{1,2\}.
\]
The three regime-specific statements follow by substituting $s_\gamma(x)=x^\gamma$ for $\gamma\neq 0$ and $s_0(x)=\log x$.
\end{proof}

\begin{proof}[Proof of Proposition~\ref{prop:type_metric_controls}]
\medskip\noindent\textit{Case 1: $0<\gamma\le 1$.}
Let $(X,Y)$ be any coupling of $(F_1,F_2)$.
Since $x\mapsto x^\gamma$ is $\gamma$-H\"older on $[0,\infty)$,
\[
|X^\gamma-Y^\gamma|\le |X-Y|^\gamma.
\]
Therefore,
\[
\E\!\left[|X^\gamma-Y^\gamma|^p\right]
\le
\E\!\left[|X-Y|^{\gamma p}\right]
\le
\left(\E\!\left[|X-Y|^p\right]\right)^\gamma,
\]
where the last step is Jensen's inequality applied to the concave map $t\mapsto t^\gamma$.
Taking the infimum over couplings and then $p$th roots yields
\[
d_{\gamma,p}(F_1,F_2)\le W_p(F_1,F_2)^\gamma.
\]

\medskip\noindent\textit{Case 2: $\gamma<0$.}
On $[a,\infty)$, the derivative of $x\mapsto x^\gamma$ satisfies
\[
\left|\frac{d}{dx}x^\gamma\right|=|\gamma|x^{\gamma-1}\le |\gamma|a^{\gamma-1}.
\]
Hence $x\mapsto x^\gamma$ is $|\gamma|a^{\gamma-1}$-Lipschitz on $[a,\infty)$.
Lemma~\ref{lem:lipschitz_pushforward} gives
\[
d_{\gamma,p}(F_1,F_2)=W_p\!\left((x\mapsto x^\gamma)_\#F_1,(x\mapsto x^\gamma)_\#F_2\right)
\le |\gamma|a^{\gamma-1}W_p(F_1,F_2).
\]

\medskip\noindent\textit{Case 3: $\gamma=0$.}
On $[a,\infty)$, the derivative of $\log x$ is bounded by $a^{-1}$.
Thus $\log$ is $a^{-1}$-Lipschitz there, and Lemma~\ref{lem:lipschitz_pushforward} yields
\[
d_{0,p}(F_1,F_2)=W_p\!\left((\log)_\#F_1,(\log)_\#F_2\right)
\le a^{-1}W_p(F_1,F_2).
\]
\end{proof}

\begin{proof}[Proof of Proposition~\ref{prop:lip_functionals_limit}]
Apply Corollary~\ref{cor:lipschitz_test} with $\mu=H_{\gamma,F_1}$, $\nu=H_{\gamma,F_2}$, and the $L$-Lipschitz function $\psi$.
\end{proof}

\begin{proof}[Proof of Proposition~\ref{prop:finite_theta_stability}]
Fix $x\in\R$ and define
\[
f_x(z):=\exp\!\big(-\theta z(1-G(x))\big),\qquad z\ge 0.
\]
Then
\[
|f_x'(z)|=\theta(1-G(x))\exp\!\big(-\theta z(1-G(x))\big)\le \theta(1-G(x)),
\]
so $f_x$ is $\theta(1-G(x))$-Lipschitz on $[0,\infty)$.
Because
\[
\Pr_F(M_\theta\le x)=\int f_x(z)\,F(dz),
\]
Corollary~\ref{cor:lipschitz_test} gives
\[
\left|\Pr_{F_1}(M_\theta\le x)-\Pr_{F_2}(M_\theta\le x)\right|
\le
\theta(1-G(x))W_1(F_1,F_2).
\]
\end{proof}

\begin{proof}[Proof of Theorem~\ref{thm:second_order_expansion}]
Fix a compact set $K$ contained in the interior of the support of $H_\gamma$.
By Assumption~\ref{ass:second_order}, there exists a remainder $\delta_\theta(x)$ such that
\[
\theta\big(1-G(b(\theta)+a(\theta)x)\big)
=
 v_\gamma(x)+A(\theta)h_\gamma(x)+\delta_\theta(x),
\qquad x\in K,
\]
with
\[
\sup_{x\in K}\frac{|\delta_\theta(x)|}{|A(\theta)|}\to 0.
\]
Write
\[
\Delta_\theta(x):=A(\theta)h_\gamma(x)+\delta_\theta(x).
\]
Since $h_\gamma$ is locally bounded and $K$ is compact, there exists $B_K<\infty$ such that $|h_\gamma(x)|\le B_K$ on $K$.
Hence
\[
\sup_{x\in K}|\Delta_\theta(x)|=O(|A(\theta)|).
\]

By \eqref{eq:cdf_Mtheta},
\[
\Pr(Z_\theta\le x)=P_0\!\left(v_\gamma(x)+\Delta_\theta(x)\right),\qquad x\in K.
\]
Set
\[
m_K:=\inf_{x\in K} v_\gamma(x)>0,
\qquad
M_K:=\sup_{x\in K} v_\gamma(x)<\infty.
\]
For $t\ge m_K/2$, differentiation under the integral sign gives
\[
P_0'(t)=-\E\!\left[Xe^{-tX}\right],
\qquad
P_0''(t)=\E\!\left[X^2e^{-tX}\right].
\]
Moreover, for every $t>0$,
\[
X^2e^{-tX}\le \sup_{y\ge 0} y^2e^{-y}\,t^{-2}=4e^{-2}t^{-2},
\]
so
\[
\sup_{t\ge m_K/2} |P_0''(t)|\le \frac{16e^{-2}}{m_K^2}<\infty.
\]
Therefore $P_0$ has uniformly bounded second derivative on a neighborhood of $\{v_\gamma(x):x\in K\}$.
A second-order Taylor expansion yields, uniformly for $x\in K$,
\[
P_0\!\left(v_\gamma(x)+\Delta_\theta(x)\right)
=
P_0\!\left(v_\gamma(x)\right)+P_0'\!\left(v_\gamma(x)\right)\Delta_\theta(x)+R_\theta(x),
\]
where
\[
\sup_{x\in K}|R_\theta(x)|\le C_K\sup_{x\in K}|\Delta_\theta(x)|^2=O\!\big(A(\theta)^2\big)
=o\!\big(A(\theta)\big).
\]
Since $P_0(v_\gamma(x))=H_{\gamma,F}(x)$ and $P_0'(v_\gamma(x))=-\E[Xe^{-v_\gamma(x)X}]$, we obtain
\[
\Pr(Z_\theta\le x)
=
H_{\gamma,F}(x)-A(\theta)h_\gamma(x)\E\!\left[Xe^{-v_\gamma(x)X}\right]+r_{\theta,K}(x),
\]
where
\[
r_{\theta,K}(x):=-\delta_\theta(x)\E\!\left[Xe^{-v_\gamma(x)X}\right]+R_\theta(x).
\]
Because $Xe^{-v_\gamma(x)X}\le e^{-1}/m_K$ uniformly on $K$, the first term is $o(A(\theta))$ uniformly on $K$; the second term is already $o(A(\theta))$ uniformly on $K$.
Hence
\[
\sup_{x\in K}\frac{|r_{\theta,K}(x)|}{|A(\theta)|}\to 0,
\]
which proves the theorem.
\end{proof}

\subsection{Proofs for Section~\ref{sec:entropic}}\label{app:proof_entropic}

\begin{proof}[Proof of Theorem~\ref{thm:entropic_duality}]
We write the design problem \eqref{eq:design} in the linear case as
\[
\sup_{F:\ \E_F[X]=1}\left\{\int \psi(x)\,F(dx)-\lambda D_{\mathrm{KL}}(F\|F_0)\right\}.
\]
Throughout, we restrict attention to $F\ll F_0$, since otherwise $D_{\mathrm{KL}}(F\|F_0)=+\infty$.

\medskip\noindent\textit{Step 1: weak duality.}
Fix $\eta\in\mathcal D$.
For any feasible $F$ with $\E_F[X]=1$,
\[
\int \psi\,dF-\lambda D_{\mathrm{KL}}(F\|F_0)
=
\int (\psi+\eta x)\,dF-\lambda D_{\mathrm{KL}}(F\|F_0)-\eta.
\]
Applying Theorem~\ref{thm:DV} with $P=F_0$ and $f=(\psi+\eta x)/\lambda$ gives
\[
\int (\psi+\eta x)\,dF-\lambda D_{\mathrm{KL}}(F\|F_0)
\le
\lambda \log Z(\eta).
\]
Hence, for every feasible $F$ and every $\eta\in\mathcal D$,
\[
\int \psi\,dF-\lambda D_{\mathrm{KL}}(F\|F_0)
\le
\lambda \log Z(\eta)-\eta.
\]
Taking the supremum over feasible $F$ and the the infimum over $\eta\in\mathcal D$ yields the weak-duality bound
\[
\sup_{F:\E_F[X]=1}\Big\{\int \psi\,dF-\lambda D_{\mathrm{KL}}(F\|F_0)\Big\}
\le
\inf_{\eta\in\mathcal D}\left\{\lambda \log Z(\eta)-\eta\right\}.
\]

\medskip\noindent\textit{Step 2: the dual objective is smooth and has a unique minimizer.}
Define
\[
\mathcal J(\eta):=\lambda \log Z(\eta)-\eta,\qquad \eta\in\mathcal D.
\]
Let $I=[a,b]\subset\mathcal D$ be compact.
For every $\eta\in I$ and every $x>0$,
\[
\exp\!\left(\frac{\psi(x)+\eta x}{\lambda}\right)
\le
\exp\!\left(\frac{\psi(x)+b x}{\lambda}\right),
\]
and the latter is integrable by Assumption~\ref{ass:linear_objective}.
Similarly,
\[
x\exp\!\left(\frac{\psi(x)+\eta x}{\lambda}\right)
\le
x\exp\!\left(\frac{\psi(x)+b x}{\lambda}\right),
\]
and
\[
x^2\exp\!\left(\frac{\psi(x)+\eta x}{\lambda}\right)
\le
x^2\exp\!\left(\frac{\psi(x)+b x}{\lambda}\right),
\]
with both dominating functions integrable.
Therefore dominated convergence implies that $Z$ is twice continuously differentiable on $\mathcal D$, with
\[
Z'(\eta)=\frac{1}{\lambda}M_1(\eta),
\qquad
Z''(\eta)=\frac{1}{\lambda^2}M_2(\eta).
\]
Hence
\[
\mathcal J'(\eta)=\frac{M_1(\eta)}{Z(\eta)}-1=m(\eta)-1,
\]
and
\[
\mathcal J''(\eta)=\frac{1}{\lambda}\Var_{F_\eta}(X)\ge 0,
\]
where $F_\eta$ is the exponential tilt defined by
\[
\frac{dF_\eta}{dF_0}(x)=\frac{\exp\!\left(\frac{\psi(x)+\eta x}{\lambda}\right)}{Z(\eta)}.
\]
We claim that in fact $\Var_{F_\eta}(X)>0$ for every $\eta\in\mathcal D$.
Indeed, if $\Var_{F_\eta}(X)=0$, then $F_\eta$ would be a Dirac mass.
But the density $dF_\eta/dF_0$ is strictly positive on the support of $F_0$, so this can happen only if $F_0$ itself is a Dirac mass.
In that case $m(\eta)\equiv 1$, contradicting the assumption that there exist $\eta_-,\eta_+\in\mathcal D$ with $m(\eta_-)<1<m(\eta_+)$.
Therefore
\[
\mathcal J''(\eta)>0\qquad\text{for all }\eta\in\mathcal D,
\]
so $\mathcal J$ is strictly convex on $\mathcal D$ and $\mathcal J'$ is strictly increasing and continuous.
By Assumption~\ref{ass:linear_objective},
\[
\mathcal J'(\eta_-)=m(\eta_-)-1<0< m(\eta_+)-1=\mathcal J'(\eta_+).
\]
The intermediate value theorem therefore yields some $\eta^\star\in(\eta_-,\eta_+)$ such that $\mathcal J'(\eta^\star)=0$, that is, $m(\eta^\star)=1$.
Because $\mathcal J'$ is strictly increasing, this root is unique.
Since $\mathcal J$ is strictly convex, the same $\eta^\star$ is the unique minimizer of $\mathcal J$ on $\mathcal D$.

\medskip\noindent\textit{Step 3: construct the primal optimizer and prove strong duality.}
Let $F^\star:=F_{\eta^\star}$.
By construction,
\[
\E_{F^\star}[X]=m(\eta^\star)=1,
\]
so $F^\star$ is feasible.
Because Assumption~\ref{ass:linear_objective} also gives
\[
\int |\psi(x)|\exp\!\left(\frac{\psi(x)+\eta^\star x}{\lambda}\right)\,F_0(dx)<\infty,
\]
and because $\E_{F^\star}[X]=1$ implies
\[
\int x\exp\!\left(\frac{\psi(x)+\eta^\star x}{\lambda}\right)\,F_0(dx)=Z(\eta^\star)<\infty,
\]
one also has $\int |f|e^f\,dF_0<\infty$ for $f=(\psi+\eta^\star x)/\lambda$.
Therefore Theorem~\ref{thm:DV} attains equality at $F^\star$ for that function $f$.
Hence
\[
\int (\psi+\eta^\star x)\,dF^\star-\lambda D_{\mathrm{KL}}(F^\star\|F_0)
=
\lambda \log Z(\eta^\star).
\]
Using $\E_{F^\star}[X]=1$, we obtain
\[
\int \psi\,dF^\star-\lambda D_{\mathrm{KL}}(F^\star\|F_0)
=
\lambda \log Z(\eta^\star)-\eta^\star
=\mathcal J(\eta^\star).
\]
Because $\eta^\star$ minimizes $\mathcal J$, this value coincide with the weak-duality upper bound.
Therefore strong duality holds, $F^\star$ is optimal, and \eqref{eq:dual_eta} and \eqref{eq:tilt} follow.

\medskip\noindent\textit{Step 4: uniqueness of the primal optimizer.}
The map $F\mapsto \int \psi\,dF$ is linear, and $F\mapsto D_{\mathrm{KL}}(F\|F_0)$ is strictly convex on $\{F\ll F_0\}$.
Hence
\[
F\mapsto \int \psi\,dF-\lambda D_{\mathrm{KL}}(F\|F_0)
\]
is strictly concave on the convex feasible set $\{F:\E_F[X]=1,\ F\ll F_0\}$.
A strictly concave objective has at most one maximizer, so the optimal $F^\star$ is unique.
\end{proof}

\section{Technical tools}\label{app:tools}

\subsection{Wasserstein geometry and duality}\label{app:tools_wass}

We collect standard facts about Wasserstein distance that are used throughout the paper.
References include \citet{Villani2009} and \citet{Santambrogio2015}.

Let $(\mathsf X,d)$ be a Polish metric space.
For $p\ge 1$, write $\cP_p(\mathsf X)$ for the set of Borel probility measures on $\mathsf X$ with finite $p$th moment.

\begin{definition}[Wasserstein distance]\label{def:Wp}
For $\mu,\nu\in\cP_p(\mathsf X)$,
\[
W_p(\mu,\nu):=\left(\inf_{\pi\in\Gamma(\mu,\nu)} \int d(x,y)^p\,\pi(dx,dy)\right)^{1/p},
\]
where $\Gamma(\mu,\nu)$ denotes the set of couplings of $\mu$ and $\nu$.
\end{definition}

\begin{lemma}[Upper bound from any coupling]\label{lem:any_coupling_bound}
Let $p\ge 1$ and let $U,V$ be random variables in $\mathsf X$.
If $\Law(U)=\mu$ and $\Law(V)=\nu$, then
\[
W_p(\mu,\nu)\le \left(\E[d(U,V)^p]\right)^{1/p}.
\]
\end{lemma}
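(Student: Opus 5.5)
The bound follows straight from the definition of $W_p$ as an infimum over couplings, so the plan is to show that the joint law of $(U,V)$ is one of the admissible couplings. Let $\pi:=\Law(U,V)$, the pushforward of the underlying probability measure under $\omega\mapsto (U(\omega),V(\omega))$, which is a Borel probability measure on $\mathsf X\times\mathsf X$. Its first marginal is $\Law(U)=\mu$ and its second is $\Law(V)=\nu$, so $\pi\in\Gamma(\mu,\nu)$. One measurability point needs checking: $(U,V)$ must be a random element of $\mathsf X\times\mathsf X$ with its product Borel $\sigma$-algebra. This holds because $\mathsf X$ is Polish, hence separable, so the Borel $\sigma$-algebra of the product equals the product of the Borel $\sigma$-algebras.

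Next, change variables:
\[
\int d(x,y)^p\,\pi(dx,dy)=\E[d(U,V)^p].
\]
The left-hand side is at least the infimum in Definition~\ref{def:Wp}. Since $t\mapsto t^{1/p}$ is increasing on $[0,\infty]$, taking $p$th roots preserves the inequality and gives the claim.

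The only step that needs attention is the case $\E[d(U,V)^p]=+\infty$. There the inequality holds trivially with the convention $(+\infty)^{1/p}=+\infty$. The definition also implicitly takes $\mu,\nu\in\cP_p(\mathsf X)$, which makes $W_p(\mu,\nu)$ finite, but the argument never uses this finiteness. Beyond this, the proof is a direct unpacking of definitions, and I expect no substantive obstacle.
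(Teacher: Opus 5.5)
Your proposal is correct and matches the paper's proof: take $\pi=\Law(U,V)\in\Gamma(\mu,\nu)$, bound the infimum in the definition of $W_p^p$ by $\int d^p\,d\pi=\E[d(U,V)^p]$, and take $p$th roots. Your extra remarks are valid refinements that the paper leaves implicit, namely that separability of $\mathsf X$ makes $\pi$ a Borel measure on the product, and that the inequality is trivial when $\E[d(U,V)^p]=+\infty$.
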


\begin{proof}
Let $\pi:=\Law(U,V)\in\Gamma(\mu,\nu)$.
By Definition~\ref{def:Wp},
\[
W_p^p(\mu,\nu)\le \int d(x,y)^p\,\pi(dx,dy)=\E[d(U,V)^p].
\]
\end{proof}

\begin{lemma}[One-dimensional quantile representation]\label{lem:quantile_rep}
Let $p\ge 1$ and let $\mu,\nu\in\cP_p(\R)$, where $\R$ is equipped with the metric $d(x,y)=|x-y|$.
Let $F_\mu,F_\nu$ denote the distribution functions and define the quantile functions
\[
Q_\mu(u):=\inf\{x\in\R: F_\mu(x)\ge u\},\qquad u\in(0,1),
\]
and similarly for $Q_\nu$.
Then
\[
W_p^p(\mu,\nu)=\int_0^1 |Q_\mu(u)-Q_\nu(u)|^p\,du.
\]
\end{lemma}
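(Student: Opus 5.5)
The plan is to prove the quantile representation in two inequalities, with the comonotone coupling doing all the work. Let $U\sim\mathrm{Unif}(0,1)$ and set $X:=Q_\mu(U)$, $Y:=Q_\nu(U)$. The standard fact $Q_\mu(u)\le x \iff u\le F_\mu(x)$, which follows from right-continuity of $F_\mu$ and the definition of $Q_\mu$ as a generalized inverse, gives $\Law(X)=\mu$ and $\Law(Y)=\nu$. Lemma~\ref{lem:any_coupling_bound} then yields
\[
W_p^p(\mu,\nu)\le \E|X-Y|^p=\int_0^1|Q_\mu(u)-Q_\nu(u)|^p\,du .
\]
The right-hand side is finite because $\int_0^1|Q_\mu|^p\,du=\int|x|^p\,\mu(dx)<\infty$, and likewise for $\nu$.

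For the reverse inequality I will show that the comonotone coupling $\pi^{\mathrm{mon}}:=\Law(Q_\mu(U),Q_\nu(U))$ minimizes $\int c\,d\pi$ over $\Gamma(\mu,\nu)$ for the cost $c(x,y)=h(x-y)$ with $h(t)=|t|^p$ convex. The route is through the submodularity of $c$: for $x\le x'$ and $y\le y'$,
\[
c(x,y)+c(x',y')\le c(x,y')+c(x',y).
\]
This holds because the arguments satisfy $x-y'\le \min(x-y,\,x'-y')\le \max(x-y,\,x'-y')\le x'-y$, the two pairs have equal sums, and $h$ is convex, so the inner pair has the smaller $h$-sum.

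First I will treat finitely supported measures with equal atoms, $\mu=\frac1n\sum\delta_{x_i}$ and $\nu=\frac1n\sum\delta_{y_j}$. By Birkhoff's theorem the optimal cost is attained at a permutation. Any permutation with a crossing pair can be uncrossed using the inequality above without increasing the cost, so the sorted matching is optimal. The sorted matching is exactly the quantile coupling.

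For general $\mu,\nu\in\cP_p(\R)$ I will approximate. Take $\mu_n:=\Law(Q_\mu(\lceil nU\rceil/n^{+}))$, that is, the quantiles evaluated at midpoints of a grid, and define $\nu_n$ similarly, so that $Q_{\mu_n}\to Q_\mu$ in $L^p(0,1)$. Then use the triangle inequality for $W_p$ together with the upper bound already proved, which gives $W_p(\mu_n,\mu)\le\|Q_{\mu_n}-Q_\mu\|_{L^p}\to0$. Passing to the limit in $W_p(\mu_n,\nu_n)=\|Q_{\mu_n}-Q_{\nu_n}\|_{L^p}$ then gives equality.

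I expect this approximation step to be the main obstacle. The difficulty is establishing $L^p$ convergence of the discretized quantile functions when the tails are heavy. I would handle it by monotonicity plus dominated convergence, using $|Q_{\mu_n}|\le|Q_\mu|$ near the endpoints by choosing evaluation points toward the center of each cell, and truncating at the extreme cells if necessary.

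An alternative that avoids discretization is to invoke the general fact that optimal plans have $c$-cyclically monotone support. For a strictly submodular cost such as $p>1$, this forces the support to be monotone, and hence the plan is the comonotone one. The case $p=1$ would then follow by a limiting argument in $p$, or from the identity $\int|Q_\mu-Q_\nu|=\int|F_\mu-F_\nu|$. I would keep the discretization route as the primary argument, since it uses only elementary tools already available in the paper.
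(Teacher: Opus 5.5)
The paper does not prove this lemma. It lists it in Appendix~B among standard Wasserstein facts, with a pointer to Villani and Santambrogio, so your proposal supplies a proof where the paper only cites one, and it is correct.

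\textbf{Your route.} The comonotone coupling $(Q_\mu(U),Q_\nu(U))$ gives the upper bound. The lower bound comes from three steps:
\begin{enumerate}
\item submodularity of $h(x-y)$ for convex $h$;
\item Birkhoff plus uncrossing for uniform $n$-point measures, which makes the sorted matching optimal;
\item approximation combined with the triangle inequality for $W_p$.
\end{enumerate}

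\textbf{The cited route.} The standard proof in the references (for instance Santambrogio's one-dimensional chapter) is essentially your alternative. One takes an optimal plan, which exists by tightness and lower semicontinuity. Cyclical monotonicity of its support, together with strict submodularity of the cost, forces the support to be monotone, so the plan is the comonotone one.

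\textbf{What each buys.} The cited route needs the existence and cyclical-monotonicity theorems. In exchange, for $p>1$ it also gives uniqueness of the optimal plan, which is the ingredient behind the uniqueness clause in Lemma~\ref{lem:quantile_geodesic}. Your discretization route is more elementary and treats every $p\ge1$ at once. It shows only that the monotone coupling is optimal, not that it is the unique optimizer. It also relies on the (standard, gluing-based) triangle inequality, which is not among the tools collected in Appendix~B.

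\textbf{The case $p=1$ in your alternative.} Your limit-in-$p$ argument can fail, because $\mu,\nu\in\cP_1$ need not have any moment beyond the first, so $W_p$ may be infinite for every $p>1$. Perturbing the cost to $|t|+\varepsilon\sqrt{1+t^2}$ avoids this. That cost is strictly convex with linear growth, so it stays finite on $\cP_1$, and one can let $\varepsilon\to0$.

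\textbf{A repair in the step you flagged.} With literal midpoints $m_k=(k-\tfrac12)/n$, the pointwise bound $|Q_{\mu_n}|\le|Q_\mu|$ fails on the inner half of each tail cell. For example, in a left-tail cell with $u>m_k$ one has $Q_\mu(m_k)\le Q_\mu(u)\le 0$. Two fixes work.
\begin{itemize}
\item Evaluate $Q_\mu$ at the point of each closed cell nearest $u=\tfrac12$. This keeps the values monotone and gives $|Q_{\mu_n}|\le\max\{|Q_\mu|,|Q_\mu(\tfrac12)|\}$, so dominated convergence applies.
\item Keep the midpoints and use $\tfrac1n|Q_\mu(m_k)|^p\le 2\int_{(k-1)/n}^{k/n}|Q_\mu(u)|^p\,du$. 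This holds in every cell by monotonicity: integrate over the lower half if $Q_\mu(m_k)\le0$ and over the upper half otherwise. It gives uniform integrability of $|Q_{\mu_n}|^p$, and Vitali's theorem then applies.
\end{itemize}
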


\begin{theorem}[Kantorovich-Rubinstein duality for $W_1$]\label{thm:KR}
Let $\mu,\nu\in\cP_1(\mathsf X)$.
Then
\[
W_1(\mu,\nu)=\sup\left\{\int f\,d\mu-\int f\,d\nu:\ f:\mathsf X\to\R,\ \|f\|_{\mathrm{Lip}}\le 1\right\},
\]
where $\|f\|_{\mathrm{Lip}}:=\sup_{x\neq y}\frac{|f(x)-f(y)|}{d(x,y)}$.
\end{theorem}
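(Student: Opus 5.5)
The statement to prove is the Kantorovich--Rubinstein duality for $W_1$ on a Polish space (Theorem~\ref{thm:KR}). The plan is to prove the easy inequality directly and obtain the reverse inequality from general Kantorovich duality for the cost $c=d$, reducing the dual potentials to a single $1$-Lipschitz function by $c$-concavity.

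\emph{Easy direction.} Let $f$ be $1$-Lipschitz and take any coupling $\pi\in\Gamma(\mu,\nu)$. Since $\mu,\nu\in\cP_1$ and $|f(x)|\le |f(x_0)|+d(x,x_0)$, the function $f$ is integrable under both marginals. Then
\[
\int f\,d\mu-\int f\,d\nu=\int \big(f(x)-f(y)\big)\,\pi(dx,dy)\le \int d(x,y)\,\pi(dx,dy).
\]
Taking the infimum over $\pi$ and the supremum over $f$ gives ``$\ge$'' in the theorem, in the sense $W_1\ge\sup$.

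\emph{Hard direction.} I would invoke the general Kantorovich duality for a lower semicontinuous cost bounded below on a Polish space, as in \citet{Villani2009}:
\[
W_1(\mu,\nu)=\sup\Big\{\int\varphi\,d\mu+\int\psi\,d\nu:\ \varphi(x)+\psi(y)\le d(x,y)\Big\}.
\]
Here the supremum runs over bounded continuous $\varphi,\psi$ (or integrable ones). This general duality is the main obstacle. If it cannot be quoted, I would prove it in two stages. First, for compactly supported or finitely supported measures, use finite-dimensional LP duality, or a minimax argument using tightness of $\Gamma(\mu,\nu)$ (Prokhorov) and lower semicontinuity of $\pi\mapsto\int d\,d\pi$. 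Second, extend to general $\mu,\nu\in\cP_1$ by approximating with $W_1$-convergent discrete measures. Both sides are stable under this approximation: the primal side by the triangle inequality for $W_1$, and the dual side because the Lipschitz class is uniformly controlled by first moments.

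\emph{Reduction to a single Lipschitz function.} Given an admissible pair $(\varphi,\psi)$, replace $\varphi$ by its $c$-transform
\[
\tilde\varphi(x):=\inf_y\big(d(x,y)-\psi(y)\big).
\]
This satisfies $\tilde\varphi\ge\varphi$ and is an infimum of $1$-Lipschitz functions, so it is $1$-Lipschitz where finite. Next, replace $\psi$ by $\tilde\psi(y):=\inf_x\big(d(x,y)-\tilde\varphi(x)\big)$. Since $\tilde\varphi$ is $1$-Lipschitz, $d(x,y)-\tilde\varphi(x)\ge -\tilde\varphi(y)$, with equality at $x=y$; hence $\tilde\psi=-\tilde\varphi$. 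Each replacement only increases the dual value and keeps the pair admissible, so the dual supremum is dominated by $\sup\{\int f\,d\mu-\int f\,d\nu\}$ over $1$-Lipschitz $f$ with $f=\tilde\varphi$.

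The remaining checks are routine: finiteness of $\tilde\varphi$, and integrability under first moments. Combining the two directions yields the equality.
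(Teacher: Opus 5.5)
Your proposal is correct. The easy direction via an arbitrary coupling, followed by general Kantorovich duality for the continuous cost $c=d$ and the observation that for this cost the $c$-transform is $1$-Lipschitz with $\tilde\psi=-\tilde\varphi$, is the standard textbook argument. The paper gives no proof of its own for this result: it records it in Appendix~B as a standard fact with references to Villani (2009) and Santambrogio (2015), and the proofs in those references proceed essentially as you do.
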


\begin{corollary}[Lipschitz test functions]\label{cor:lipschitz_test}
Let $\mu,\nu\in\cP_1(\mathsf X)$ and let $f:\mathsf X\to\R$ be $L$-Lipschitz.
Then
\[
\left|\int f\,d\mu-\int f\,d\nu\right|\le L\,W_1(\mu,\nu).
\]
\end{corollary}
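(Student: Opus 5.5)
The plan is to deduce the statement from Kantorovich--Rubinstein duality, Theorem~\ref{thm:KR}, by rescaling. The case $L=0$ is trivial: then $f$ is constant and both integrals coincide, because $\mu$ and $\nu$ are probability measures. For $L>0$, I set $g:=f/L$, so that $\|g\|_{\mathrm{Lip}}\le 1$.

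First I check that the integrals are finite. Fix a base point $x_0\in\mathsf X$. Then $|f(x)|\le |f(x_0)|+L\,d(x,x_0)$, and since $\mu,\nu\in\cP_1(\mathsf X)$, $f$ is integrable under both measures. Hence the difference $\int f\,d\mu-\int f\,d\nu$ is well defined.

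By Theorem~\ref{thm:KR}, $\int g\,d\mu-\int g\,d\nu\le W_1(\mu,\nu)$. The function $-g$ is also $1$-Lipschitz, so applying the same bound to $-g$ gives the reverse inequality. Together these give $\bigl|\int g\,d\mu-\int g\,d\nu\bigr|\le W_1(\mu,\nu)$, and multiplying by $L$ yields the claim.

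The only nontrivial input is the duality theorem itself, and the statement actually needs only its easy direction. If I want to avoid relying on the full duality, I will argue through Lemma~\ref{lem:any_coupling_bound}. For any coupling $(U,V)$ of $\mu$ and $\nu$,
\[
\Bigl|\int f\,d\mu-\int f\,d\nu\Bigr|=\bigl|\E[f(U)-f(V)]\bigr|\le L\,\E[d(U,V)].
\]
Taking the infimum over couplings gives $L\,W_1(\mu,\nu)$.
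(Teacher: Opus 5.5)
Your argument is correct and follows the paper's proof: treat $L=0$ separately, rescale to $f/L$ with Lipschitz seminorm at most $1$, and invoke Kantorovich--Rubinstein duality (Theorem~\ref{thm:KR}). Your use of $-g$ to obtain the absolute value and your base-point integrability check fill in steps the paper leaves implicit, and your fallback through Lemma~\ref{lem:any_coupling_bound} gives a valid self-contained proof that uses only the easy direction of duality.
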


\begin{proof}
If $L=0$, the claim is immediate.
Otherwise, $f/L$ has Lipschitz seminorm at most $1$, and Theorem~\ref{thm:KR} yields
\[
\left|\int f\,d\mu-\int f\,d\nu\right|
=
L\left|\int (f/L)\,d\mu-\int (f/L)\,d\nu\right|
\le
L\,W_1(\mu,\nu).
\]
\end{proof}

\begin{lemma}[Deterministic Lipschitz pushforward]\label{lem:lipschitz_pushforward}
Let $p\ge 1$, let $\mu,\nu\in\cP_p(\mathsf X)$, and let $g:\mathsf X\to\mathsf Y$ be $L$-Lipschitz between metric spaces $(\mathsf X,d_{\mathsf X})$ and $(\mathsf Y,d_{\mathsf Y})$.
Then the pushforward measures $g_\#\mu$ and $g_\#\nu$ belong to $\cP_p(\mathsf Y)$ and satisfy
\[
W_p(g_\#\mu,g_\#\nu)\le L\,W_p(\mu,\nu).
\]
\end{lemma}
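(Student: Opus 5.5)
The plan is to prove the pushforward bound by pushing an optimal (or near-optimal) coupling through $g$. Fix $p\ge 1$ and first verify that $g_\#\mu\in\cP_p(\mathsf Y)$. Pick a reference point $x_0\in\mathsf X$ and set $y_0:=g(x_0)$. The Lipschitz property gives $d_{\mathsf Y}(g(x),y_0)\le L\,d_{\mathsf X}(x,x_0)$. Hence
\[
\int d_{\mathsf Y}(y,y_0)^p\,(g_\#\mu)(dy)=\int d_{\mathsf Y}(g(x),y_0)^p\,\mu(dx)\le L^p\int d_{\mathsf X}(x,x_0)^p\,\mu(dx)<\infty,
\]
and the same estimate applies to $\nu$. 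This settles the membership claim, and the Wasserstein distance on $\mathsf Y$ is well defined. (Implicitly $\mathsf Y$ should be Polish, or at least $g_\#\mu$ should be a Borel measure; measurability of $g$ follows from its continuity.)

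For the inequality, take any coupling $\pi\in\Gamma(\mu,\nu)$ and realize it as a random pair $(U,V)$ with $\Law(U,V)=\pi$. Then $\Law(g(U))=g_\#\mu$ and $\Law(g(V))=g_\#\nu$, so $(g(U),g(V))$ is a coupling of the pushforwards. Lemma~\ref{lem:any_coupling_bound} then gives
\[
W_p(g_\#\mu,g_\#\nu)^p\le \E\big[d_{\mathsf Y}(g(U),g(V))^p\big]\le L^p\,\E\big[d_{\mathsf X}(U,V)^p\big]=L^p\int d_{\mathsf X}^p\,d\pi.
\]
Taking the infimum over $\pi\in\Gamma(\mu,\nu)$ and then $p$th roots yields $W_p(g_\#\mu,g_\#\nu)\le L\,W_p(\mu,\nu)$.

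This route avoids needing the existence of an optimal coupling, since we infimize over all couplings directly. The only potential obstacle is therefore measure-theoretic bookkeeping: confirming that $(g\times g)_\#\pi$ has the right marginals. This follows immediately from $(g\times g)_\#\pi\circ \mathrm{pr}_1^{-1}=g_\#(\pi\circ\mathrm{pr}_1^{-1})=g_\#\mu$, and similarly for the second marginal. I expect no genuine difficulty.
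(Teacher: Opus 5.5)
Your proof is correct and follows the paper's argument: push an arbitrary coupling $\pi\in\Gamma(\mu,\nu)$ through $g\times g$, apply the Lipschitz bound pointwise, then take the infimum over couplings and $p$th roots. Your reference-point check that $g_\#\mu,g_\#\nu\in\cP_p(\mathsf Y)$ fills in the membership claim, which the paper's proof states but never verifies.
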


\begin{proof}
Let $\pi\in\Gamma(\mu,\nu)$ and let $(U,V)\sim\pi$.
Then $(g(U),g(V))$ is a coupling of $(g_\#\mu,g_\#\nu)$ and
\[
\E\!\left[d_{\mathsf Y}(g(U),g(V))^p\right]
\le
L^p\,\E\!\left[d_{\mathsf X}(U,V)^p\right].
\]
Taking the infimum over $\pi\in\Gamma(\mu,\nu)$ and then $p$th roots yields the result.
\end{proof}

A central step in Section~\ref{sec:Wstab} uses a coupling where the mapping is Lischitz conditional on a random environment.
The following lemma isolates this idea.

\begin{lemma}[Random Lipschitz contraction]\label{lem:random_lipschitz}
Let $p\ge 1$.
Let $U_1,U_2$ be real-valued random variables with laws $\mu_1,\mu_2\in\cP_p(\R)$.
Let $V$ be an auxiliary random variable taking values in a measurable space $\mathsf V$, independent of $(U_1,U_2)$.
Let $\phi:\R\times\mathsf V\to\R$ be measurable and assume there exists a measurable function $L:\mathsf V\to[0,\infty)$ such that
\[
|\phi(u,v)-\phi(u',v)|\le L(v)\,|u-u'| \quad\text{for all }u,u'\in\R\text{ and all }v\in\mathsf V,
\]
and $\E[L(V)^p]<\infty$.
Then
\[
W_p\!\left(\Law(\phi(U_1,V)),\Law(\phi(U_2,V))\right)\le \left(\E[L(V)^p]\right)^{1/p} W_p(\mu_1,\mu_2).
\]
\end{lemma}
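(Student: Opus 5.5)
The plan is to build an explicit coupling of the two output laws and then invoke Lemma~\ref{lem:any_coupling_bound}. Let $\pi$ be any coupling of $\mu_1,\mu_2$ with $\int|u-u'|^p\,d\pi<\infty$; the natural choice is an optimal coupling, which exists for $\mu_1,\mu_2\in\cP_p(\R)$ (in one dimension, the comonotone quantile coupling $(Q_{\mu_1}(W),Q_{\mu_2}(W))$ with $W$ uniform suffices by Lemma~\ref{lem:quantile_rep}). The coupling in the statement need not be optimal, so I would not use $(U_1,U_2)$ directly. Instead I would construct $(\tilde U_1,\tilde U_2)\sim\pi$ on a product space together with $V$ independent of it and having the same law as before. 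Because $V$ is independent of $(U_1,U_2)$ in the original setup, we have $\Law(\phi(U_i,V))=\Law(\phi(\tilde U_i,V))$ for each $i$: the joint law of $(U_i,V)$ is $\mu_i\otimes\Law(V)$ in both cases, and $\phi$ is measurable. Hence $(\phi(\tilde U_1,V),\phi(\tilde U_2,V))$ is a genuine coupling of the two target laws.

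Next, apply the pointwise random Lipschitz bound. Almost surely,
\[
|\phi(\tilde U_1,V)-\phi(\tilde U_2,V)|^p\le L(V)^p|\tilde U_1-\tilde U_2|^p .
\]
Taking expectations and using independence of $V$ from $(\tilde U_1,\tilde U_2)$ (Tonelli, since everything is nonnegative), the expectation factorizes as $\E[L(V)^p]\,\E|\tilde U_1-\tilde U_2|^p$. Then Lemma~\ref{lem:any_coupling_bound} gives
\[
W_p\big(\Law(\phi(U_1,V)),\Law(\phi(U_2,V))\big)^p\le \E[L(V)^p]\int|u-u'|^p\,d\pi,
\]
and choosing $\pi$ optimal yields the claim after taking $p$th roots. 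If one prefers to avoid existence of optimizers, take an infimum over $\pi$ instead.

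The main obstacle is a subtlety in making the left side a well-defined $W_p$. We need $\Law(\phi(U_i,V))\in\cP_p(\R)$, or at least we must interpret $W_p$ via the coupling definition, where the displayed bound already shows finiteness of the cost. I would handle this in one of two ways. First, where possible, note that $|\phi(u,v)|\le|\phi(0,v)|+L(v)|u|$, so finite $p$th moments follow if $\E|\phi(0,V)|^p<\infty$. In the application of Theorem~\ref{thm:stability_HEV}, $\phi(0,v)=-1/\gamma$, which satisfies this. Second, in general, remark that the bound holds for the transport cost regardless of moment membership. The remaining points are the measurability of $\omega\mapsto L(V)$, which is assumed, and the product-space construction, which is routine.
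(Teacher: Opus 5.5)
Your proposal is correct and follows the paper's proof essentially step for step: take an optimal coupling $(\widetilde U_1,\widetilde U_2)$ of $\mu_1,\mu_2$, attach an independent copy of $V$ on an enlarged space, apply the pointwise bound $|\phi(\widetilde U_1,V)-\phi(\widetilde U_2,V)|\le L(V)\,|\widetilde U_1-\widetilde U_2|$, factor the expectation using independence, and conclude with Lemma~\ref{lem:any_coupling_bound}. Your extra remark adds a small refinement the paper leaves implicit: $\Law(\phi(U_i,V))$ need not lie in $\cP_p(\R)$ unless, for example, $\E|\phi(0,V)|^p<\infty$, a condition that holds in the paper's applications.
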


\begin{proof}
Let $(\widetilde U_1,\widetilde U_2)$ be an optimal coupling of $\mu_1$ and $\mu_2$.
Enlarge the probability space so tat $V$ is independent of $(\widetilde U_1,\widetilde U_2)$ and has the same law as in the statement.
Define $\widetilde Z_i:=\phi(\widetilde U_i,V)$ for $i\in\{1,2\}$, so that $\Law(\widetilde Z_i)=\Law(\phi(U_i,V))$.
By Lemma~\ref{lem:any_coupling_bound} and the Lipschitz property of $\phi$,
\[
W_p^p\!\left(\Law(\phi(U_1,V)),\Law(\phi(U_2,V))\right)
\le \E\!\left[|\widetilde Z_1-\widetilde Z_2|^p\right]
\le \E\!\left[L(V)^p\right]\E\!\left[|\widetilde U_1-\widetilde U_2|^p\right].
\]
Since $(\widetilde U_1,\widetilde U_2)$ is optimal, $\E[|\widetilde U_1-\widetilde U_2|^p]=W_p^p(\mu_1,\mu_2)$, and taking $p$th roots yields the claim.
\end{proof}

\begin{remark}
Lemma~\ref{lem:random_lipschitz} is applied in Section~\ref{sec:Wstab} with random transformations induced by the canonical coupling representation in Section~\ref{subsec:canonical_coupling}.
In particular, it is used with maps of the form $\phi(u,v)=\frac{uv-1}{\gamma}$ when $\gamma\neq 0$ and $\phi(u,v)=u-\log v$ when $\gamma=0$.
\end{remark}

\subsection{Entropic variational identities}\label{app:tools_entropic}

Let $P$ be a probability measure on a measurable space $(\mathsf X,\mathcal F)$.
The basic variational identity below is the Donsker-Varadhan formula; see \citet{DonskerVaradhan1975}.
For $Q\ll P$, define the relative entropy
\[
D(Q\|P):=\int \log\left(\frac{dQ}{dP}\right)\,dQ,
\]
and set $D(Q\|P)=+\infty$ if $Q$ is not absolutely continuous with respect to $P$.

\begin{theorem}[Donsker-Varadhan variational formula]\label{thm:DV}
Let $f:\mathsf X\to\R$ be measurable and assume $\int e^{f}\,dP<\infty$.
Then
\[
\log\int e^{f}\,dP
=
\sup_{Q\in\cP(\mathsf X)}\left\{\int f\,dQ - D(Q\|P)\right\}.
\]
Define the Gibbs tilt $Q^\star$ by
\[
\frac{dQ^\star}{dP}(x)=\frac{e^{f(x)}}{\int e^{f}\,dP}.
\]
If, in addition, $\int |f|e^{f}\,dP<\infty$ (equivalently, $\int |f|\,dQ^\star<\infty$ and $D(Q^\star\|P)<\infty$), then the supremum is attained at $Q^\star$.
\end{theorem}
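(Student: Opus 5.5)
The plan is to prove the Donsker--Varadhan formula (Theorem~\ref{thm:DV}) by a change-of-measure identity that writes the gap between $\log\int e^f\,dP$ and the functional $\int f\,dQ - D(Q\|P)$ as a relative entropy. Set $c:=\int e^f\,dP\in(0,\infty)$; $c>0$ holds because $e^f>0$. Define $Q^\star$ by $dQ^\star/dP=e^f/c$, a probability measure equivalent to $P$.

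\textbf{Step 1 (upper bound).} If $Q\not\ll P$, then $D(Q\|P)=+\infty$ and the functional equals $-\infty$, so we may assume $Q\ll P$. Since $Q^\star\sim P$, also $Q\ll Q^\star$, and
\[
\log\frac{dQ}{dP}=\log\frac{dQ}{dQ^\star}+f-\log c .
\]
Integrating against $Q$ gives, formally,
\[
\int f\,dQ - D(Q\|P)=\log c - D(Q\|Q^\star)\le \log c .
\]
Care is needed because $\int f\,dQ$ may be undefined or infinite. I would restrict to $Q$ with $\int f^+\,dQ<\infty$ or adopt the convention $\infty-\infty=-\infty$. The cleanest route is to use the pointwise inequality $ab\le e^a+b\log b-b$ (Young's inequality for $e^x$) with $a=f$ and $b=dQ/dP$, and integrate against $P$. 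This yields $\int f\,dQ\le c+D(Q\|P)-1$, so $f^+$ is $Q$-integrable whenever $D(Q\|P)<\infty$. Once the expression is well defined, the identity above holds in $[-\infty,\infty)$, and nonnegativity of $D(Q\|Q^\star)$ (Jensen) gives the bound. Technically this is the main obstacle: one must justify splitting $\int \log(dQ/dP)\,dQ$ into a sum of integrals when individual terms may be infinite.

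\textbf{Step 2 (supremum equals $\log c$).} Under the extra hypothesis $\int|f|e^f\,dP<\infty$, $f$ is $Q^\star$-integrable. Moreover, $D(Q^\star\|P)=\int(f-\log c)\,dQ^\star$ is finite, and the functional at $Q^\star$ equals $\log c$ exactly, so the supremum is attained.

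Without that hypothesis, I would truncate. Set $f_n:=f\wedge n$, and consider either the tilts $Q_n\propto e^{f_n}\mathbf 1_{\{f\ge -n\}}\,dP$ or simply $Q_n\propto e^{f}\mathbf 1_{\{|f|\le n\}}\,dP$. For the latter, $f$ is bounded on the support, so the functional equals $\log\int_{\{|f|\le n\}}e^f\,dP$, which tends to $\log c$ by monotone convergence. This gives the equality of the supremum in general.

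Finally, the stated equivalence with $\int|f|\,dQ^\star<\infty$ and $D(Q^\star\|P)<\infty$ is immediate, since $\int|f|e^f\,dP=c\int|f|\,dQ^\star$ and $D(Q^\star\|P)=\int f\,dQ^\star-\log c$.
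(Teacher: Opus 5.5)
The paper does not prove this statement. It appears among the technical tools in Appendix~B with a pointer to Donsker and Varadhan (1975), so your argument does not compete with an in-paper route; it supplies one, and it is correct. The change-of-measure identity $\log(dQ/dP)=\log(dQ/dQ^\star)+f-\log c$ yields $\int f\,dQ-D(Q\|P)=\log c-D(Q\|Q^\star)\le\log c$. The truncated tilts $Q_n\propto e^{f}\mathbf{1}_{\{|f|\le n\}}\,dP$ are well defined for large $n$, since $f$ is real-valued and so $P(|f|\le n)\uparrow 1$. They give values $\log\int_{\{|f|\le n\}}e^f\,dP\uparrow\log c$. Attainment at $Q^\star$ under $\int|f|e^f\,dP<\infty$ is then a direct computation. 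Compared with the bare citation, your version makes explicit a convention the paper uses silently when it invokes the formula with $f=(\psi+\eta x)/\lambda$ in the proof of Theorem~\ref{thm:entropic_duality}. Namely, the supremum is effectively over $Q$ with $D(Q\|P)<\infty$, for which $\int f^+\,dQ<\infty$ and $\int f\,dQ\in[-\infty,\infty)$ is well defined.

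Two small streamlinings are available. First, you can bypass the step you flag as the main obstacle. For $Q$ with $D(Q\|P)<\infty$, apply your inequality $ab\le e^a+b\log b-b$ with $a=f-\log c$ and $b=dQ/dP$. The right-hand side integrates to exactly $1+D(Q\|P)-1=D(Q\|P)$, so the positive part of the left-hand side is $P$-integrable, which also gives $f^+\in L^1(Q)$. The bound $\int f\,dQ-\log c\le D(Q\|P)$ then follows without splitting $\int\log(dQ/dP)\,dQ$ into possibly infinite pieces. Relatedly, integrating the inequality with $a=f$ itself, as you wrote it, presupposes that $\int f\,dQ$ is already defined, so it should be applied to $f^+$ or to $f-\log c$. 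Second, since $f^-e^f\le e^{-1}$ pointwise, $\int f^-e^f\,dP\le e^{-1}$ always holds. The attainment hypothesis therefore reduces to $\int f^+e^f\,dP<\infty$, equivalently $D(Q^\star\|P)<\infty$.
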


\begin{corollary}[Entropy penalization in minimization form]\label{cor:DV_min}
Let $g:\mathsf X\to\R$ be measurable and assume $\int e^{-g}\,dP<\infty$.
Then
\[
-\log\int e^{-g}\,dP
=
\inf_{Q\in\cP(\mathsf X)}\left\{\int g\,dQ + D(Q\|P)\right\}.
\]
Define the Gibbs tilt $Q^\star$ by
\[
\frac{dQ^\star}{dP}(x)=\frac{e^{-g(x)}}{\int e^{-g}\,dP}.
\]
If, in addition, $\int |g|e^{-g}\,dP<\infty$, then the infimum is attained at $Q^\star$.
\end{corollary}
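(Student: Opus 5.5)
The plan is to derive Corollary~\ref{cor:DV_min} from Theorem~\ref{thm:DV} by the substitution $f:=-g$. No new variational argument is needed; the work is bookkeeping of signs and integrability conventions.

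\emph{Step 1: the identity.} With $f=-g$, the hypothesis $\int e^{-g}\,dP<\infty$ is exactly the hypothesis $\int e^{f}\,dP<\infty$ of Theorem~\ref{thm:DV}. That theorem then gives
\[
\log\int e^{-g}\,dP=\sup_{Q\in\cP(\mathsf X)}\Big\{-\int g\,dQ-D(Q\|P)\Big\}.
\]
Since $\sup_Q\{-a_Q\}=-\inf_Q a_Q$, multiplying both sides by $-1$ gives
\[
-\log\int e^{-g}\,dP=\inf_{Q}\Big\{\int g\,dQ+D(Q\|P)\Big\},
\]
which is the displayed identity.

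\emph{Step 2: the minimizer.} The Gibbs tilt of Theorem~\ref{thm:DV} for $f=-g$ has density $e^{f}/\int e^{f}\,dP=e^{-g}/\int e^{-g}\,dP$ with respect to $P$, so it coincides with the $Q^\star$ defined in the corollary. The attainment hypothesis of the theorem, $\int |f|e^{f}\,dP<\infty$, reads $\int |g|e^{-g}\,dP<\infty$, which is exactly the additional assumption in the corollary. The supremum in Step~1 is therefore attained at $Q^\star$. Because the two problems differ only by a sign, the infimum is attained at the same $Q^\star$.

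\emph{Main obstacle.} The only delicate point is making sure the objectives are well defined, and under the same conventions in both problems. For a general $Q$, the integral $\int g\,dQ$ may fail to exist. I would adopt the convention implicit in Theorem~\ref{thm:DV}: if $D(Q\|P)=+\infty$, the bracket equals $+\infty$ in the minimization and $-\infty$ in the maximization. When $D(Q\|P)<\infty$, the negative part $g^-=f^+$ is $Q$-integrable, by the standard Young-type inequality $\int f^+\,dQ\le D(Q\|P)+\log\int e^{f^+}dP$, which is finite since $\int e^{f^+}dP\le 1+\int e^{f}dP$. Hence $\int g\,dQ\in(-\infty,+\infty]$ is well defined. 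Under this convention, $\int g\,dQ+D(Q\|P)$ is exactly the negative of $\int f\,dQ-D(Q\|P)$ for every $Q$, so the passage between supremum and infimum in Step~1 is legitimate termwise.
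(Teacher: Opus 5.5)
Your argument is correct and follows the paper's route. The paper records this result as an immediate corollary of Theorem~\ref{thm:DV} without a separate proof, and the substitution $f:=-g$ converts the hypothesis, the Gibbs tilt, and the attainment condition $\int|f|e^{f}\,dP<\infty$ into exactly those of the corollary; your additional check that $g^-=f^+$ is $Q$-integrable whenever $D(Q\|P)<\infty$ (so that $\int g\,dQ+D(Q\|P)$ is, under a common convention, the exact negative of the Donsker--Varadhan objective) is a careful detail the paper leaves implicit.
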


\begin{lemma}[Entropy penalization with linear constraints]\label{lem:entropy_linear_constraints}
Let $P\in\cP(\mathsf X)$ and let $g_0,g_1,\dots,g_m:\mathsf X\to\R$ be measurable.
Fix $\varepsilon>0$ and constants $c_1,\dots,c_m\in\R$.
Assume the feasible set
\[
\mathcal Q:=\left\{Q\in\cP(\mathsf X): \int g_k\,dQ=c_k,\ k=1,\dots,m\right\}
\]
is nonempty and contains some $Q_0\ll P$ with $D(Q_0\|P)<\infty$ and $\int |g_0|\,dQ_0<\infty$.
Let
\[
\mathcal C:=\left\{\left(\int g_1\,dQ,\dots,\int g_m\,dQ\right): Q\in\cP(\mathsf X),\ Q\ll P,\ D(Q\|P)<\infty,\ \int |g_0|\,dQ<\infty\right\}
\subseteq \R^m.
\]
Assume the target moment vector $c:=(c_1,\dots,c_m)$ belongs to the relative interior $\operatorname{ri}(\mathcal C)$.
Assume also that the value of the problem
\[
\inf_{Q\in\mathcal Q}
\left\{
\int g_0\,dQ + \varepsilon D(Q\|P)
\right\}
\]
is finite and attained at some $Q^\star$.
Then there exist finite Lagrange multipliers $\theta_1,\dots,\theta_m\in\R$ such that $Q^\star$ admits the exponential-tilt representation
\[
\frac{dQ^\star}{dP}(x)=
\frac{\exp\!\left(-\frac{1}{\varepsilon}\left(g_0(x)+\sum_{k=1}^m \theta_k g_k(x)\right)\right)}
{\int \exp\!\left(-\frac{1}{\varepsilon}\left(g_0+\sum_{k=1}^m \theta_k g_k\right)\right)\,dP}.
\]
\end{lemma}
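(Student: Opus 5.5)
This is the standard Lagrangian argument for entropy projection with linear constraints. Fix feasible multipliers, minimize the unconstrained Lagrangian by the Gibbs variational principle (Corollary~\ref{cor:DV_min}), and identify the minimizer with $Q^\star$ through strict convexity of relative entropy. The work is in producing finite multipliers $\theta$ for which strong duality holds, and this is exactly where the relative-interior hypothesis $c\in\operatorname{ri}(\mathcal C)$ is used.

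\emph{Step 1: the perturbation function.} For $y\in\R^m$, define
\[
V(y):=\inf\Big\{\int g_0\,dQ+\varepsilon D(Q\|P):\ Q\ll P,\ D(Q\|P)<\infty,\ \int|g_0|\,dQ<\infty,\ \int g_k\,dQ=y_k\ \forall k\Big\}.
\]
Its effective domain is $\mathcal C$. Since $Q\mapsto \int g_0\,dQ+\varepsilon D(Q\|P)$ is convex and the constraints are linear, mixing two feasible measures shows that $V$ is convex on $\mathcal C$. By hypothesis $V(c)$ is finite and attained at $Q^\star$. A convex function that is finite at a relative-interior point of its domain is proper there (it cannot take the value $-\infty$ on $\operatorname{ri}(\mathcal C)$) and has a nonempty subdifferential at that point when restricted to the affine hull. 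Choose $-\theta\in\partial V(c)$; components orthogonal to the affine hull can be added freely, since they are constant on feasible moment vectors. Then for every admissible $Q$,
\[
\int g_0\,dQ+\varepsilon D(Q\|P)+\sum_k\theta_k\Big(\int g_k\,dQ-c_k\Big)\ \ge\ V(c).
\]

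\emph{Step 2: identify the Lagrangian minimizer.} Set $g:=\big(g_0+\sum_k\theta_kg_k\big)/\varepsilon$. The inequality from Step 1 says that $\inf_Q\{\int g\,dQ+D(Q\|P)\}\ge (V(c)-\sum_k\theta_kc_k)/\varepsilon$, and $Q^\star$ attains this lower bound. This finiteness should force $\int e^{-g}\,dP<\infty$: if the integral were infinite, truncated Gibbs measures $\propto e^{-g}\mathbf 1_{\{g\ge -n\}}$ would drive the Lagrangian to $-\infty$. I expect this to be the main obstacle. The truncated measures must remain admissible, with $\int|g_0|\,dQ<\infty$, and the divergence is only in the Lagrangian, not in $V$. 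My first attempt would be to restrict to sets on which $|g_0|$ and all $|g_k|$ are bounded, so admissibility is automatic, and then pass to the limit by monotone convergence.

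\emph{Step 3: conclude.} Once $\int e^{-g}\,dP<\infty$, Corollary~\ref{cor:DV_min} yields the identity
\[
\int g\,dQ+D(Q\|P)=D(Q\|Q_g)-\log\int e^{-g}\,dP,
\qquad \frac{dQ_g}{dP}\propto e^{-g},
\]
valid for admissible $Q$. Since $Q^\star$ minimizes the left side, $D(Q^\star\|Q_g)$ must equal its infimum. That infimum is $0$, approached along truncations of $Q_g$, so $D(Q^\star\|Q_g)=0$ and hence $Q^\star=Q_g$. This is exactly the asserted exponential-tilt representation with multipliers $\theta_1,\dots,\theta_m$.
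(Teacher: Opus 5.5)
The paper states this lemma in Appendix~B as a background tool and gives no proof, so there is no argument to compare against. Your proof is correct, and it is complete once Step~2 is written out the way you suggest.

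There is one small sign slip. The Lagrangian inequality rearranges to $\int g\,dQ+D(Q\|P)\ge \ell:=\bigl(V(c)+\sum_k\theta_kc_k\bigr)/\varepsilon$, with equality at $Q^\star$; you wrote $V(c)-\sum_k\theta_kc_k$.

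Step~2 is not really an obstacle. The feature of Step~1 you rely on implicitly is that the subgradient inequality holds on all of $\mathcal C$. So the Lagrangian bound applies to every admissible $Q$ with finite $g_k$-moments, including measures that violate the constraints. Take $A_n:=\{|g_0|\le n,\ |g_1|\le n,\dots,|g_m|\le n\}$ and $dQ_n/dP:=e^{-g}\mathbf 1_{A_n}/Z_n$ with $Z_n:=\int_{A_n}e^{-g}\,dP$, for $n$ large enough that $P(A_n)>0$. Each $Q_n$ has bounded density and bounded $g_k$ on its support, so it is admissible, and $\int g\,dQ_n+D(Q_n\|P)=-\log Z_n$ exactly. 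Hence $Z_n\le e^{-\ell}$. Since $A_n\uparrow\mathsf X$, monotone convergence gives $Z:=\int e^{-g}\,dP\le e^{-\ell}<\infty$. Evaluating your Step~3 identity at $Q^\star$ then gives $D(Q^\star\|Q_g)=\ell+\log Z\le 0$, which finishes the proof without computing an infimum. Your truncation argument also works, since $Q_n=Q_g(\cdot\mid A_n)$ and $D(Q_n\|Q_g)=-\log Q_g(A_n)\to 0$.

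The closest argument in the paper is the proof of Theorem~\ref{thm:entropic_duality}. That is your setting with $m=1$, $g_0=-\psi$, $g_1(x)=x$, $\varepsilon=\lambda$ and $\eta=-\theta_1$. It runs dual-first:
\begin{itemize}
\item it assumes the partition function and its first two tilted moments are finite on an open interval $\mathcal D$, and that the tilted mean crosses $1$ inside $\mathcal D$;
\item it builds $\eta^\star$ by the intermediate value theorem and checks that the tilt is feasible;
\item it gets optimality by matching the Donsker--Varadhan weak-duality bound, and uniqueness from strict concavity.
\end{itemize}
That route yields existence, uniqueness and the dual value formula, but only under a priori exponential integrability of the tilt family.

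Your route is primal-first. It takes existence of $Q^\star$ as given and extracts the multipliers nonconstructively from $\partial V(c)$. In exchange it needs no integrability of $e^{-g}$ in advance, since that is an output of Step~2. It handles any finite $m$ and uses $c\in\operatorname{ri}(\mathcal C)$ exactly where it is needed.

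The attainment hypothesis matters only at the final step. In non-steep situations $V(c)$ can be finite at a relative-interior point without being attained. Steps~1--2 then still go through, but the resulting tilt $Q_g$ violates the constraints.
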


\begin{remark}
Lemma~\ref{lem:entropy_linear_constraints} is in fact a general constrained-entropy projection result.
Section~\ref{sec:entropic} uses a sharper one-dimensional argument tailored to the mean constraint $\E_F[X]=1$ so that the existence of a finite dual optimizer can be stated directly in terms of the tilted mean.
\end{remark}

\subsection{Background on entropic optimal transport}\label{app:tools_eot}

Let $\mu\in\cP(\mathsf X)$ and $\nu\in\cP(\mathsf Y)$ be Borel probability measures on Polish spaces and let $c:\mathsf X\times\mathsf Y\to\R\cup\{+\infty\}$ be a measurable cost.
Let $\Gamma(\mu,\nu)$ denote the set of couplings with marginals $(\mu,\nu)$.
For $\varepsilon>0$, define the entropic optimal transport problem
\[
\inf_{\pi\in\Gamma(\mu,\nu)}\left\{\int c\,d\pi + \varepsilon D(\pi\|\mu\otimes\nu)\right\}.
\]
We use only structural consequences
The relevant ingredients are the duality and the Gibbs form of optimizers.
See, e.g., \citet{Leonard2014}, \citet{PeyreCuturi2019}, \citet{GhosalNutzBernton2022}, and \citet{Nutz2022}; and see also \citet{Cuturi2013} for the computational Sinkhorn formulation that made entropy regularization central in applied optimal transport.

\begin{theorem}[Entropic optimal transport duality and Gibbs form]\label{thm:eot_duality}
Assume the primal problem is finite for some $\varepsilon>0$ and that an optimizer exists.
Then there exist measurable functions $\varphi:\mathsf X\to\R$ and $\psi:\mathsf Y\to\R$ such that an optimal coupling $\pi^\star$ admits a density with respect to $\mu\otimes\nu$ of the form
\[
\frac{d\pi^\star}{d(\mu\otimes\nu)}(x,y)
=
\exp\!\left(\frac{\varphi(x)+\psi(y)-c(x,y)}{\varepsilon}\right),
\]
with $(\varphi,\psi)$ chosen so that $\pi^\star\in\Gamma(\mu,\nu)$.
Moreover, $(\varphi,\psi)$ solve a concave dual maximization problem obtained by Fenchel duality.
\end{theorem}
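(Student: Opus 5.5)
The plan is to reduce the entropic problem to a relative-entropy projection onto the convex set $\Gamma(\mu,\nu)$. Then I would read off the product structure of the optimizer from the fact that this set is cut out by linear marginal constraints.

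\textbf{Step 1: reduction.} Write $R(dx,dy):=e^{-c(x,y)/\varepsilon}\,\mu\otimes\nu(dx,dy)$. For any $\pi\in\Gamma(\mu,\nu)$ with finite objective, $\pi\ll\mu\otimes\nu$. Formally,
\[
\int c\,d\pi+\varepsilon D(\pi\|\mu\otimes\nu)=\varepsilon\int\log\frac{d\pi}{dR}\,d\pi .
\]
This is rigorous whenever $\int c\,d\pi$ is finite, which holds at the optimizer because the value is finite. The optimizer $\pi^\star$ is therefore an $I$-projection of the (possibly non-normalized) measure $R$ onto $\Gamma(\mu,\nu)$. To control integrability I would first normalize $R$, assuming $c$ is bounded below or that $\int e^{-c/\varepsilon}\,d\mu\otimes\nu<\infty$; such a condition is implicit in finiteness of the primal value.

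\textbf{Step 2: Gibbs form via first variation.} Take $h:=d\pi^\star/d(\mu\otimes\nu)$. For any $\pi\in\Gamma(\mu,\nu)$ with finite objective, the mixture $\pi_t=(1-t)\pi^\star+t\pi$ stays feasible. Convexity of $t\mapsto D(\pi_t\|\cdot)$ and optimality at $t=0$ give the variational inequality
\[
\int \bigl(c+\varepsilon\log h\bigr)\,d(\pi-\pi^\star)\ge 0 .
\]
This is the standard argument behind Csiszár's characterization of $I$-projections. I would first show that $h>0$ $\mu\otimes\nu$-a.e. Otherwise one can perturb toward $\mu\otimes\nu$ itself, which is feasible and has finite cost, and the derivative of $x\log x$ at $0$ is $-\infty$, contradicting optimality. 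With positivity in hand, the function $c+\varepsilon\log h$ integrates to a constant against every coupling. A function with this property lies in the closure of the sum space $\{\varphi(x)+\psi(y)\}$, by Rüschendorf–Thomsen / Borwein–Lewis–Nussbaum. Hence $\varepsilon\log h=\varphi(x)+\psi(y)-c(x,y)$ a.e., with measurable $\varphi,\psi$. The marginal conditions on $\pi^\star$ are exactly the Schrödinger system, so $(\varphi,\psi)$ is chosen such that $\pi^\star\in\Gamma(\mu,\nu)$.

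\textbf{Step 3: duality.} Define the concave dual functional
\[
\mathcal D(\varphi,\psi):=\int\varphi\,d\mu+\int\psi\,d\nu-\varepsilon\int e^{(\varphi+\psi-c)/\varepsilon}\,d\mu\otimes\nu+\varepsilon .
\]
Weak duality $\mathcal D\le$ primal follows pointwise from Young's inequality $ab\le a\log a-a+e^{b}$, which is essentially the Donsker–Varadhan bound of Theorem~\ref{thm:DV}. Plugging in the potentials from Step 2 makes the exponential term equal to $\varepsilon$ and turns the linear terms into $\int(c+\varepsilon\log h)\,d\pi^\star$, so equality holds and $(\varphi,\psi)$ is a dual maximizer.

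\textbf{Main obstacle.} I expect the hardest part to be the closedness step in Step 2: passing from ``integrates to a constant against all couplings'' to a genuine decomposition $\varphi\oplus\psi$, with $\varphi\in L^1(\mu)$ and $\psi\in L^1(\nu)$. I would handle it by truncation: first prove the decomposition when $c$ is bounded, where the sum space is closed in $L^1(\mu\otimes\nu)$. Then pass to general $c$ via the a.e. limits argument of Nutz's notes, checking integrability of the potentials only $\pi^\star$-a.s.
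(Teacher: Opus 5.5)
The paper gives no proof of this statement. It is quoted in Appendix~B as background, with pointers to L\'eonard, Peyr\'e--Cuturi, Ghosal--Nutz--Bernton and Nutz. Your outline follows the standard route of those references: a Csisz\'ar-type first variation, closedness of the sum space, and Young/Donsker--Varadhan for duality.

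\textbf{Main gap: positivity in Step~2.} You perturb toward $\mu\otimes\nu$ and assert it has finite cost. Nothing gives $\int c\,d(\mu\otimes\nu)<\infty$; a finite primal value only says that \emph{some} coupling has finite cost. When $c=+\infty$ on a set of positive $\mu\otimes\nu$-measure, $h$ must vanish there, so ``$h>0$ a.e.'' is false. Worse, $h$ can vanish where $c<\infty$, so the theorem as written fails. Take $\mathsf X=\mathsf Y=\{1,2\}$, $\mu=\nu$ uniform, $c(2,1)=+\infty$ and $c=0$ elsewhere; this $c$ satisfies your lower-bound normalization.
\begin{itemize}
\item The only finite-cost coupling is $\pi^\star=\tfrac12(\delta_{(1,1)}+\delta_{(2,2)})$, and the primal value is $\varepsilon\log 2$.
\item Yet $\pi^\star(\{(1,2)\})=0$, while $e^{(\varphi(1)+\psi(2)-c(1,2))/\varepsilon}>0$ for any real $\varphi,\psi$. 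This is the total-support obstruction from matrix scaling.
\end{itemize}
Without positivity, your Hahn--Banach/closedness step can only identify $c+\varepsilon\log h$ with $\varphi\oplus\psi$ on $\{h>0\}$, not $\mu\otimes\nu$-a.e. So you need an extra hypothesis.
\begin{itemize}
\item If $c$ is finite with $\int c\,d(\mu\otimes\nu)<\infty$, your perturbation argument works as written.
\item If $c$ is finite but not $\mu\otimes\nu$-integrable, you need a different argument. One route: truncate to $c_n=c\wedge n$ and use the Pythagorean inequality $\varepsilon D(\pi^\star\|\pi_n)\le V-V_n\to 0$, where $V_n\uparrow V$ are the optimal values. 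This gives $h_n\to h$ a.e. along a subsequence. Then use a Fubini four-point argument. If $\varphi_n\oplus\psi_n\to-\infty$ on $\{h=0\}$ with positive measure, the identity $\varphi_n(x')+\psi_n(y')=[\varphi_n(x')+\psi_n(y)]+[\varphi_n(x)+\psi_n(y')]-[\varphi_n(x)+\psi_n(y)]$ forces $+\infty$ on a product set of pairs $(x',y')$ of positive measure. That is impossible, because the limit is finite on $\{h>0\}$ and $-\infty$ on $\{h=0\}$.
\end{itemize}

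\textbf{Second gap: integrability in Step~3.} The identity $\int\varphi\,d\mu+\int\psi\,d\nu=\int(\varphi\oplus\psi)\,d\pi^\star$ needs $\varphi\in L^1(\mu)$ and $\psi\in L^1(\nu)$ separately. The a.e.-limit closedness lemma you plan to use for general $c$ only produces measurable potentials. Moreover, $\varphi\oplus\psi\in L^1(\pi^\star)$ does not give separate integrability; ``integrability $\pi^\star$-a.s.'' is not what Step~3 uses. Under $\int c\,d(\mu\otimes\nu)<\infty$ this is repairable:
\begin{itemize}
\item The one-sided variation toward $\mu\otimes\nu$ gives $\int\log h\,d(\mu\otimes\nu)>-\infty$.
\item Hence $\varphi\oplus\psi=c+\varepsilon\log h\in L^1(\mu\otimes\nu)$, and Fubini gives $\varphi\in L^1(\mu)$ and $\psi\in L^1(\nu)$.
\end{itemize}
Otherwise the dual-attainment claim has to be read in an extended sense.

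\textbf{Two minor points.}
\begin{itemize}
\item Two-sided perturbations $\pi^\star(1\pm tk)$ give $\int(c+\varepsilon\log h)\,d(\pi-\pi^\star)=0$ only for couplings with bounded density relative to $\pi^\star$, not for every coupling. That annihilator condition is exactly what Hahn--Banach needs, and it does not require positivity.
\item A lower bound on $c$, or $\int e^{-c/\varepsilon}\,d(\mu\otimes\nu)<\infty$, is an added normalization, not a consequence of a finite primal value. For example, take $c(x,y)=a(x)$ with $a\in L^1(\mu)$ but $e^{-a/\varepsilon}\notin L^1(\mu)$.
\end{itemize}
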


\begin{remark}
Note that Theorem~\ref{thm:eot_duality} is stated at a level sufficient for our purposes.
Section~\ref{sec:entropic} uses the same convex-duality logic in a reduced setting where the penalty is relative entropy with respect to a baseline $F_0$ and the decision variable is the marginal distribution $F$.
In that sense, the planner problem can be read as a one-marginal entropy projection derived from the broader Schr\"odinger and entropic-transport framework.
\end{remark}

\begingroup
\footnotesize
\linespread{0.96}\selectfont
\bibliographystyle{ecta}
\bibliography{references}

@article{Mangin2025,
  author  = {Mangin, Sephorah},
  title   = {Extreme Value Theory with Heterogeneous Agents},
  journal = {Econometrica},
  year    = {2026},
  note    = {Forthcoming}
}

@article{BobbiaDombryVarron2019,
  author  = {Bobbia, Benjamin and Dombry, Cl{\'e}ment and Varron, Davit},
  title   = {The Coupling Method in Extreme Value Theory},
  journal = {Bernoulli},
  year    = {2021},
  volume  = {27},
  number  = {3},
  pages   = {1824--1850},
  doi     = {10.3150/20-BEJ1293}
}

@misc{Nutz2022,
  author       = {Nutz, Marcel},
  title        = {Introduction to Entropic Optimal Transport},
  howpublished = {Lecture notes},
  year         = {2022},
  note         = {Available online}
}

@article{Leonard2014,
  author  = {L{\'e}onard, Christian},
  title   = {A Survey of the Schr{\"o}dinger Problem and Some of Its Connections with Optimal Transport},
  journal = {Discrete \& Continuous Dynamical Systems - A},
  year    = {2014},
  volume  = {34},
  number  = {4},
  pages   = {1533--1574},
  doi     = {10.3934/dcds.2014.34.1533}
}

@article{PeyreCuturi2019,
  author  = {Peyr{\'e}, Gabriel and Cuturi, Marco},
  title   = {Computational Optimal Transport},
  journal = {Foundations and Trends in Machine Learning},
  year    = {2019},
  volume  = {11},
  number  = {5-6},
  pages   = {355--607},
  doi     = {10.1561/2200000073}
}

@book{deHaanFerreira2006,
  author    = {de Haan, Laurens and Ferreira, Ana},
  title     = {Extreme Value Theory: An Introduction},
  publisher = {Springer},
  address   = {New York},
  year      = {2006}
}

@book{Resnick2008,
  author    = {Resnick, Sidney I.},
  title     = {Extreme Values, Regular Variation and Point Processes},
  publisher = {Springer},
  address   = {New York},
  year      = {2008},
  edition   = {2}
}

@book{Villani2009,
  author    = {Villani, C{\'e}dric},
  title     = {Optimal Transport: Old and New},
  publisher = {Springer},
  year      = {2009},
  series    = {Grundlehren der mathematischen Wissenschaften},
  volume    = {338}
}

@book{Galichon2016,
  author    = {Galichon, Alfred},
  title     = {Optimal Transport Methods in Economics},
  publisher = {Princeton University Press},
  address   = {Princeton, NJ},
  year      = {2016}
}

@article{Granovetter1973,
  author  = {Granovetter, Mark S.},
  title   = {The Strength of Weak Ties},
  journal = {American Journal of Sociology},
  year    = {1973},
  volume  = {78},
  number  = {6},
  pages   = {1360--1380}
}

@article{Montgomery1991,
  author  = {Montgomery, James D.},
  title   = {Social Networks and Labor-Market Outcomes: Toward an Economic Analysis},
  journal = {American Economic Review},
  year    = {1991},
  volume  = {81},
  number  = {5},
  pages   = {1408--1418}
}

@article{Topa2001,
  author  = {Topa, Giorgio},
  title   = {Social Interactions, Local Spillovers and Unemployment},
  journal = {Review of Economic Studies},
  year    = {2001},
  volume  = {68},
  number  = {2},
  pages   = {261--295}
}

@article{CalvoArmengolJackson2004,
  author  = {Calv{\'o}-Armengol, Antoni and Jackson, Matthew O.},
  title   = {The Effects of Social Networks on Employment and Inequality},
  journal = {American Economic Review},
  year    = {2004},
  volume  = {94},
  number  = {3},
  pages   = {426--454}
}

@article{IoannidesLoury2004,
  author  = {Ioannides, Yannis M. and Loury, Linda D.},
  title   = {Job Information Networks, Neighborhood Effects, and Inequality},
  journal = {Journal of Economic Literature},
  year    = {2004},
  volume  = {42},
  number  = {4},
  pages   = {1056--1093}
}

@article{BuhaiVanderLeij2023,
  author  = {Buhai, I. Sebastian and van der Leij, Marco J.},
  title   = {A Social Network Analysis of Occupational Segregation},
  journal = {Journal of Economic Dynamics and Control},
  year    = {2023},
  volume  = {147},
  pages   = {104593},
  doi     = {10.1016/j.jedc.2022.104593}
}

@article{BolteImmorlicaJackson2020,
  author  = {Bolte, Lukas and Immorlica, Nicole and Jackson, Matthew O.},
  title   = {The Role of Referrals in Immobility, Inequality, and Inefficiency in Labor Markets},
  journal = {Journal of Labor Economics},
  year    = {2024},
  note    = {Advance online publication / forthcoming in Journal of Labor Economics},
  doi     = {10.1086/733048}
}

@article{RothschildStiglitz1970,
  author  = {Rothschild, Michael and Stiglitz, Joseph E.},
  title   = {Increasing Risk: I. A Definition},
  journal = {Journal of Economic Theory},
  year    = {1970},
  volume  = {2},
  number  = {3},
  pages   = {225--243},
  doi     = {10.1016/0022-0531(70)90038-4}
}

@book{ShakedShanthikumar2007,
  author    = {Shaked, Moshe and Shanthikumar, J. George},
  title     = {Stochastic Orders},
  publisher = {Springer},
  address   = {New York},
  year      = {2007},
  series    = {Springer Series in Statistics},
  doi       = {10.1007/978-0-387-34675-5}
}

@article{Pickands1975,
  author  = {Pickands, James},
  title   = {Statistical Inference Using Extreme Order Statistics},
  journal = {The Annals of Statistics},
  year    = {1975},
  volume  = {3},
  number  = {1},
  pages   = {119--131},
  doi     = {10.1214/aos/1176343003}
}

@article{BalkemaDeHaan1974,
  author  = {Balkema, August A. and de Haan, Laurens},
  title   = {Residual Life Time at Great Age},
  journal = {The Annals of Probability},
  year    = {1974},
  volume  = {2},
  number  = {5},
  pages   = {792--804},
  doi     = {10.1214/aop/1176996548}
}

@article{DonskerVaradhan1975,
  author  = {Donsker, Monroe D. and Varadhan, S. R. S.},
  title   = {Asymptotic Evaluation of Certain Markov Process Expectations for Large Time, I},
  journal = {Communications on Pure and Applied Mathematics},
  year    = {1975},
  volume  = {28},
  number  = {1},
  pages   = {1--47},
  doi     = {10.1002/cpa.3160280102}
}

@book{ArnoldBalakrishnanNagaraja1998,
  author    = {Arnold, Barry C. and Balakrishnan, N. and Nagaraja, H. N.},
  title     = {Records},
  publisher = {Wiley},
  address   = {New York},
  year      = {1998}
}

@article{BarndorffNielsen1964,
  author  = {Barndorff-Nielsen, O.},
  title   = {On the Limit Distribution of the Maximum of a Random Number of Independent Random Variables},
  journal = {Acta Mathematica Academiae Scientiarum Hungaricae},
  year    = {1964},
  volume  = {15},
  pages   = {399--403},
  doi     = {10.1007/BF01897148}
}

@article{Galambos1973,
  author  = {Galambos, J{\'a}nos},
  title   = {The Distribution of the Maximum of a Random Number of Random Variables with Applications},
  journal = {Journal of Applied Probability},
  year    = {1973},
  volume  = {10},
  number  = {1},
  pages   = {122--129},
  doi     = {10.2307/3212500}
}

@article{SilvestrovTeugels1998,
  author  = {Silvestrov, Dmitrii and Teugels, Jozef L.},
  title   = {Limit Theorems for Extremes with Random Sample Size},
  journal = {Advances in Applied Probability},
  year    = {1998},
  volume  = {30},
  number  = {3},
  pages   = {777--806},
  doi     = {10.1239/aap/1035228129}
}

@article{RachevResnick1991,
  author  = {Rachev, Svetlozar T. and Resnick, Sidney I.},
  title   = {Max-Geometric Infinite Divisibility and Stability},
  journal = {Stochastic Models},
  year    = {1991},
  volume  = {7},
  number  = {2},
  pages   = {191--218},
  doi     = {10.1080/15326349108807184}
}

@article{Megyesi2002,
  author  = {Megyesi, Zolt{\'a}n},
  title   = {Domains of Geometric Partial Attraction of Max-Semistable Laws: Structure, Merge and Almost Sure Limit Theorems},
  journal = {Journal of Theoretical Probability},
  year    = {2002},
  volume  = {15},
  number  = {4},
  pages   = {973--1005},
  doi     = {10.1023/A:1020692805345}
}

@article{GhosalNutzBernton2022,
  author  = {Ghosal, Promit and Nutz, Marcel and Bernton, Espen},
  title   = {Stability of Entropic Optimal Transport and Schr{\"o}dinger Bridges},
  journal = {Journal of Functional Analysis},
  year    = {2022},
  volume  = {283},
  number  = {9},
  pages   = {109622},
  doi     = {10.1016/j.jfa.2022.109622}
}

@inproceedings{Cuturi2013,
  author    = {Cuturi, Marco},
  title     = {Sinkhorn Distances: Lightspeed Computation of Optimal Transport},
  booktitle = {Advances in Neural Information Processing Systems 26},
  year      = {2013},
  pages     = {2292--2300}
}

@article{EinmahlHe2023,
  author  = {Einmahl, John H. J. and He, Yi},
  title   = {Extreme Value Estimation for Heterogeneous Data},
  journal = {Journal of Business \& Economic Statistics},
  year    = {2023},
  volume  = {41},
  number  = {1},
  pages   = {255--269},
  doi     = {10.1080/07350015.2021.2008408}
}

@article{FournierGuillin2015,
  author  = {Fournier, Nicolas and Guillin, Arnaud},
  title   = {On the Rate of Convergence in Wasserstein Distance of the Empirical Measure},
  journal = {Probability Theory and Related Fields},
  year    = {2015},
  volume  = {162},
  number  = {3-4},
  pages   = {707--738},
  doi     = {10.1007/s00440-014-0583-7}
}

@misc{BeckerMangin2023,
  author       = {Becker, Louis and Mangin, Sephorah},
  title        = {The Effect of Search Frictions on Extreme Outcomes},
  year         = {2023},
  howpublished = {Working paper},
  note         = {Originally circulated in 2023}
}

@misc{MansanarezPolySwan2025,
  author       = {Mansanarez, Paul and Poly, Guillaume and Swan, Yvik},
  title        = {Stein's Method for Fr{\'e}chet Approximation: A Regularly Varying Functions Approach},
  year         = {2025},
  howpublished = {arXiv preprint arXiv:2510.14016},
  note         = {Preprint}
}

@book{Santambrogio2015,
  author    = {Santambrogio, Filippo},
  title     = {Optimal Transport for Applied Mathematicians: Calculus of Variations, PDEs, and Modeling},
  publisher = {Birkh{"a}user},
  address   = {Basel},
  year      = {2015},
  series    = {Progress in Nonlinear Differential Equations and Their Applications},
  volume    = {87},
  doi       = {10.1007/978-3-319-20828-2}
}
\endgroup

\end{document}